\documentclass[11pt]{article}

\usepackage[margin=1in]{geometry}      
\usepackage{setspace}                   
\usepackage{amsmath}                    
\usepackage{amssymb}                    
\usepackage{amsthm}                     
\usepackage[ruled,vlined,linesnumbered]{algorithm2e} 
\usepackage{tikz}                       
\usepackage{graphicx}                   
\usepackage{cite}                       
\usepackage{float}
\usepackage{hyperref}                   
\SetArgSty{textnormal}

\definecolor{cutA}{RGB}{214,39,40}
\definecolor{cutB}{RGB}{31,119,180}
\definecolor{cutC}{RGB}{44,160,44}
\definecolor{cutD}{RGB}{23,190,207}
\definecolor{cutE}{RGB}{148,103,189}
\definecolor{cutF}{RGB}{140,86,75}

\newcommand{\br}[1]{\mathopen{}\left( #1 \right)}
\newcommand{\brc}[1]{\mathopen{}\left\{ #1 \right\}}
\newcommand{\spr}[1]{\mathopen{}\left| #1 \right|}
\newcommand{\fl}[1]{\mathopen{}\left\lfloor #1 \right\rfloor}

\newcommand{\OPT}{\text{OPT}}
\newcommand{\COST}{{c}}
\newcommand{\dist}{\text{dist}}

\newcommand{\cC}{\mathcal{C}}
\newcommand{\cD}{\mathcal{D}}

\newcommand{\cF}{\mathcal{F}}

\newcommand{\cH}{\mathcal{H}}

\newcommand{\cL}{\mathcal{L}}

\newcommand{\cR}{\mathcal{R}}
\newcommand{\cS}{\mathcal{S}}
\newcommand{\cT}{\mathcal{T}}

\newcommand{\bigo}{\mathcal{O}}

\newtheorem{theorem}{Theorem}[section]
\newtheorem{lemma}[theorem]{Lemma}
\newtheorem{corollary}[theorem]{Corollary}
\theoremstyle{definition}
\newtheorem{observation}[theorem]{Observation}

\title{Hierarchical $\cF$-Clustering: Approximation and Hardness of Clustering into Trees and Bounded Diameter Graphs}
\author{Michał Szyfelbein \\ Gdańsk University of Technology \and Dariusz Dereniowski \\ Gdańsk University of Technology}
\date{}

\begin{document}

\maketitle

\begin{abstract}
    Consider the following variation on the classical \textsc{Hierarchical Clustering} problem: Usually, while building a hierarchical clustering, one recursively partitions the data until each cluster becomes a singleton. In our setup, we relax the halting condition of the recursive process to stop whenever the remaining cluster is a graph belonging to a given class $\mathcal{F}$. We call this problem \textsc{Hierarchical $\mathcal{F}$-Clustering} and we measure the quality of any solution using adapted Dasgupta's clustering objective. We study two natural choices of $\mathcal{F}$: trees, denoted by $\mathcal{T}$ and graphs of diameter bounded by~$d$, denoted by $\mathcal{D}_d$.

    Hereby, we present the first polynomial time $\mathcal{O}(\log n\cdot\log\log n)$ and $\mathcal{O}(\log n)$-approximation algorithms for clustering into $\mathcal{T}$- and $\mathcal{D}_d$-clusters respectively.
    Our main technical contribution is a framework for approximating such problems based on linear programming. In fact, we characterize graphs classes $\mathcal{F}$ for which our approach can be applied and show that it includes both trees and bounded diameter graphs. However, our ideas are not limited to them and might be useful for other structures as well. Broadly speaking, our framework applies whenever the corresponding flat clustering problem, which we call \textsc{$p_{\mathcal{F}}$-Partitioning}, admits a natural ILP formulation together with a rounding procedure with provable approximation guarantees. Intuitively, given a set of vertices called terminals, the problem is to find an edge set whose removal results in satisfying certain vertex-dependent structural predicate for each terminal. For example, excluding all cycles going through terminals or ensuring that the diameter of all clusters with respect to the terminals is not too large. We then use these ingredients to build clustering trees with the aforementioned approximation guarantees. To complement these results, we show that both \textsc{Hierarchical $\mathcal{T}$-Clustering} and \textsc{Hierarchical $\mathcal{D}_d$-Clustering} cannot be approximated within any constant factor under the \textsc{Small Set Expansion Hypothesis}.
\end{abstract}

\noindent
\textbf{Keywords:} Hierarchical $\cF$-Clustering, Feedback Edge Set, Min-Multicut, $\rho$-Separator, Approximation Algorithms, Linear Programming.

\bigskip

\section{Introduction}
Hierarchical clustering is a way to recursively partition a dataset into clusters. The data is usually represented by a weighted, connected graph $G = (V, E, c)$, where $c\colon E\to \mathbb{N}$ is the \emph{similarity} measure between the points. The goal is to find a rooted tree called \textit{clustering tree} $T$, in which every node $u\in V\br{T}$ represents a connected component $G_u\subseteq G$ called a \textit{cluster}. Moreover, every non-leaf node $u$ of $T$ corresponds to a subset of edges $S_u$ such that for every child node $v$ of $u$, $G_v\in G_u\setminus S_u$. For each leaf node $l\in V\br{T}$ it is required that $G_l$ is a singleton, i.~e., $\spr{V\br{G_l}}=1$. Then, the \textit{Dasgupta's clustering objective} is defined as $\COST_G\br{T} = \sum_{u\in V\br{T}} c(S_u) \cdot |V(G_u)|$\footnote{Note that the usual way of defining the Dasgupta's clustering objective is $\COST_G\br{T} = \sum_{uv\in E} c(uv) \cdot |\ell_T\br{u,v}|$, where $\ell_T\br{u,v}$ denotes the smallest subtree containing both $u$ and $v$. It is easy to see, that our formulation is equivalent and we use it for the sake of convenience.}. The problem of finding a clustering tree which minimizes the above cost is known as the \textsc{Hierarchical Clustering} problem and is NP-hard~\cite{ACostFunctionForSimilarityBasedHierarchicalClustering}. Intuitively, the Dasgupta's clustering objective `punishes' cutting edges with high similarity early on in the clustering tree, since the larger the component in which the edge is cut, the higher the contribution of this edge to the cost.

The main advantage of the hierarchical clustering over various kinds of flat clustering is that it provides a decomposition of the data which is not reliant on the granularity of subdivision one wishes to acquire. Since the clusters are refined until they become singletons, the hierarchy encodes various granularity levels into it. However, it is often implicitly assumed that eventually all of them need to become singletons. In this work we ask the following question: What would happen if we relax the halting condition of the recursive partitioning process and allow it to stop whenever the remaining cluster is a graph of some class $\cF$? 
We will assume, that the graph class $\cF$ is a part of the input and represents some kind of structural property which is deemed `simple' enough so that there is no need to further refine the clusters\footnote{Here, we assume that $\cF$ is non-empty and closed under taking subgraphs, which ensures that any subcluster of a cluster in $\cF$ also belongs to $\cF$. Moreover, for the sake of practicality it is reasonable to assume that the class $\cF$ is recognizable in polynomial time. Otherwise, the task of constructing any clustering tree at all becomes intractable and the pursuit of approximation algorithms becomes futile as well.}.
Therefore we only require that for each leaf node $l\in V\br{T}$, $G_l\in \cF$. We call this problem \textsc{Hierarchical $\cF$-Clustering} (HC-$\cF$). The goal is again to minimize $\COST_G\br{T} = \sum_{u\in V\br{T}} c(S_u) \cdot |V(G_u)|$. For a visual example of a clustering tree for \textsc{HC-$\cT$}, where $\cT$ denotes the class of trees see Figure~\ref{fig:hac-intro-example}.
We note that the objective value for HC-$\cF$ may be quite different from that of the classical hierarchical clustering.
If the input graph $G$ belongs to $\cF$, then $\COST_G\br{T}=0$ for a single-node $T$ with root representing $G$.
On the other hand, $\COST_G\br{T}=\Omega\br{n\cdot \log n}$ for any $T$ whose leaves are required to be singletons (assuming uniform costs).
\begin{figure}[H]
    \centering
    \begin{minipage}[t]{0.46\textwidth}
        \centering
        \begin{tikzpicture}[scale=0.95, every node/.style={circle,draw,fill=white,inner sep=1.5pt,font=\scriptsize,preaction={fill=black,opacity=0.18,transform canvas={xshift=0.8pt,yshift=-0.8pt}}}]
            \node (a1) at (-2.6,1.8) {$a$};
            \node (a2) at (-3.6,0.8) {$b$};
            \node (a3) at (-3.0,-0.6) {$c$};
            \node (a4) at (-1.8,-0.6) {$d$};
            \draw[cutB,very thick] (a1)--(a2);
            \draw (a2)--(a3);
            \draw (a3)--(a4);
            \draw[cutB,very thick] (a4)--(a1);
            \draw (a1)--(a3);

            \node (b1) at (-0.2,1.4) {$e$};
            \node (b2) at (0.7,0.9) {$f$};
            \node (b3) at (0.3,-1.1) {$g$};
            \node (b4) at (-1.0,0.0) {$h$};
            \draw[cutD,very thick] (b1)--(b2);
            \draw (b2)--(b3);
            \draw (b3)--(b4);
            \draw (b4)--(b1);

            \node (c1) at (2.8,1.8) {$i$};
            \node (c2) at (3.8,0.8) {$j$};
            \node (c3) at (3.2,-0.6) {$k$};
            \node (c4) at (2.0,-0.6) {$l$};
            \draw[cutE,very thick] (c1)--(c2);
            \draw[cutF,very thick] (c2)--(c3);
            \draw (c3)--(c4);
            \draw[cutE,very thick] (c4)--(c1);
            \draw (c2)--(c4);

            \node (x1) at (-1.5,1.1) {$m$};
            \node (x2) at (1.95,1.95) {$n$};
            \draw[cutB,very thick] (x1)--(a1);
            \draw (x1)--(a4);
            \draw[cutA,very thick] (x1)--(b1);
            \draw[cutC,very thick] (x2)--(b2);
            \draw[cutC,very thick] (x2)--(b3);
            \draw (x2)--(c1);
            \draw[cutE,very thick] (x2)--(c4);
            \draw[cutA,very thick] (a4)--(b4);
            \draw[cutC,very thick] (b4)--(c4);
        \end{tikzpicture}
        \\\smallskip
        {\small An input graph $G$ with 14 vertices.
        For all edges $e$, we have $c(e)=1$.}
    \end{minipage}
    \begin{minipage}[t]{0.52\textwidth}
        \centering
        \begin{tikzpicture}[x=0.85cm,y=0.72cm, hcnode/.style={draw,rounded corners,fill=white,inner sep=2pt,font=\scriptsize,align=center,preaction={fill=black,opacity=0.18,transform canvas={xshift=0.8pt,yshift=-0.8pt}}}]
            \node[hcnode] (r) at (0,0) {$V(G)$\\cut: $\{\textcolor{cutA}{em},\textcolor{cutA}{dh}\}$};
            \node[hcnode,anchor=north] (u1) at ([xshift=-1.60cm,yshift=-0.09cm]r.south) {$\{a,b,c,d,m\}$\\cut: $\{\textcolor{cutB}{ab},\textcolor{cutB}{ad},\textcolor{cutB}{am}\}$};
            \node[hcnode,anchor=north] (u2) at ([xshift=1.60cm,yshift=-0.09cm]r.south) {$\{e,f,g,h,i,j,k,l,n\}$\\cut: $\{\textcolor{cutC}{fn},\textcolor{cutC}{gn},\textcolor{cutC}{lh}\}$};
            \node[hcnode,anchor=north] (v1) at ([xshift=-0.80cm,yshift=-0.11cm]u1.south) {$\{a,b,c,d,m\}$};
            \node[hcnode,anchor=north] (w1) at ([xshift=-1.10cm,yshift=-0.11cm]u2.south) {$\{e,f,g,h\}$\\cut: $\{\textcolor{cutD}{ef}\}$};
            \node[hcnode,anchor=north] (w2) at ([xshift=1.10cm,yshift=-0.11cm]u2.south) {$\{i,j,k,l,n\}$\\cut: $\{\textcolor{cutE}{nl},\textcolor{cutE}{li},\textcolor{cutE}{ij}\}$};
            \node[hcnode,anchor=north] (z1) at ([xshift=-0.40cm,yshift=-0.11cm]w1.south) {$\{e,f,g,h\}$};
            \node[hcnode,anchor=north] (z2) at ([xshift=-1.10cm,yshift=-0.11cm]w2.south) {$\{i,n\}$};
            \node[hcnode,anchor=north] (z3) at ([xshift=1.10cm,yshift=-0.11cm]w2.south) {$\{j,k,l\}$\\cut: $\{\textcolor{cutF}{kj}\}$};
            \node[hcnode,anchor=north] (t1) at ([xshift=0.00cm,yshift=-0.11cm]z3.south) {$\{j,k,l\}$};

            \node[draw=none,fill=none,preaction={},anchor=west,xshift=0.09cm,font=\tiny,align=center] at (r.east) {cost:\\$2\cdot 14$};
            \node[draw=none,fill=none,preaction={},anchor=east,xshift=-0.09cm,font=\tiny,align=center] at (u1.west) {cost:\\$3\cdot 5$};
            \node[draw=none,fill=none,preaction={},anchor=west,xshift=0.09cm,font=\tiny,align=center] at (u2.east) {cost:\\$3\cdot 9$};
            \node[draw=none,fill=none,preaction={},anchor=east,xshift=-0.09cm,font=\tiny,align=center] at (w1.west) {cost:\\$1\cdot 4$};
            \node[draw=none,fill=none,preaction={},anchor=west,xshift=0.09cm,font=\tiny,align=center] at (w2.east) {cost:\\$3\cdot 5$};
            \node[draw=none,fill=none,preaction={},anchor=west,xshift=0.09cm,font=\tiny,align=center] at (z3.east) {cost:\\$1\cdot 3$};

            \draw (r)--(u1);
            \draw (r)--(u2);
            \draw (u1)--(v1);
            \draw (u2)--(w1);
            \draw (u2)--(w2);
            \draw (w1)--(z1);
            \draw (w2)--(z2);
            \draw (w2)--(z3);
            \draw (z3)--(t1);
        \end{tikzpicture}
        \\\smallskip
        {\small A hierarchical clustering tree $T$ of cost $2\cdot14+3\cdot5+3\cdot9+1\cdot4+3\cdot5+1\cdot3=92$.}
    \end{minipage}
    \caption{Example of an instance of \textsc{HC-$\cT$} and a feasible hierarchical clustering: Colors denote edges cut at different levels of the tree.}
    \label{fig:hac-intro-example}
\end{figure}

\subsection{Motivations and applications}
Hierarchical clustering is a clustering technique that is far richer in structure and applications than traditional flat clustering. Since such an object encodes a hierarchical decomposition of data, one may choose an appropriate granularity level of clustering according to their needs. 
Phylogenetic trees, file systems, company management hierarchies, and socio-political divisions are all examples of hierarchical clusterings. Relaxing the halting condition of the clustering process has intuitive applications both for the case of trees as well as bounded-diameter graphs. It is easy to see, that not all of the aforementioned hierarchies finish on singleton objects. For example file systems often contain directories with multiple files. To seek more such examples, one may consider a categorization of products available in a given online store, where one bottom category may contain a very large amount of objects. It seems desirable to find a way to systematically build such structures, especially since a well-structured e-commerce platform is crucial for the success of the business. Here, it would be required for the bottom clusters to consists of products which are similar enough to be considered as a single category.
Enforcing bounded-diameter clusters is therefore a natural requirement, since such clusters aggregate points that are close to one another. Note that in our setting, the cost function encodes similarity between objects, which inflates the diameter of the clusters when the similarity is higher. Luckily, it turns out that our diameter restrictions can be defined with respect to any metric, not limited to the similarity function or the number of edges in the original graph. 

Among possible usages of clustering into trees is the following heuristic for the standard \textsc{HC}. Firstly, run an algorithm for \textsc{HC-$\cT$} to obtain a clustering tree $T$ such that for each leaf $l$ of $T$, $G_l$ is a tree. Next, for each such $G_l$, run any algorithm for \textsc{HC} on trees to recover the remaining levels of clusterization. Since \textsc{HC} on trees admits a polynomial-time constant-factor approximation~\cite{ApproximateHierarchicalClusteringViaSparsestCutAndSpreadingMetrics}, the dominant part of the total approximation error comes from computing the solution for \textsc{HC-$\cT$}. Consequently, the better the approximation algorithm for \textsc{HC-$\cT$}, the better the performance of this heuristic.
The resulting clustering tree can also be used partially dynamically: the $\cT$-clustering tree may be stored statically, and whenever a leaf cluster needs refinement, we can run an \textsc{HC} algorithm on that leaf. In this way, we obtain a partially dynamic clustering tree which occupies less space than a full \textsc{HC} tree, while still allowing on-demand refinement.

%

\subsection{Our results and techniques}

Our main contribution is a general framework for approximating \textsc{HC-$\cF$} problems for classes of graphs $\cF$, which we call $\alpha\br{n}$-well-behaved. This means, that there exists a natural, cover-like ILP formulation of the \textsc{$p_{\cF}$-Partitioning} (\textsc{$p_{\cF}$-P}) problem (see Problem~\ref{prob:pfp}) as well as an algorithm which given a solution to the LP relaxation of this ILP, finds a solution to \textsc{$p_{\cF}$-P} with cost at most $\alpha\br{n}$ times the cost of the LP solution. This results in an $\bigo\br{\alpha\br{n}+\beta\br{n}}$-approximation algorithm for \textsc{HC-$\cF$}, where $\beta\br{n}$ is the approximation ratio of the LP-based bicriteria algorithm for \textsc{$\rho$-SEP} (Problem~\ref{prob:rho-sep}), currently at $\beta\br{n}=\bigo\br{\log n}$. As an immediate corollary we obtain an $\bigo\br{\log n\cdot\log\log n}$-approximation algorithm for clustering into trees and an $\bigo\br{\log n}$-approximation algorithm for clustering into clusters of diameter at most $d$. To the best of our knowledge, these are the first algorithmic results for the \textsc{HC-$\cF$} problem.

Conceptually, our work follows an implicitly, but broadly used technique which hereby we call \textit{flattening}. The idea is to subdivide the problem of finding a hierarchical/temporal\footnote{It is worth mentioning, that the hierarchical clustering tree can be treated as a sort of schedule of edges to be cut. This interpretation suggest that the depth of the tree can be treated as the time dimension of that schedule.} object such as a clustering tree into a problem of finding a sequence of `flat' objects such as separators or cuts. Among multiple uses of such a technique, one can mention \textsc{Hierarchical Clustering}~\cite{ACostFunctionForSimilarityBasedHierarchicalClustering, ApproximateHierarchicalClusteringViaSparsestCutAndSpreadingMetrics,Hierarchical_Clustering_Objective_Functions_and_Algorithms}, \textsc{Graph Search Problem}~\cite{Approximating_the_Average_Case_Graph_Search_Problem_with_Non_uniform_Costs}, \textsc{Tree Decompositions}~\cite{Approximating_Treewidth_Pathwidth_Frontsize_and_Shortest_Elimination_Tree}, \textsc{Precedence-Constrained Decision Tree}~\cite{Precedence-Constrained_Decision_Trees_and_Coverings} and many more.

Our ideas are particularly inspired by the work of Charikar and Chatziafratis~\cite{ApproximateHierarchicalClusteringViaSparsestCutAndSpreadingMetrics}, who used the flattening technique in conjunction with the \emph{spreading-metrics} framework. Whereas they used SDP relaxation, we encode the structure of the clustering tree using linear programming. The LP is divided into levels which allow decomposing it into several, level-restricted guides, which serve as a blueprint for the recursive construction of the clustering tree.
As it turns out, in order to encode the structural requirements given by the class $\cF$, we refine their technique by enriching the ILP formulation of the problem with generalized type of constraints which encode whether the clusters at each level have the desired properties. This is due to the fact that we need to force the ILP to optimize for two types of measures of progress: cutting edges required to transform clusters into $\cF$-clusters as well as efficient graph partitioning. Since $\cF$ might be an (almost) arbitrary class of graphs, we aim at showcasing a general framework of solving such problems, rather then focusing on singular solutions. To achieve this, we enforce certain natural conditions on the input family $\cF$, which in particular are satisfied by the graph classes of our interest. Any graph class respecting those conditions is called \emph{$\alpha\br{n}$-well-behaved}. Of particular importance is to us the notion of $\cF$-predicate which is a boolean function encoding whether a given vertex satisfies some structural property such that if all vertices of a graph satisfy the $\cF$-predicate, then all of the clusters are guaranteed to belong to $\cF$. Such a property can be for example whether a given vertex belongs to a cycle. If a given $\cF$ respects these conditions, we automatically get a series of constraints which will get included into our ILP, and this will suffice for a good approximation guarantee.

In order to extract useful information out of our ILP, we relax it into an LP and solve it (in polynomial time). Upon doing so, we use a simple arithmetic rounding procedure to fix the values of the $\cF$-predicate. Since our LP is organized into levels mimicking the levels of the hierarchy, we (recursively) decompose the solution according to those level and then use an LP-guided recursive processing in order to find the clustering tree.
At each recursion step, the algorithm uses the fragment of the LP on the current level to find two sets of edges to cut. The first, such that for each vertex $v$ in the current cluster, for which the LP value of the $\cF$-predicate is 1, the $\cF$-predicate is satisfied. The second, such that the remaining clusters are small enough. The algorithm then builds the rest of the clustering tree recursively. Using a careful lower bounding of the cost we get that the approximation ratio of our algorithm is $\bigo\br{\alpha\br{n}+\beta\br{n}}$.

To complement the algorithmic results, we show that for both $\cT$ and $\cD_d$ the problem cannot be approximated within any constant factor under the Small Set Expansion Hypothesis~\cite{GraphExpansionAndTheUniqueGamesConjecture}. We use a blackbox reduction from the inapproximability of \textsc{Hierarchical Clustering}~\cite{ApproximateHierarchicalClusteringViaSparsestCutAndSpreadingMetrics} by adding appropriate gadgets to all of the vertices of the original graph. Importantly, for the case of $\cF=\cT$ we use a structural property observed by the authors above, that the cost of the optimal solution to \textsc{HC} is lower bounded by the cost of the \textsc{Minimum Linear Arrangement}. Notably, this may no longer be true for the case for an arbitrary instance of \textsc{HC-$\cT$}, but we show that a similar lower bound still holds true for our construction.
\subsection{Related Work}

\paragraph{Flat Clustering}
Classical flat clustering objectives include \textsc{$k$-Center}, \textsc{$k$-Median}, and \textsc{$k$-Means}. For metric \textsc{$k$-Center}, a simple greedy farthest-first traversal yields a $2$-approximation, and this factor is optimal unless $P=NP$~\cite{GonzalezKCenter, HochbaumShmoysKCenter}. For metric \textsc{$k$-Median}, many constant factor approximation algorithms are known~\cite{AConstant-FactorApproximationAlgorithmForTheKMedianProblem, ApproximationAlgorithmsForMetricFacilityLocationAndKMedianProblemsUsingThePrimal-DualSchemaAndLagrangianRelaxation, ANewGreedyApproachForFacilityLocationProblems, AryaLocalSearchKMedian, Approximating_k-Median_via_Pseudo-Approximation, AnImprovedApproximationForKMedianAndPositiveCorrelationInBudgetedOptimization,BreachingThe2LMPApproximationBarrierForFacilityLocationWithApplicationsToKMedian} culminating in $\br{2+\epsilon}$ approximation of~\cite{A2-ApproximationAlgorithmForMetricKMedian}. For metric \textsc{$k$-Means}, constant-factor approximations are also known (for example via local search), with improved guarantees in Euclidean settings~\cite{KanungoKMeansLocalSearch, AhmadianKMeansImprovedApprox} with the currently best known guarantee of $3+2\sqrt{2}+\epsilon< 5.83$~\cite{AnImprovedGreedyApproximationForMetricKMeans}.

\paragraph{\textsc{Hierarchical Clustering}}
Work on \textsc{Hierarchical Clustering} can be split into similarity-based and dissimilarity-based objectives. In the similarity-based setting, Dasgupta introduced the now standard objective and initiated its approximation-theoretic study~\cite{ACostFunctionForSimilarityBasedHierarchicalClustering}. This line was further developed by Charikar and Chatziafratis, who obtained an $\bigo\br{\sqrt{\log n}}$-approximation via \textsc{Sparsest Cut} and spreading-metric SDP relaxations and showed that any constant-factor approximation is NP-hard under the Small Set Expansion hypothesis~\cite{ApproximateHierarchicalClusteringViaSparsestCutAndSpreadingMetrics}, and by the Auhhors in~\cite{Hierarchical_Clustering_Objective_Functions_and_Algorithms}, who studied objective-function variants and algorithmic guarantees. Another direction considers depth-based objectives, where one measures the depth of leaves or separators in the decomposition tree. For this type of objective, one gets a $\bigo\br{\sqrt{\log n}}$-approximation on trees~\cite{DereniowskiKosowskiUznanskiZouICALP2017}. For general graphs, combining a well known reduction from \textsc{Edge Ranking}~\cite{RankingsOfGraphs1998} to \textsc{Vertex Ranking} with the current state of the art approximation for the latter~\cite{Approximating_Treewidth_Pathwidth_Frontsize_and_Shortest_Elimination_Tree}, based on algorithm for\textsc{Vertex Separation} of~\cite{FeigeHajiaghayiLee2005} yields an $\bigo\br{\log^{3/2} n}$ guarantee.

In the dissimilarity-based setting, a dual objective to Dasgupta's cost is the \emph{revenue} objective. A sequence of approximation improvements for this objective includes $1/3$-approximation guarantees (for random splitting and average-linkage) by Moseley and Wang~\cite{MoseleyWangNIPS2017}, an SDP-based improvement to $0.3364$ by Charikar, Chatziafratis, and Niazadeh~\cite{CharikarChatziafratisNiazadehSODA2019}, a $0.4246$ guarantee via \textsc{Max-Uncut Bisection} in~\cite{BisectAndConquer2020}, and then a $0.585$-approximation by Alon, Azar, and Vainstein~\cite{AlonAzarVainsteinCOLT2020}.

\paragraph{\textsc{Feedback Edge/Vertex Set}}
\textsc{Feedback Edge Set} 
When $X=V$, the \textsc{FES} is equivalent to Maximum Spanning Tree which can be found by modifying the Kruskal's algorithm for the Minimum Spanning Tree. Otherwise, the problem is NP-hard and admits a $2$-approximation~\cite{ApproximatingMinimumSubsetFeedbackSetsInUndirectedGraphsWithApplications}.
For directed graphs, the problem is known as \textsc{Feedback Arc Set}, is NP-hard and admits an $\bigo\br{\log n \cdot \log\log n}$-approximation via an LP relaxation and rounding procedure in~\cite{ApproximatingMinimumFeedbackSetsAndMulticutsInDirectedGraphs}.
The Feedback Vertex Set problem is NP-hard even for $X=V$ for which it admits a $2$-approximation~\cite{A2-ApproximationAlgorithmForTheUndirectedFeedbackVertexSetProblem,OptimizationOfPearlsMethod} and a general $8$-approximation~\cite{An8ApproximationAlgorithmForTheSubsetFeedbackVertexSetProblem}. For directed graphs, the problem is equivalent to \textsc{FAS}.

\paragraph{\textsc{Balanced Partitioning}}
This area is tightly connected to \textsc{Sparsest Cut} and \textsc{Balanced Cut}: Leighton and Rao obtained an $\bigo\br{\log n}$-approximation for \textsc{Sparsest Cut} via multicommodity max-flow~\cite{LeightonRao1999}, later improved to $\bigo\br{\sqrt{\log n}}$ by Arora, Rao, and Vazirani via SDP and geometric embeddings~\cite{ExpanderFlowsGeometricEmbeddingsAndGraphPartitioning}. For \textsc{Minimum-Weight Balanced Vertex Separator}, Feige, Hajiaghayi, and Lee obtained an analogous $\bigo\br{\sqrt{\log n}}$-approximation~\cite{FeigeHajiaghayiLee2005}. Exact $(k,1)$-\textsc{Balanced Partitioning} (equal-size components) admits no polynomial-time approximation within any finite factor unless $P=NP$~\cite{AndreevRacke2004}, motivating the search of bicriteria approximation algorithms. Krauthgamer, Naor, and Schwartz gave an $\bigo\br{\sqrt{\log n}}$-approximation of the cut cost while allowing component sizes up to $2n/k$, based on an SDP relaxation~\cite{PartitioningGraphsIntoBalancedComponents}. For the \textsc{Minimum-Multicut} problem tightly connected to partitioning graph into balanced components, Garg, Vazirani, and Yannakakis obtained an $\bigo\br{\log n}$-approximation via a linear programming relaxation~\cite{ApproximateMaxFlowMinMultiCut1993}.

\section{Preliminaries}

Throughout this paper, each considered set of points along the similarity function is treated as an undirected weighted graph $G=\br{V,E,c,w}$, where $c\colon E\to \mathbb{N}$ is the similarity function (if $uv\notin E$, we may assume as a convention that $c(uv)=0$) and $w\colon V\to \mathbb{N}^+$ are vertex weights. We write $n\br{G}=\spr{V}$, $m\br{G}=\spr{E}$, and $n=n\br{G}$, $m=m\br{G}$ when $G$ is clear from context. For $U\subseteq V$, let $w\br{U}=\sum_{v\in U} w(v)$ and let $w\br{G}=w\br{V}$. For a set of edges $F\subseteq E$, we use $c\br{F}=\sum_{e\in F} c(e)$. Let $[n]=\brc{1,\dots, n}$.
By a permutation $\pi$ of $V$ we mean a bijection $\pi\colon V\to [n]$.

Consider a tree $T$ constructed for a graph $G$ in which each node $u$ corresponds to a \emph{cluster} $G_u$ (an induced subgraph of $G$), the root corresponds to $G$, and the children of a node $u$ form a partition of $G_u$ into connected components.
The set of edges of $G_u$ having endpoints in different connected components corresponding to the children of $u$ are denoted by $S_u$.
Depending on the problem we either require that the connected components corresponding to the leaves of $T$ are singletons in which case we refer to $T$ as a \emph{clustering tree}, or belong to the desired class $\cF$ and then $T$ is called a \emph{$\cF$-clustering tree}.

\subsection{Problem definitions}\label{subsec:problems-definitions}

For a given family of graphs $\cF$, let $p_{\cF}\colon V\to \brc{0,1}$ be a predicate function, which to each vertex assigns a Boolean value encoding whether a given vertex satisfies some structural property. We say that $p_{\cF}$ is an \emph{$\cF$-predicate} if for any graph $G$, we have that for every connected component $H$ of $G$, $H\in \cF$ whenever $\bigwedge_{v \in V}p_{\cF}\br{v}$ is true. Although this definition might seem a bit abstract, in what follows we show that the graph classes of our interest have naturally defined $\cF$-predicates. It should also be noted that for any graph class there might be many different $\cF$-predicates, but for our needs it is sufficient to consider only one of them per class.
Given a weighted graph $G=\br{V, E, c, w}$, we define the following combinatorial problems:
\begin{itemize}
    \item \textsc{Hierarchical Clustering} (\textsc{HC}): find a clustering tree $T$ of $G$ that minimizes $\COST_G\br{T} = \sum_{u\in V\br{T}} c(S_u) \cdot w\br{V\br{G_u}}$.
    \item \textsc{Hierarchical $\cF$-Clustering} (\textsc{HC-$\cF$}): find a $\cF$-clustering tree $T$ of $G$ that minimizes $\COST_G\br{T} = \sum_{u\in V\br{T}} c(S_u) \cdot w\br{V\br{G_u}}$.
    \item \textsc{$p_{\cF}$-Partitioning} (\textsc{$p_{\cF}$-P})\label{prob:pfp}:
    Given an additional set $X\subseteq V$ of terminals, find a subset of edges $E'\subseteq E$ such that for each $v\in X$, the $\cF$-predicate $p_{\cF}\br{v}$ is satisfied in $G\setminus E'$ and the cost $c\br{E'}$ is minimized.
    The following problems are special cases of \textsc{$p_{\cF}$-P}:
    \begin{enumerate}
        \item \textsc{Feedback Edge Set} (\textsc{FES}): given an additional set $X$ of vertices, find a subset of edges $E'\subseteq E$ such that $G\setminus E'$ contains no cycles going through vertices in $X$ and $c\br{E'}$ is minimized. Here, the $\cF$-predicate is simply whether there exists a cycle in $G\setminus E'$ going through $v$.
    \item \textsc{Min-Multicut} (\textsc{MMC}): given a set of vertex pairs $\{(s_i, t_i)\in V\times V\colon i\in [k]\}$, find a subset of edges $E'\subseteq E$ such that each pair $(s_i, t_i)$ is disconnected in $G\setminus E'$ and $c\br{E'}$ is minimized. Given an additional metric $\dist$ on $V$ and a threshold $d$, the problem of partitioning into clusters so that every terminal belongs to a cluster in which all vertices are within distance at most $d$ from this terminal (with no requirement on distances between non-terminal pairs of vertices) can be reduced to \textsc{MMC} by taking terminal pairs $(v,u)$ for all $v\in X$ and $u\in V$ with $\dist\br{v, u} > d$, because such pairs cannot lie in the same connected component of $G\setminus E'$. Thus, for a terminal $v\in X$, the $\cF$-predicate $p_{\cF}\br{v}$ states that if $H$ is the connected component of $G\setminus E'$ containing $v$, then $\dist\br{v, u} \le d$ for every $u\in V(H)$.
    \item \textsc{$\rho$-Separator} (\textsc{$\rho$-SEP})\label{prob:rho-sep}: given a parameter $\rho\in(0,1)$ and an additional set $X\subseteq V$ of vertices, find a subset of edges $E'\subseteq E$ such that every connected component $H$ of $G\setminus E'$ with $V(H)\cap X\neq\emptyset$ has the total vertex weight at most $\rho\cdot w\br{G}$ and $c\br{E'}$ is minimized. We call such an $E'$ a $(\rho,X)$-separator. Here, the $\cF$-predicate is whether the connected component $H\in G\setminus E'$ containing $v$ has the total vertex weight at most $\rho\cdot w\br{G}$. A bicriteria approximation algorithm for \textsc{$\rho$-SEP} with parameters $(\rho,\rho_0)$, where $0<\rho<\rho_0<1$, returns a subset of edges $E'$ such that $c\br{E'}$ is at most $\rho$ times more then the optimum and every connected component of $H\in G\setminus E'$ with $H\cap X\neq\emptyset$ has total vertex weight of at most $\rho_0\cdot w\br{G}$.
    \end{enumerate}
    \item \textsc{Minimum Linear Arrangement} (\textsc{MLA}): find a permutation $\pi$ of the vertices of $G$ that minimizes $\COST_G\br{\pi} = \sum_{uv\in E} c(uv) \cdot \spr{\pi(u) - \pi(v)}$.
\end{itemize}

The inclusion of weights serves the purpose of obtaining a more general solution, however its occurrence does not impact the techniques nor the results in a significant way.
On the other hand, we point out that our working assumption will be that the value $w\br{G}$ is polynomial in $n$.
If this assumption is not met for a given input, then the algorithms work correctly but become pseudo-polynomial in the size of the graph.

For any node $u\in V(T)$ of a clustering (or $\cF$-clustering) tree $T$, we denote by $\COST_G\br{T, u} = c(S_u) \cdot w\br{V\br{G_u}}$ the contribution of the node $u$ to the total cost of the tree.
By $\OPT_{\cF}$ we denote the cost of an optimal solution to \textsc{HC-$\cF$}.

We further remark, that the introduction of $\cF$-predicates allows us to use somehow weaker conditions to encode progress of our algorithm while trying to find a solution to \textsc{HC-$\cF$}. In particular, consider the case $\cF=\cT$. Here, there are at least two natural choices for $p_{\cF}$. For our needs, we want that there is no cycle in $G$ going through a given vertex $v\in V$. Observe however, that we also could define the $p_{\cF}$ to encode whether the cluster containing $v$ is a tree. The latter condition is stronger, but the former one allows us to use already existing approximation algorithms for \textsc{FES}. As it turns out, either way our ILP formulation will be forced to cut edges of a given cluster if at least one vertex of this cluster does not satisfy the $\cF$-predicate. Therefore, we can use the weaker condition to encode progress of our procedure which greatly eases the design of the algorithm. This relaxed condition also potentially allows for finding more classes of graphs which fit our framework.

\section{Well-behaved graph classes}

In this section we provide a series of characteristics of the class $\cF$ which will enable constructing a general algorithmic framework for approximating \textsc{HC-$\cF$}. Although this characterization may seem quite technical, it is satisfied by natural graph classes including trees and bounded diameter graphs (See Section \ref{sec:applying-framework}). The idea is that we want a \textsc{$p_{\cF}$-P} problem to admit a certain kind of natural ILP formulation with an LP relaxation solvable in polynomial time as well as a rounding algorithm for this LP with some provable approximation guarantee. If this is the case, then it will be easy for us to encode all of the necessary constraints into our general linear programming framework. The following definition captures the above intuitions in a formal way.
Any class of graphs $\cF$ that satisfies the following properties will be called \emph{$\alpha\br{n}$-well-behaved}:
\begin{enumerate}
    \item There exists an ILP formulation of the \textsc{$p_{\cF}$-P} problem such that: the set of variables consists only of variables of edge-cut type. An \emph{edge-cut type} variable $y_e$ for an edge $e\in E$ is an indicator of the fact that the ILP chooses $e$ to be cut. 
    \item The requirement to satisfy the $\cF$-predicate can be encoded using cover-type constraints and these are the only existing constraints (except the domain of each variable). A \emph{cover-type} constraint is a constraint of the form $\sum_{e\in R} y_e \geq 1$ for some subset of edges $R\subseteq E$. For any $v\in X$, let $\cR_v$ denote the set of all such $R$, for which a cover-type constraint exists and is necessary in order for $p_{\cF}\br{v}$ to be true. To be exact, $p_{\cF}\br{v}$ iff $\forall_{R\in \cR_v} \sum_{e\in R} y_e \geq 1$.
    \item The objective function is $\sum_{e\in E} c\br{e}\cdot y_e$.
    \item There exists a separation oracle for the LP-relaxation of the above ILP.
    \item There exists an algorithm, which given an instance of the problem \textsc{$p_{\cF}$-P} and a solution to the LP relaxation of the above ILP, finds a subset of edges $E'$ such that $G\setminus E'$ is a valid solution to \textsc{$p_{\cF}$-P} and $\sum_{e\in E'} c(e) \leq \alpha(n)\cdot \sum_{e\in E} c(e) \cdot y_e$, where $\alpha\br{n}$ is the approximation ratio of the algorithm.
\end{enumerate}

From now on by \textsc{$p_{\cF}$-P-ILP} and \textsc{$p_{\cF}$-P-LP} we denote any such ILP and its LP-relaxation for \textsc{$p_{\cF}$-P}. It should be noted that whenever there are multiple choices for the ILP and its LP-relaxation, one can pick arbitrarily among them, and by a slight abuse of notation the above symbols represent any of them/the one desirable for our purposes.

\section{General algorithmic framework for approximating \textsc{HC-$\cF$}}

Having characterized the kinds of graph classes that will fit into our framework we can now move on towards describing how to use the properties of $\alpha\br{n}$-well-behavedness to construct a general approximation algorithm for \textsc{HC-$\cF$}. The main idea is that we use a level-based \textsc{LP} relaxation to guide the recursive algorithm.
In the preprocessing step we encode the problem using an \textsc{ILP}. We obtain this formulation by combining any $\alpha\br{n}$-well-behaved \textsc{ILP} with the level decomposition approach from \cite{ApproximateHierarchicalClusteringViaSparsestCutAndSpreadingMetrics}. Our \textsc{ILP} consists of multiple variable levels which encode levels of an $\cF$-clustering tree. It should be noted that these levels are indexed by the size of the clusters, rather than the depth of the tree.

After solving \textsc{LP-HC-$\cF$}, we use a rounding routine to get a partially integral solution which at each level corresponds to choosing which vertices should satisfy the $\cF$-predicate.

During the recursive step, we process clusters in a top-down manner starting from $H=G$. For a cluster $H$, we set $t=w\br{V\br{H}}/2$. Then, we look which vertices satisfy the $\cF$-predicate at level $t$ and which do not (For a formal level definition see Section~\ref{sec:levels} below).
According to this, we partition them into two sets $X_1$ (satisfying the predicate) and $X_0$ (not satisfying the predicate). The solution of \textsc{LP-HC-$\cF$} at level $t$ is then used as a valid solution to \textsc{$p_{\cF}$-P-LP} with $X=X_1$, yielding a cut $E_{\cF}$ with $\alpha\br{n}$ factor loss. The same level-$t$ fragment is also used as a valid guide for \textsc{$\rho$-SEP} with $\rho=1/2$, $\rho_0=2/3$ and $X=X_0$, yielding a cut $E_{\rho}$, such that each connected component of $H\setminus E_{\rho}$ that contains a terminal, has total weight at most $2w\br{V\br{H}}/3$ and the cut cost is within a factor $\beta\br{n}=\bigo\br{\log n}$ of the guide. After cutting the edges in $E_{\cF}\cup E_{\rho}$, we recursively process all connected components.

Since we decompose the \textsc{LP} into separate levels, by summing up over all of them we will be able to show that our algorithm achieves an approximation ratio of $\bigo\br{\alpha\br{n}+\beta\br{n}}$.

\subsection{Level decomposition of \textsc{HC-$\cF$}} \label{sec:levels}

In order to analyze the structure of $\cF$-clustering trees, we introduce the notion of \emph{level decomposition}, which for a given $\cF$-clustering tree $T$ is defined as follows.
Let $\cC^T$ denote the family of all clusters in $T$, $\cC^T=\brc{G_u\colon u\in V(T)}$.
For $t\in [w\br{G}]$, denote by $\cL_t^T$ the set of all maximal clusters in $\cC^T$ of total weight at most $t$. We call $\cL_t^T$ the $t$-th \emph{level} of the decomposition.
Observe that for each $t$, $\cL_t^T$ is a partition of $V\br{G}$. By $E_t^T$ denote the set of edges with endpoints in different clusters of $\cL_t^T$. We call $E_t^T$ the $t$-th \emph{level cut}.
It is easy to see that the cost of $T$ can be expressed as:
\begin{observation}\label{obs:cost-decomposition}
    $\COST_G\br{T} = \sum_{t=1}^{w\br{G}} c\br{E_t^T}$.
\end{observation}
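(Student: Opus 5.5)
We prove Observation~\ref{obs:cost-decomposition} by double counting, edge by edge. Both sides are sums of edge costs with nonnegative integer multiplicities, so it is enough to show that every edge $e$ contributes the same multiplicity on each side. On the left-hand side, $e=xy$ contributes $c(e)\cdot w\br{V\br{G_u}}$ for every node $u$ with $e\in S_u$. There is at most one such node. Indeed, $e\in S_u$ means that $x,y\in V(G_u)$ but $x$ and $y$ lie in different children of $u$. Since the clusters along any root-to-leaf path are nested, this happens exactly at the deepest node $u_e$ whose cluster contains both endpoints, provided $u_e$ is not a leaf. If $u_e$ is a leaf, then $e$ is never cut and contributes $0$.

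On the right-hand side, $e$ contributes $c(e)$ for each $t\in[w(G)]$ with $e\in E_t^T$, that is, for each level in which $x$ and $y$ lie in different clusters of $\cL_t^T$. The key step is the following claim: for every $t$,
\[
e\in E_t^T \iff t< w\br{V\br{G_{u_e}}} \text{ and } u_e \text{ is not a leaf}.
\]

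First we justify that $\cL_t^T$ is well defined and is a partition of $V(G)$. Every cluster is contained in its parent's cluster, and weights are positive. Therefore the clusters of weight at most $t$ are closed under taking descendants. The maximal ones are the nodes of weight at most $t$ whose parent has weight greater than $t$, together with the root if $w(G)\le t$. For a leaf $l$ with $w(G_l)>t$ there is no covering cluster; this case is discussed at the end.

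\emph{If $t\ge w\br{V\br{G_{u_e}}}$:} the cluster $G_{u_e}$ has weight at most $t$, so some maximal member of $\cL_t^T$ contains it. Hence $x$ and $y$ lie in the same cluster and $e\notin E_t^T$.

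\emph{If $t<w\br{V\br{G_{u_e}}}$ and $u_e$ is not a leaf:} any cluster containing both $x$ and $y$ is an ancestor of $u_e$ (or $u_e$ itself), so its weight is greater than $t$. Thus $x$ and $y$ are covered by different clusters of $\cL_t^T$, and $e\in E_t^T$.

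Counting $t\in\{1,\dots,w\br{V\br{G_{u_e}}}-1\}$ gives $w\br{V\br{G_{u_e}}}-1$ levels on the right-hand side, against $w\br{V\br{G_{u_e}}}$ on the left. This off-by-one is the step I expect to be the main obstacle. The fix I would use is to index levels by strict inequality: take $\cL_t^T$ to be the maximal clusters of weight \emph{less than} $t$, or equivalently sum over $t=0,\dots,w(G)-1$. Then exactly the $t\in[w\br{V\br{G_{u_e}}}]$ separate $x$ and $y$, and the two multiplicities agree. The resulting identity matches the convention used later, where a cluster $H$ is split using the level-$t$ data with $t\approx w\br{V\br{H}}/2$, so constant factors are unaffected either way.

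The remaining detail is leaves of weight greater than $t$, whose vertices are not covered by any cluster of weight at most $t$. For this I would set the convention that such vertices stay in their leaf cluster, or equivalently extend $T$ by splitting leaves at zero cost. Edges inside a leaf have $u_e$ equal to that leaf, contribute $0$ on the left, and are never in any $E_t^T$ under this convention, so the claim still holds. Summing the equal per-edge multiplicities over all $e\in E$ gives the observation.
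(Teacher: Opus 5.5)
Your edge-by-edge double count is the argument the paper leaves implicit; it only says ``it is easy to see''. Your computation is right: with the definitions as printed, the identity is false.

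Take two vertices $x,y$ joined by an edge of cost $1$, with unit weights. The tree splitting $\{x,y\}$ into singletons has $\COST_G(T)=2$. But $E_1^T=\{xy\}$ and $E_2^T=\emptyset$, so the right-hand side is $1$. In general each cut edge $e$ receives multiplicity $w(V(G_{u_e}))-1$ on the right instead of $w(V(G_{u_e}))$, so the right-hand side equals $\COST_G(T)-\sum_u c(S_u)$.

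The off-by-one is therefore an error in the statement, not in your proof. Your reindexing is the correct repair: take maximal clusters of weight $<t$, or equivalently sum over $t=0,\dots,w(G)-1$, so that every cut edge lies in $E_0^T$.

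Likewise, the paper's claim that $\cL_t^T$ partitions $V(G)$ fails whenever an $\cF$-leaf is heavier than $t$, so your leaf convention is genuinely needed. Keep only its first form. ``Splitting leaves at zero cost'' is not equivalent: it would put edges inside leaves into some $E_t^T$ and spoil the count.

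The main theorem is unaffected either way. Its approximation bound only uses $\sum_t c(E_t^T)\le \COST_G(T)$ (to get $\OPT_{ILP}\le\OPT_{\cF}$), and that holds even with the printed indexing.

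One caveat concerns your step ``$e\in S_u$ forces $u=u_e$''. It relies on the Preliminaries' definition of $S_u$ as the edges joining distinct children of an induced cluster. The introduction, Figure~\ref{fig:hac-intro-example} and the ILP instead allow non-separating cuts. For example, $kj$ is cut in the cluster $\{j,k,l\}$, whose only child is the path $j,l,k$. That edge contributes $3\cdot c(kj)$ to the cost. Yet with $E_t^T$ defined through endpoints it lies in no $E_t^T$, so even the reindexed identity fails in that model.

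To cover it, define $E_t^T$ as the edges of $G$ that are not edges of any level-$t$ cluster. For a cut edge $e$, the clusters containing $e$ as an edge are exactly the unique node $u$ with $e\in S_u$ and its ancestors. Your argument then goes through verbatim with this $u$ in place of $u_e$.
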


Since we use the above levels to decompose a $\cF$-clustering tree and its cost, we would also like to do the same for the $\cF$-predicate. This is because eventually, all vertices satisfy the $\cF$-predicate, but whether a vertex $v\in V$ satisfies the $\cF$-predicate, when it belongs to a cluster at the $t$-th level depends on the structure of the $\cF$-clustering tree.
For every level $t\in [w\br{G}]$ and every vertex $v\in V\br{G}$, we define the level-indexed $\cF$-predicate indicator by setting $p_{\cF}^t\br{v}$ to be true iff $p_{\cF}\br{v}$ is true in $G\setminus E_t^T$. 
\subsection{\textsc{ILP} formulation of \textsc{HC-$\cF$} and its relaxation}
    We now move towards a description of the constraints of the \textsc{ILP} along with an explanation of the role they play. We use the well-known \emph{spreading-metrics} framework which in recent years have come particularly useful for approximating many problems regarding graph cutting \cite{ExpanderFlowsGeometricEmbeddingsAndGraphPartitioning,PartitioningGraphsIntoBalancedComponents,ApproximateHierarchicalClusteringViaSparsestCutAndSpreadingMetrics,FastApproximateGraphPartitioningAlgorithms,DivideAndConquerApproximationAlgorithmsViaSpreadingMetrics}.
    Spreading metrics are a way to encode the structure of a graph into a metric space, which can then be used to find cuts in the graph. The task of a mathematical program (usually LP or SDP) is to find a metric that spreads out the vertices of the graph in a way that reflects the structure of the graph. Here, the idea is to merge such constraints with our level-indexed formulation to encode \textsc{HC-$\cF$} as a linear program.

    For each edge $e\in E$ and level $t\in [w\br{G}]$ we use a variable $y_e^t\in[0,1]$ which encodes whether $e\in E_t^T$.
    We also define a variable $x_v^t=1$ iff $p_{\cF}^t\br{v}$ is true and $0$ otherwise.  Let $\dist_{y, V}^{t}\colon V\times V\to \mathbb{R}_{\geq 0}$ denote shortest-path distances in $G$ induced by lengths $\{y_e^t\}_{e\in E}$.
    We first explain all constraints and then we give the full ILP. 
    These are as follows:
    \begin{enumerate}
        \item \textbf{refinement:} $y_{e}^t \leq y_{e}^{t-1}$ for all $e\in E$ and $t\in [w\br{G}]$, because the partition at level $t-1$ refines the partition at level $t$. Moreover, $x_v^t \leq x_v^{t-1}$ for all $v\in V$ and $t\in [w\br{G}]$, since $p_{\cF}^t\br{v}$ implies $p_{\cF}^{t-1}\br{v}$.

        \item \textbf{$p_{\cF}$-P constraints:} Recall, that for every vertex $v\in V$, the set $\cR_v$ contains all subsets of edges $R\subseteq E$ such that if at least one edge in each $R\in \cR_v$ is cut, then $p_{\cF}\br{v}$ is true. Therefore, for each $v\in V$, every level $t\in [w\br{G}]$ and every $R\in \cR_v$ we have $\sum_{e\in R} y_e^t \geq x_{v}^t$. Additionally, we have $x_v^1=1$, since $p_{\cF}^1\br{v}$ is true for all $v\in V$.

        \item \textbf{spreading-metrics:} For every vertex $v\in V$, every level $t\in [w\br{G}]$, and every subset $S\subseteq V$ containing $v$, we have
        \[
            \sum_{u\in S} w(u)\cdot \dist_{y, V}^{t}(v,u) \geq \br{1-x_v^t}\cdot \br{w\br{S}-t}.
        \]
        When $x_v^t=0$, this condition enforces that inside any set $S$ containing $v$, enough weighted mass is separated from $v$ unless the cluster weight around $v$ is already at most $t$. When $x_v^t=1$, the right-hand side vanishes, corresponding to the fact that the cluster containing $v$ is an $\cF$-cluster and hence its weight can be larger than $t$.
    \end{enumerate}

    The following \textsc{ILP}, which we call \textsc{ILP-HC-$\cF$}, is an exact formulation of the \textsc{HC-$\cF$} problem:
    \begin{align}
        \min \quad &\sum_{t=1}^{w\br{G}} \sum_{e\in E} c(e) \cdot y_{e}^t\\
        \text{subject to} \quad &y_{e}^t \leq y_{e}^{t-1} \quad \forall e\in E,\; t\in [w\br{G}],\\
        &x_v^t \leq x_v^{t-1} \quad \forall v\in V, t\in [w\br{G}],\\
        &x_v^1 = 1 \quad \forall v\in V,\\
        &\sum_{e\in R} y_{e}^t \geq x_{v}^t \quad \forall v\in V, R\in \cR_v, t\in [w\br{G}], \label{cover_constraint_ILP}\\
        &\sum_{u\in S} w(u)\cdot \dist_{y, V}^{t}(v,u) \geq \br{1-x_v^t}\cdot \br{w\br{S}-t} \quad \forall S\subseteq V, v\in S, t\in [w\br{G}], \label{spreading_constraint_ILP}\\
        &y_{e}^t \in \{0,1\} \quad \forall e\in E, t\in [w\br{G}],\\
        &x_v^t \in \{0,1\} \quad \forall v\in V, t\in [w\br{G}].
    \end{align}
    We denote by $\OPT_{ILP}$ the value of an optimal solution to the above ILP for a given graph $G$.

    \begin{lemma}\label{lem:ilp-exact}
        \textsc{ILP-HC-$\cF$} is an exact formulation of \textsc{HC-$\cF$} which implies that $\OPT_{ILP}=\OPT_{\cF}$.
    \end{lemma}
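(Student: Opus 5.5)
We prove the equality by two inequalities: every $\cF$-clustering tree yields a feasible ILP solution of the same cost, and every feasible ILP solution yields an $\cF$-clustering tree of no larger cost.

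\textbf{Tree to ILP.} Let $T$ be an optimal $\cF$-clustering tree. For every $t$, set $y_e^t=1$ iff $e\in E_t^T$, and $x_v^t=1$ iff $p_{\cF}^t\br{v}$ holds. By Observation~\ref{obs:cost-decomposition} the objective equals $\COST_G\br{T}=\OPT_{\cF}$. We then check the constraints one by one.
\begin{itemize}
\item \emph{Refinement.} $\cL_{t-1}^T$ refines $\cL_t^T$, so $E_t^T\subseteq E_{t-1}^T$. Removing more edges preserves the predicate (for the predicates of interest, and by closure under subgraphs), so $x_v^t\le x_v^{t-1}$.
\item \emph{$x_v^1=1$.} At $t=1$ the clusters have weight at most $1$, so they are singletons and the predicate holds.
\item \emph{Cover constraints.} If $x_v^t=1$, then $p_{\cF}\br{v}$ holds in $G\setminus E_t^T$. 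By property 2 of well-behavedness, every $R\in\cR_v$ then contains a cut edge.
\item \emph{Spreading constraints.} If $x_v^t=1$ there is nothing to check. If $x_v^t=0$, let $C$ be the cluster of $\cL_t^T$ containing $v$. Every $u\notin C$ has $\dist^t(v,u)\ge 1$, because each path from $v$ to $u$ crosses an edge of $E_t^T$. Hence the left-hand side is at least $w(S\setminus C)\ge w(S)-w(C)$.

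The delicate point is showing $w(C)\le t$. This holds provided $C$ is a genuine level-$t$ cluster rather than a leaf of weight greater than $t$. We argue that when $v$ lies in a leaf cluster of weight greater than $t$, that leaf is in $\cF$. We then need $p_{\cF}\br{v}$ to hold there, i.e.\ we must interpret the predicate so that $p_{\cF}^t(v)$ is true whenever $v$'s level-$t$ component is a leaf. Note that maximal clusters of weight at most $t$ may fail to cover such a leaf, so the level definition has to be read as "maximal clusters of weight at most $t$, or leaves".

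This is the main obstacle: reconciling the definition of $\cL_t^T$ with leaves heavier than $t$, and making sure $x$ is chosen as the indicator that $v$'s component is already final or satisfies the predicate. I would define $x_v^t=1$ whenever $v$ lies in a leaf cluster of weight greater than $t$ (or the predicate holds), and verify that this choice is consistent with the cover constraints, using that $\cF$-leaves are not cut further.
\end{itemize}

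\textbf{ILP to tree.} Given an integral feasible $(x,y)$, for each $t$ let $E_t=\{e: y_e^t=1\}$. These sets are nested by refinement. Build $T$ top-down. The root is $G$, taken at level $t=w(G)$. A cluster $H$ is cut by the edges $E_{t}$ inside $H$, where $t$ is the largest level at which $E_t$ splits $H$. A cluster becomes a leaf when all of its vertices have $x_v^t=1$ at its current level, or when it is a singleton.

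Feasibility and cost then follow from three facts.
\begin{itemize}
\item Integral spreading constraints, applied with $S$ equal to the component of $v$ in $G\setminus E_t$ when $x_v^t=0$, give that the component has weight at most $t$. The left-hand side is $0$ in that case, since all distances within $S$ are zero.
\item The cover constraints give that $p_{\cF}\br{v}$ holds in $G\setminus E_t$ when $x_v^t=1$. Hence clusters on which every vertex has $x=1$ belong to $\cF$ by the definition of an $\cF$-predicate.
\item For the cost, every cluster of $T$ of weight greater than $t$ that is split has its splitting edges inside $E_t$. So $E_t^T\subseteq E_t$, and Observation~\ref{obs:cost-decomposition} gives $\COST_G\br{T}\le\sum_t c(E_t)$, which is the ILP objective.
\end{itemize}
Combining the two directions yields $\OPT_{ILP}=\OPT_{\cF}$.
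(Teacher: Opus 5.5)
There is a genuine gap in the first direction. Your plan is the paper's: the same encoding $y_e^t=\mathbf{1}[e\in E_t^T]$ and $x_v^t=p_{\cF}^t(v)$, the same spreading argument, and a laminar construction in the converse. But you leave the step you call the main obstacle open, and your proposed patch cannot close it.

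By property~2 of well-behavedness, all cover constraints of $v$ at level $t$ hold if and only if $p_{\cF}(v)$ holds in $G\setminus E_t^T$. So setting $x_v^t=1$ for $v$ in a leaf $C$ with $w(C)>t$ is admissible exactly when the predicate already holds there. Your redefinition therefore either coincides with the original one or breaks the cover constraints. The fact that $C$ is not cut further says nothing about $p_{\cF}(v)$. If the predicate fails at such a $v$, no value of $x_v^t$ is feasible for your $y$: with $x_v^t=0$, the spreading constraint for $S=C$ has left-hand side $0$ and right-hand side $w(C)-t>0$.

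The missing ingredient is a converse property of the predicate: if the component of $v$ in $G\setminus E'$ belongs to $\cF$, then $p_{\cF}(v)$ holds in $G\setminus E'$. The paper uses exactly this in its step ``$x_v^t=0$, so the level-$t$ cluster $C$ of $v$ is not an $\cF$-cluster, so $w(C)\le t$''. It is also what gives $x_v^1=1$ (every level-$1$ cluster is a leaf or a singleton, hence in $\cF$), which you assert without justification. The property does not follow from the definition of an $\cF$-predicate. For example, take $\cF$ to be all graphs and the predicate ``$v$ is isolated'': then $\OPT_{\cF}=0$, while $x_v^1=1$ forces the ILP to cut every edge. It does hold for both predicates used later (no cycle through $v$; every vertex of $v$'s component within distance $d$ of $v$). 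Once you state it as an assumption, the original choice $x_v^t=p_{\cF}^t(v)$ works unchanged. Read $\cL_t^T$ as the maximal clusters of weight at most $t$ together with the leaves heavier than $t$; then a vertex with $x_v^t=0$ cannot lie in such a leaf, so its level-$t$ cluster has weight at most $t$.

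In the converse direction your early stopping is genuinely needed, and here you are more careful than the paper. The paper turns the laminar family $(\Pi_t)$ into a tree and asserts that its level cut is precisely $\{e\colon y_e^t=1\}$. A counterexample: take the path with edges $ab,bc$ in $\cT$ and the feasible point with all $x=1$, $y_{ab}^1=1$ and all other $y$ zero. The ILP pays $1$, while the unpruned tree that splits $\{a,b,c\}$ pays $3$.

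Your inclusion $E_t^T\subseteq E_t$ still needs its argument and a precise leaf test. Let $t_H$ be the largest level at which $E_{t_H}$ disconnects $H$ (the case $t_H=w(G)$ is trivial). Then $H$ is a component of $G\setminus E_{t_H+1}$, and the test ``all $x_v^{t_H+1}=1$'' must be made at that level. Testing at the level where $H$ was created does not suffice: $x$ is nonincreasing in $t$, so that test can fail while the one at $t_H+1$ succeeds. If the test at $t_H+1$ fails, the spreading constraint with $S=H$ gives $w(H)\le t_H+1$. Hence $w(H)>t$ forces $t\le t_H$, and refinement gives $E_{t_H}\cap E(H)\subseteq E_t$.
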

    \begin{proof}
        $(\Rightarrow)$ Let $T$ be any feasible $\cF$-clustering tree.
        Recall that, for each level $t\in [w\br{G}]$, $E_t^T$ is the level cut from the level decomposition. Define a solution by setting
        \[
            y_e^t:=\mathbf{1}[e\in E_t^T],
        \]
        and $x_v^t:=1$ if and only if the level-$t$ cluster of $v$ satisfies $p_{\cF}$.

        The refinement constraints hold because level cuts are nested. The constraints on $x$ hold by monotonicity of the predicate along the refinement. The $p_{\cF}$-P constraints hold by definition of the family $\cR_v$: whenever $x_v^t=1$, each required $p_{\cF}$-P constraint for $v$ at level $t$ is satisfied by the cut pattern encoded by $y^t$.

        For spreading constraints, fix $v\in V$, $t\in [w\br{G}]$, and $S\subseteq V$ with $v\in S$. If $x_v^t=1$, the right-hand side in~\eqref{spreading_constraint_ILP} is $0$. Hence assume $x_v^t=0$ and let $C$ be the level-$t$ cluster containing $v$. Then $C$ is not an $\cF$-cluster, so by level definition $w(C)\le t$. Every $u\in S\setminus C$ is separated from $v$ by level-$t$ cuts, hence $\dist_{y, V}^{t}(v,u)\ge 1$. Therefore,
        \[
            \sum_{u\in S} w(u)\cdot \dist_{y,V}^{t}(v,u)
            \ge \sum_{u\in S\setminus C} w(u)
            = w(S)-w(S\cap C)
            \ge w(S)-w(C)
            \ge w(S)-t,
        \]
        which is exactly \eqref{spreading_constraint_ILP} for $x_v^t=0$. Thus $(\mathbf{x},\mathbf{y})$ is ILP-feasible and
        \[
            \sum_{t=1}^{w\br{G}}\sum_{e\in E} c(e)\cdot y_e^t
            = \sum_{t=1}^{w\br{G}} c(E_t^T)
            = \COST_G(T)
        \]
        by Observation~\ref{obs:cost-decomposition}. Hence $\OPT_{ILP}\le \OPT_{\cF}$.

        $(\Leftarrow)$ Let $(\mathbf{x},\mathbf{y})$ be any integral feasible ILP solution. For each level $t$, define
        \[
            E_t^0:=\brc{e\in E\colon y_e^t=0},
        \]
        and let $\Pi_t$ be the partition of $V$ into connected components of $G\setminus \br{E\setminus E_t^0}$. (Note that $E\setminus E_t^0$ are exactly the edges that are considered to be cut at level $t$). Because $y_e^t\le y_e^{t-1}$, we have $E_{t-1}^0\subseteq E_t^0$, so $\Pi_{t-1}$ refines $\Pi_t$ and the family $(\Pi_t)_{t=1}^{w\br{G}}$ is laminar; therefore it defines a $\cF$-clustering tree $T$.

        Consider any level $t$ and any component $C\in \Pi_t$. For every $v\in C$, all vertices of $C$ are at zero $\dist_{y,V}^t$-distance from $v$, so plugging $S=C$ into \eqref{spreading_constraint_ILP} gives
        \[
            0\ge (1-x_v^t)\cdot (w(C)-t).
        \]
        Hence either $w(C)\le t$, or $x_v^t=1$ for all $v\in C$. In the second case, by $p_{\cF}$-P constraints and the definition of $\cR_v$, each vertex in $C$ satisfies the $\cF$-predicate; therefore $C\in \cF$. So every cluster in level $t$ is either of weight at most $t$ or belongs to $\cF$, exactly as required in a $\cF$-clustering tree.

        Finally, the level cut induced by $\Pi_t$ is precisely $\brc{e\in E\colon y_e^t=1}$, so
        \[
            \COST_G(T)=\sum_{t=1}^{w\br{G}} c\br{\brc{e\in E\colon y_e^t=1}}
            =\sum_{t=1}^{w\br{G}}\sum_{e\in E} c(e)\cdot y_e^t.
        \]
        Therefore every integral ILP solution maps to a feasible $\cF$-clustering tree of the same cost, implying $\OPT_{\cF}\le \OPT_{ILP}$, which completes the proof.
    \end{proof}

    We now wish to show that it is possible to solve a relaxation of \textsc{ILP-HC-$\cF$} in polynomial time. We denote this relaxation by \textsc{LP-HC-$\cF$}, which is obtained by replacing the integrality constraints on $x_v^t$ and $y_e^t$ with $x_v^t\in [0,1]$ and $y_e^t\geq 0$ for all $v\in V$, $e\in E$, and $t\in [w\br{G}]$.
    \begin{lemma}\label{lem:lp-polytime}
        The relaxation \textsc{LP-HC-$\cF$} can be solved in polynomial time.
    \end{lemma}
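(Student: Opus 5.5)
We use the ellipsoid method, which solves an LP in time polynomial in the number of variables and the encoding length of its coefficients, given a polynomial-time separation oracle. The LP has $\bigo\br{w\br{G}\cdot(n+m)}$ variables $y_e^t,x_v^t$, which is polynomial since $w\br{G}$ is polynomial in $n$. All coefficients are small integers or vertex weights, so the encoding length is polynomial as well. The distances $\dist_{y,V}^t$ are not variables. We handle them in the standard spreading-metric way: introduce auxiliary variables $d^t_{vu}$ constrained by $d^t_{vu}\le \sum_{e\in P} y_e^t$ for every $v$--$u$ path $P$, and write the spreading constraint with $d^t_{vu}$ in place of $\dist$. Equivalently, we check the constraint directly on the true shortest-path distances. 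Since the left-hand side of the spreading constraint is monotone in the distances, this does not change the projection onto $(x,y)$. It therefore suffices to give a separation oracle for each family of constraints, given a candidate point $(x,y)$.

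\textbf{Polynomially many constraints.} The refinement constraints, $x_v^1=1$, and the domain bounds number $\bigo\br{w\br{G}(n+m)}$, so we check each of them explicitly.

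\textbf{Cover constraints~\eqref{cover_constraint_ILP}.} Fix a level $t$ and a vertex $v$. The constraints $\sum_{e\in R}y_e^t\ge x_v^t$ for $R\in\cR_v$ are the $p_{\cF}$-P-LP constraints for terminal $v$, scaled by $x_v^t$. Property~4 of $\alpha\br{n}$-well-behavedness gives a separation oracle for $p_{\cF}$-P-LP. We run it on $y^t$ with $X=\brc{v}$ to obtain $\min_{R\in\cR_v}\sum_{e\in R}y^t_e$, or at least a violated $R$ whenever this minimum is below~$1$. A point $(x,y)$ violates the scaled constraint exactly when $\min_R \sum_{e\in R} y^t_e < x_v^t\le 1$. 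This is where we need the oracle to handle a general right-hand side rather than only $1$. Since $x_v^t>0$ whenever the constraint can be violated, we apply the oracle to the rescaled vector $y^t/x_v^t$: a set $R$ is violated for $y^t/x_v^t$ at threshold $1$ exactly when it is violated for $(x,y)$. Iterating over all $v$ and $t$ separates this family.

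\textbf{Spreading constraints~\eqref{spreading_constraint_ILP}.} This family has exponentially many constraints, one for each $S\ni v$. Fix $v$ and $t$ and compute $\dist^t_{y,V}(v,\cdot)$ by Dijkstra. The constraint rewrites as
\[
\sum_{u\in S} w(u)\big(\dist^t_{y,V}(v,u)-(1-x_v^t)\big)\ \ge\ -(1-x_v^t)\,t .
\]
The left-hand side is a sum of independent per-vertex terms, apart from the forced inclusion of $v$, whose term is $-(1-x_v^t)w(v)$. It is therefore minimized by taking $S=\brc{v}\cup\brc{u\colon \dist^t_{y,V}(v,u)<1-x_v^t}$. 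If this $S$ satisfies the constraint, every $S$ does. Otherwise $S$ gives a violated constraint, and the path-based form also yields a violated linear inequality, namely the one obtained from the shortest paths defining the distances. This takes polynomially many Dijkstra runs in total.

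\textbf{Main obstacle.} The expected obstacle is the cover-constraint step. The right-hand side $x_v^t$ is a variable, so property~4 must be applied through the rescaling argument; we also need the oracle to work for a single terminal and to return a violated $R$. If the given oracle only decides feasibility, we would instead use the standard equivalence between separation and optimization to extract a violated cover constraint.
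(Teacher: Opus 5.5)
Your proposal is correct and takes the same approach as the paper. You use the ellipsoid method, the well-behavedness oracle for the cover constraints, and one shortest-path computation per pair $(v,t)$, after which the threshold set $\brc{u\colon \dist_{y,V}^{t}(v,u)<1-x_v^t}$ is the most violated spreading constraint. Your extra care fills in details the paper's proof skips: you rescale by $x_v^t$ to handle the variable right-hand side of the cover constraints, and you extract a linear violated inequality from the shortest paths. One small addition to the rescaling: if the oracle also enforces $y_e\le 1$, cap $y^t_e/x_v^t$ at $1$ before calling it, which does not change which cover constraints are violated.
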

    \begin{proof}
        First, we recall our assumption on $w\br{G}$ being polynomial in $n$.
        Thus, the only potentially exponential families of constraints are \eqref{cover_constraint_ILP} and \eqref{spreading_constraint_ILP}. For constraints of type \eqref{cover_constraint_ILP}, a polynomial-time separation oracle is part of the assumptions of a well-behaved formulation.

        It remains to separate the spreading-metrics constraints. Fix a level $t\in [w\br{G}]$ and a vertex $v\in V$. Using the current values of the variables $y_e^t$, we compute all distances $\dist_{y,V}^{t}(v,u)$ for $u\in V$ by a shortest-path algorithm. Then the left-hand side minus the right-hand side of \eqref{spreading_constraint_ILP} can be rewritten as
        \[
            \sum_{u\in S} w(u)\cdot \dist_{y,V}^{t}(v,u) - \br{1-x_v^t}\cdot \br{w\br{S}-t}
            = \br{1-x_v^t}\cdot t - \sum_{u\in S} w(u)\cdot \br{\br{1-x_v^t}-\dist_{y,V}^{t}(v,u)}\geq 0.
        \]
        Therefore, for fixed $v$ and $t$, the most violated constraint is obtained by taking
        \[
            S_{v,t}=\brc{u\in V\colon \dist_{y,V}^{t}(v,u)<1-x_v^t},
        \]
        since every vertex with positive contribution to the second term should be included, while vertices with non-positive contribution should be excluded. Hence, after one shortest-path computation, we can test in polynomial time whether any spreading-metrics constraint for $S=S_{v,t}$ is violated, and if so, we return it as a separating hyperplane.

        Repeating this for all polynomially many pairs $(v,t)$ yields a polynomial-time separation oracle for constraints of type \eqref{spreading_constraint_ILP}. Consequently, \textsc{LP-HC-$\cF$} admits a polynomial-time separation oracle, and thus it can be solved in polynomial time by the ellipsoid method.
    \end{proof}
    Let $LP$ be a solution to \textsc{LP-HC-$\cF$} and let $\OPT_{LP}$ be the optimal value of its objective function. We now show how to round the solution of \textsc{LP-HC-$\cF$} to get a partially integral solution without excessive increase in cost.

    \begin{lemma}\label{lem:lp-rounding}
        Round the values of $x_v^t$ to get $\hat{x}_v^t = 1$ if $x_v^t \geq 1/2$ and $\hat{x}_v^t = 0$ otherwise
        and $\hat{y}_{e}^t=2\cdot y_{e}^t$. Then, $\br{\hat{\mathbf{x}}, \hat{\mathbf{y}}}$ is a feasible solution for \textsc{LP-HC-$\cF$} with cost at most $2\cdot \OPT_{LP}$.
    \end{lemma}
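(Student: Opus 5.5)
The plan is to verify each constraint family of \textsc{LP-HC-$\cF$} separately for $(\hat{\mathbf{x}},\hat{\mathbf{y}})$, and then bound the objective.

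\emph{Cost.} The objective is linear in $\mathbf{y}$ and does not involve $\mathbf{x}$. Hence the new value is exactly $\sum_t\sum_e c(e)\cdot 2y_e^t = 2\cdot\OPT_{LP}$.

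\emph{Monotone and normalization constraints.} Both are immediate:
\begin{itemize}
\item $y_e^t\le y_e^{t-1}$ is preserved under multiplication by $2$.
\item $x_v^t\le x_v^{t-1}$ is preserved by threshold rounding, because rounding at $1/2$ is a monotone map.
\item $x_v^1=1\ge 1/2$ gives $\hat x_v^1=1$.
\end{itemize}
The bounds on the variables need a small caveat. The relaxation only requires $y_e^t\ge 0$, with no upper bound of $1$, so $\hat y_e^t=2y_e^t$ is still admissible. If an upper bound were desired, one would replace $\hat y$ by $\min(1,2y)$. Capping does not break cover constraints, since a capped term already contributes $1$. It also does not break spreading constraints: distances larger than $1$ can be truncated at $1$, which does not hurt because the right-hand side only needs $\dist\ge 1-\hat x\le 1$ where it matters. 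The refinement constraints survive capping as well.

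\emph{Cover constraints \eqref{cover_constraint_ILP}.} If $\hat x_v^t=0$, there is nothing to check. If $\hat x_v^t=1$, then $x_v^t\ge 1/2$, so for each $R\in\cR_v$,
\[
\sum_{e\in R}\hat y_e^t=2\sum_{e\in R}y_e^t\ge 2x_v^t\ge 1=\hat x_v^t .
\]

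\emph{Spreading constraints \eqref{spreading_constraint_ILP}.} This is the step needing a little care. If $\hat x_v^t=1$, the right-hand side is $0$ and the constraint is trivial. Otherwise $x_v^t<1/2$, so $1-x_v^t>1/2$. Distances scale linearly with edge lengths, so $\dist_{\hat y,V}^t=2\dist_{y,V}^t$. Therefore
\[
\sum_{u\in S}w(u)\dist_{\hat y,V}^t(v,u)=2\sum_{u\in S}w(u)\dist_{y,V}^t(v,u)\ge 2(1-x_v^t)(w(S)-t)\ge w(S)-t .
\]
The last inequality requires $w(S)-t\ge 0$. When $w(S)<t$, the right-hand side $(1-\hat x_v^t)(w(S)-t)$ is negative and the constraint holds trivially because distances are nonnegative. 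With this case split the constraint holds for every $S$, and I expect no real obstacle beyond that sign case. Combining all the checks yields feasibility with cost at most $2\cdot\OPT_{LP}$.
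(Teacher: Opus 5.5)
Your proposal is correct and follows essentially the same route as the paper: you check the constraints one family at a time, using monotonicity of threshold rounding, the bound $2x_v^t\ge 1$ for the cover constraints, the scaling $\dist_{\hat y,V}^t = 2\dist_{y,V}^t$ for the spreading constraints, and linearity of the objective. Your explicit split on the sign of $w(S)-t$ is in fact slightly more careful than the paper, whose chain $(1-x_v^t)(w(S)-t) > \frac{1}{2}(w(S)-t)$ is only valid when $w(S)>t$; as you note, the remaining case is trivial because distances are nonnegative.
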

    \begin{proof}
        Firstly, observe that the refinement constraints are satisfied since $\hat{y}_{e}^t \leq \hat{y}_{e}^{t-1}$ and $\hat{x}_v^t \leq \hat{x}_v^{t-1}$ for all $e\in E$, $v\in V$ and $t\in [w\br{G}]$. The $p_{\cF}$-P constraints are satisfied since if $\hat{x}_v^t=1$, then $x_v^t\geq 1/2$ and hence $\sum_{e\in R} y_{e}^t \geq x_v^t \geq 1/2$ for all $R\in \cR_v$. Therefore, $\sum_{e\in R} \hat{y}_{e}^t = 2\cdot \sum_{e\in R} y_{e}^t \geq 1$. If $\hat{x}_v^t=0$, then the $p_{\cF}$-P constraint is trivially satisfied. The spreading-metrics constraints are satisfied since if $\hat{x}_v^t=0$, then $x_v^t<1/2$ and hence
        \[
            \sum_{u\in S} w(u)\cdot \dist_{y,V}^{t}(v,u) \geq \br{1-x_v^t}\cdot \br{w\br{S}-t} > 1/2\cdot \br{w\br{S}-t}.
        \]
        Therefore, we get that for all $S\subseteq V$ containing $v$, such that $\hat{x}_v^t=0$, we have
        \[
            \sum_{u\in S} w(u)\cdot \dist_{\hat{y},V}^{t}(v,u) = 2\cdot \sum_{u\in S} w(u)\cdot \dist_{y,V}^{t}(v,u) > w\br{S}-t
        \]
        as required.
        If $\hat{x}_v^t=1$, then the spreading-metrics constraint is trivially satisfied.
        Finally, the cost of the new solution is at most $2\cdot \OPT_{LP}$ since $\sum_{t=1}^{w\br{G}} \sum_{e\in E} c(e) \cdot \hat{y}_{e}^t = 2\cdot \sum_{t=1}^{w\br{G}} \sum_{e\in E} c(e) \cdot y_{e}^t = 2\cdot \OPT_{LP}$.
    \end{proof}

    Overall, we obtain a partially integral point where all of the $x$ values are integral, which translates to the fact that we have decided which vertices need to satisfy the $\cF$-predicate at each level of the $\cF$-clustering tree. From now on we will thus assume that for every $v\in V$ and $t\in[w\br{G}]$, the value of $p_{\cF}^t\br{v}$ is fixed and we can remove all of the $x$ variables from the \textsc{LP}. Since we will not use the original values in further considerations, by a slight abuse of notation from now on we will refer to this solution simply by $LP$.

\subsection{LP relaxation for $\rho$-separators}\label{sec:lp-rho-sep}

    We now state the linear programming relaxation for the \textsc{$\rho$-SEP} which is a modified LP of \cite{FastApproximateGraphPartitioningAlgorithms} that will be required in order to reduce the size of the instance efficiently. For the \textsc{$\rho$-SEP} problem on a graph $G=(V,E)$ with a terminal set $X\subseteq V$, we use the following weighted spreading-metrics \textsc{LP} relaxation \textsc{LP-$\rho$-SEP}:
    \begin{align}
        \min \quad &\sum_{e\in E} c(e)\cdot y_e\\
        \text{subject to} \quad &\sum_{u\in S} \dist_{y,V}(v,u)\cdot w(u) \geq w\br{S}-\rho\cdot w\br{G} \quad \forall S\subseteq V,\; v\in S\cap X,\\
        &0\leq y_e\quad \forall e\in E.
    \end{align}

    The analysis in \cite{FastApproximateGraphPartitioningAlgorithms} is stated for the special case $X=V$. In our framework, we use the corresponding extension for the case $X\neq V$: the same sphere-growing argument is applied, the only important modification is that we grow the spheres around the centers in $X$, rather than around arbitrary vertices in $V$. We state the resulting guarantee below and defer the full extension from $X=V$ to $X\neq V$ to Appendix~\ref{app:terminal-extension}.

    The following bicriteria rounding guarantee for \textsc{$\rho$-SEP} is a direct corollary of \cite{FastApproximateGraphPartitioningAlgorithms}:
    \begin{theorem}\label{thm:ksep-rounding}
        Let $\rho$ and $\rho_0$, $0<\rho<\rho_0<1$, be any fixed constants.
        There exists a polynomial-time algorithm which, given an instance of \textsc{$\rho$-SEP} and a solution to the above \textsc{LP-$\rho$-SEP}, returns a subset of edges $E'$ such that every connected component of $G\setminus E'$ that contains a terminal has total weight at most $\rho_0\cdot w\br{G}$ and
        \[
            \sum_{ij\in E'} c\br{i,j} \leq \bigo\br{\log n}\cdot \sum_{e\in E} c(e)\cdot y_e.
        \]
    \end{theorem}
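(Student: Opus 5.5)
We prove Theorem~\ref{thm:ksep-rounding} by the sphere-growing (region-growing) technique of \cite{FastApproximateGraphPartitioningAlgorithms}, with the single change that balls are grown only around terminals. Let $y$ be a feasible solution of \textsc{LP-$\rho$-SEP}, and let $\dist=\dist_{y,V}$ be its shortest-path metric. Put $W=w\br{G}$ and $Y=\sum_{e}c(e)y_e$, and fix the radius $r:=\br{\rho_0-\rho}/\rho_0$, which is a positive constant.

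\textbf{Step 1: a terminal ball of radius $r$ is light.} Let $v\in X$ and $S=B(v,r)=\{u:\dist(v,u)\le r\}$. Every term in the spreading constraint for this $S$ has distance at most $r$, so $r\cdot w(S)\ge w(S)-\rho W$. Hence $w(S)\le \rho W/(1-r)=\rho_0 W$. The same bound holds inside any residual graph. Distances there can only grow, but we carve from $B(v,r)$ computed in the original metric and intersect it with the current vertex set, so its weight is still at most $\rho_0 W$.

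\textbf{Step 2: grow balls with the standard volume charging.} While some terminal $v$ remains in the current graph $G'$, we grow a ball around $v$. Define its volume as
\[
\mathrm{vol}(v,s)=\frac{Y}{n}+\sum_{e\subseteq B(v,s)}c(e)y_e+\sum_{e\in\partial B(v,s)}c(e)\,(\text{fraction of } e \text{ inside}),
\]
where the sums range over the edges of $G'$. The usual continuity argument gives $\frac{d}{ds}\mathrm{vol}(v,s)\ge c(\partial B(v,s))$. Comparing with the initial volume $Y/n$ and the final volume at most $Y+Y/n$, we can choose a radius $s\in[0,r)$ with
\[
c(\partial B(v,s))\le \frac{\ln(n+1)}{r}\,\mathrm{vol}(v,s).
\]
Such an $s$ is found in polynomial time, since only the $n$ distinct distance values need to be checked. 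We cut $\partial B(v,s)$, delete the ball, and repeat. The balls are disjoint and the interior volumes of distinct balls count disjoint edge sets. Summing the boundary costs therefore gives $c(E')\le \frac{\ln(n+1)}{r}\br{Y+Y}=\bigo\br{\log n}\cdot Y$, because $\rho,\rho_0$ are constants.

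\textbf{Step 3: every component containing a terminal is light.} When the loop stops, every terminal lies in some removed ball. Each removed ball is separated from the rest by the cut edges, so each connected component of $G\setminus E'$ that contains a terminal lies inside one ball (or inside a piece of one). By Step~1 its weight is at most $\rho_0 W$. Components without terminals are unconstrained.

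\textbf{Main obstacle.} The difficulty is the terminal restriction in Step~1. Spreading constraints are only available for centers $v\in X$, so balls must be centered at terminals. We must also check that carving one ball does not break the guarantee for later balls. This works because the later balls are subsets of original-metric balls around terminals, and the weight bound is monotone under taking subsets. A second point is that non-terminal vertices left over after the loop are never required to lie in a light component, so they need no ball. This is exactly why the case $X\ne V$ goes through with no extra loss.
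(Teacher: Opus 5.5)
Your proof is correct and is essentially the paper's own argument from Appendix~\ref{app:terminal-extension}. The paper also grows regions centred only at terminals up to radius $(\rho_0-\rho)/\rho_0$, uses the spreading constraint to show such terminal balls weigh at most $\rho_0\cdot w(G)$, and gets the $\bigo(\log n)$ factor from the usual volume/log-derivative charging. The differences are cosmetic: you plug the ball directly into the spreading constraint instead of using the paper's radius lemma, grow a ball at every terminal rather than only inside heavy components, and use seed $Y/n$, which doubles the leading constant but does not change the asymptotic bound.
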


\subsection{Restricting the solutions}
    
In order to obtain quality approximation ratios we employ a charging scheme, where we split the $LP$ into several solutions restricted to the clusters of the $\cF$-clustering tree $T$ we are building. Fix some cluster $H$ of $G$ and a level $t$. By $LP_t^H$ denote the solution of \textsc{LP-HC-$\cF$} restricted to the set of vertices $H$ at level $t$ which is done by discarding all variables $y_{uv}^z$, such that $\brc{u,v}\cap V\br{H}\neq \brc{u,v}$ or $z\neq t$. We call each such restricted part $LP_t^H$ a \emph{guide}. For any target problem $\Pi$ with LP relaxation \textsc{LP-$\Pi$} (on the same graph fragment and terminal set), we say that $LP_t^H$ is a \emph{valid guide} for $\Pi$ if the restricted values from $LP_t^H$ satisfy all constraints of \textsc{LP-$\Pi$}. In other words, a valid guide is simply a feasible LP point for the target problem, inherited from \textsc{LP-HC-$\cF$}. This restriction is also performed on the cost function and this is reflected in the accordingly defined quantities $\COST\br{LP_t^H} = \sum_{uv\in E\br{H}} c\br{u,v} \cdot y_{uv}^t$. We have the following lemma:

    \begin{lemma}\label{lem:relaxation-decomposition}
        Let $T$ be any $\cF$-clustering tree. For any cluster $H \in \cC^T$, let $r\br{H}$ be the weight of the heaviest child cluster of $H$ if it exists or $0$ otherwise. Moreover, fix some constant $g\in \mathbb{N}$. Then:
        \begin{align*}
            \sum_{H\in \cC^T}\sum_{t=r\br{H}+1}^{w\br{V\br{H}}} c\br{LP_{\fl{t/g}}^H}\leq g\cdot \OPT_{LP}
        \end{align*}
    \end{lemma}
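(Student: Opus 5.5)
The approach is to exchange the order of summation and charge each edge-level variable $y_e^z$ of the LP to the clusters of $T$ that use it. Write $c\br{LP_z^H}=\sum_{e\in E(H)} c(e)\, y_e^z$, with the convention that level $0$ contributes nothing, or is bounded by level $1$ via refinement. The left-hand side then becomes
\[
\sum_{e\in E}\sum_{z\ge 0} c(e)\, y_e^z \cdot N(e,z),
\]
where $N(e,z)$ counts pairs $(H,t)$ with $H\in\cC^T$, $e\in E(H)$, $r(H)<t\le w(V(H))$ and $\fl{t/g}=z$. It therefore suffices to show $N(e,z)\le g$ for every edge $e$ and level $z$. Summing $c(e)y_e^z$ over all $e$ and $z\in[w(G)]$ gives exactly $\OPT_{LP}$; note that $\fl{t/g}\le t\le w(G)$, so no out-of-range levels appear.

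To bound $N(e,z)$, fix the edge $e=uv$ and look at the clusters of $T$ containing both endpoints. These form a chain $H_1\supsetneq H_2\supsetneq\dots\supsetneq H_k$ from the root down to the deepest cluster containing $e$. The key observation is that the intervals $I(H_j)=(r(H_j),w(V(H_j))]$ are pairwise disjoint along this chain. Indeed, $H_{j+1}$ is a child of $H_j$, so $w(V(H_{j+1}))\le r(H_j)$; hence the interval of $H_{j+1}$ lies entirely below the interval of $H_j$. (If $r(H)=w(V(H))$, as for a leaf or a trivial single-child split, the interval is empty, which is harmless.) Consequently each integer $t$ is counted for at most one cluster $H$ containing $e$.

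It follows that $N(e,z)\le\spr{\brc{t\in\mathbb{N}\colon \fl{t/g}=z}}=g$. This gives the bound $\sum_{e}\sum_z c(e)y_e^z\cdot g\le g\cdot\OPT_{LP}$, as required.

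The main obstacle I expect is purely bookkeeping: being careful that $c\br{LP^H_z}$ only counts edges with both endpoints in $H$, so the charging really lands on the chain described above. The second point is the $z=0$ term produced by the floor. For this I would either declare $LP_0^H$ to be taken at level $1$ (using the refinement constraint $y^0\ge y^1$ only as a convention) or note that the paper's indexing treats level $0$ as cost-free. If needed, that term is absorbed by slightly adjusting the count, since the $t\in\{1,\dots,g-1\}$ values still number at most $g$ per edge.
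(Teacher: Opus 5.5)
Your argument is the paper's own: exchange the order of summation, use that along the chain of clusters containing a fixed edge the intervals $(r(H),w(V(H))]$ are pairwise disjoint (a child's weight is at most $r$ of its parent), then group the indices $t$ by $\lfloor t/g\rfloor$. The bound $N(e,z)\le g$ is correct.

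The gap is level $z=0$, and neither of your patches closes it. First, your parenthetical is wrong. A leaf has no child, so $r(H)=0$ and its interval is $(0,w(V(H))]$, not empty. Every edge inside a leaf cluster is therefore charged for $t=1,\dots,g-1$ at level $0$. Under the paper's convention $y^0:=y^1$ these are level-$1$ charges. The same edge can also receive $g$ charges from $t=g,\dots,2g-1$, so level $1$ is charged up to $2g-1$ times. What your count actually proves is $g\cdot\OPT_{LP}+(g-1)\,c(LP^G_1)\le (2g-1)\,\OPT_{LP}$, not $g\cdot\OPT_{LP}$. The ``absorbing'' step fails because those $g-1$ values of $t$ land on a level that already has its own $g$ charges. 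Refinement gives $y^0\ge y^1$, which bounds level $0$ from below, not above. And the paper does not treat level $0$ as cost-free; it explicitly sets $y^0:=y^1$.

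The paper's proof has exactly the same defect. Its ``simple reindexing'' $\sum_{i=0}^{w(G)}c(LP^G_{\lfloor i/g\rfloor})\le g\sum_{i\ge 1}c(LP^G_i)$ has $2g$ copies of $c(LP^G_1)$ on the left and only $g$ on the right. The constant $g$ is not just unproved, it can fail. Take $g=2$, unit weights, and a path of $k$ vertex-disjoint triangles joined by unit-cost bridges, with triangle edges of cost $C\gg n$. Use the feasible LP point with value $(3/5,1/5,1/5)$ on each triangle at level $1$, $1/5$ per triangle edge at level $2$, and bridges only afterwards; its cost is optimal up to $O(nk)$. Take the tree whose root cuts all bridges and the two light edges of every triangle. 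The heavy edges then lie in leaves, are charged three times at level $1$ against two on the right, and the left-hand side exceeds twice the LP cost by roughly $Ck/5-O(nk)$.

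The clean repair is to restrict the inner sum to $t\ge g$ (for $g=2$ it suffices to sum over internal clusters, where $r(H)\ge 1$), or to settle for the constant $2g-1$. Either suffices for Theorem~\ref{thm:main-approximation}, where $g=2$ and only $t\ge 2w(V(H))/3+1\ge 2$ occurs, so level $0$ never appears. With the restriction $t\ge g$ your proof is complete as written, since then $N(e,0)=0$ and $N(e,z)\le g$ for all $z\ge 1$.
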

    \begin{proof}
        For convenience, define $y_{u}^0=y_{u}^1$ for all $u\in V$. The value $c\br{LP_{0}}$ is defined analogously as before.
        We start by showing that 
        \[
        \sum_{H\in \cC^T}\sum_{t=r\br{H}+1}^{w\br{V\br{H}}} c\br{LP_{\fl{t/g}}^H}\leq \sum_{i=0}^{w\br{G}} c\br{LP_{\fl{i/g}}^G}.
        \]
        It suffices to show that for any fixed edge $uv\in E$ and level $t\in [w\br{G}]$, the term $c\br{u,v}\cdot y_{uv}^t$ is counted at most once in the above sum. Assume for contradiction that there are two clusters $H$ and $B$ both contributing this term at level $t$. By definition of hierarchical clustering, one of $H$, $B$ must be an ancestor of the other; without loss of generality assume $H$ is an ancestor of $B$. Then
        \[
            t \in \left(r\br{H}, w\br{V\br{H}}\right]\quad \text{and} \quad
            t \in \left(r\br{B}, w\br{V\br{B}}\right].
        \]
        Since $H$ is an ancestor of $B$, we have $r\br{H} \geq w\br{V\br{B}}.$
        Hence the above intervals are disjoint, a contradiction. The bound follows by summing over all edges and levels.

        Now, observe that by simple reindexing, we have that 
        \[\sum_{i=0}^{w\br{G}} c\br{LP_{\fl{i/g}}^G}\leq g\cdot\sum_{i=1}^{w\br{G}} c\br{LP_{i}^G}.\]
        But the right-hand side is exactly $g\cdot \OPT_{LP}$, which completes the proof.
    \end{proof}

    \begin{lemma}\label{lem:lp-level-bound}
        Let $H$ be any cluster and let $d, b\in\mathbb{N}^+$. Then
        \[
        w\br{V\br{H}}\cdot c\br{LP_{\fl{w\br{V\br{H}}}/b}^H}\leq d\cdot \sum_{t=\br{1-1/d}\cdot w\br{V\br{H}}+1}^{w\br{V\br{H}}} c\br{LP_{\fl{t/b}}^H}.
        \]
    \end{lemma}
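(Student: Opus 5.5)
The plan is to use the refinement constraints $y_e^t\le y_e^{t-1}$. Every $y^t$ that enters a guide comes from the LP, and the restriction to $H$ is edgewise, so monotonicity survives rounding and restriction. Hence the map $s\mapsto c\br{LP_s^H}$ is non-increasing in $s$. Throughout, write $W=w\br{V\br{H}}$ and read the left-hand side as $W\cdot c\br{LP^H_{\fl{W/b}}}$.

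\textbf{Step 1.} For every integer $t\le W$ we have $\fl{t/b}\le \fl{W/b}$. Monotonicity then gives, for each $t$ in the summation range,
\[
c\br{LP^H_{\fl{t/b}}}\ \ge\ c\br{LP^H_{\fl{W/b}}}.
\]
If $\fl{t/b}=0$ occurs, we use the convention $y^0=y^1$ from the proof of Lemma~\ref{lem:relaxation-decomposition}, under which monotonicity still holds at index $0$.

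\textbf{Step 2.} Count the terms. The summation runs over the integers $t$ with $(1-1/d)W<t\le W$, so it has at least $W/d$ terms. If $(1-1/d)W$ is not an integer, read the lower limit as $\fl{(1-1/d)W}+1$; the count is then $W-\fl{(1-1/d)W}\ge W/d$. Summing the inequality from Step 1 over these terms gives
\[
\sum_{t} c\br{LP^H_{\fl{t/b}}}\ \ge\ \frac{W}{d}\, c\br{LP^H_{\fl{W/b}}}.
\]
Multiplying both sides by $d$ yields the claim.

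\textbf{Main obstacle.} The only real care is in two places. First, the floor and lower-limit bookkeeping: the number of terms must be at least $W/d$ under the intended reading of a non-integer lower limit. Second, we must confirm that monotonicity in the level index holds for the guides actually used, namely the restricted and rounded $\hat y$. This follows because doubling preserves $\hat y_e^t\le \hat y_e^{t-1}$ edgewise, exactly as checked in Lemma~\ref{lem:lp-rounding}.
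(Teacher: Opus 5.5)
Your proposal is correct and takes the same route as the paper. The paper proves the lemma in one line, from monotonicity of the level cuts ($c\br{LP^H_s}$ is non-increasing in $s$) together with the implicit count of at least $W/d$ terms; you fill in exactly the bookkeeping it leaves out. That includes the floor in the index, the index-$0$ convention, and reading the range as the integers $t$ with $(1-1/d)W<t\le W$, which is the reading that guarantees at least $W/d$ terms and matches how the lemma is later combined with $r\br{H}\le 2w\br{V\br{H}}/3$.
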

    \begin{proof}
    The inequality follows from monotonicity of level cuts: as $t$ decreases, more edges belong to the cut.
    \end{proof}

    Fix a subgraph $H$ of $G$ with weight $r=w\br{V\br{H}}$ and a level $t\in [r]$. Define
    \[
        X_1:=\brc{v\in V\br{H}\colon \hat{x}_v^t=1}, \qquad X_0:=\brc{v\in V\br{H}\colon \hat{x}_v^t=0}.
    \]

    The next two lemmas allows us to decompose a solution to \textsc{LP-HC-$\cF$} into several instances of \textsc{LP-$p_{\cF}$-P} and \textsc{LP-$\rho$-SEP} with terminal sets $X_1$ and $X_0$, respectively.
    \begin{lemma}\label{lem:lp-fes}
        Let $H$ be cluster that does not belong to $\cF$, with weight $r=w\br{V\br{H}}$ and let $t\leq r$ be some fixed level. Then $LP_r^H$ is a valid solution for the \textsc{$p_{\cF}$-P} for $X=X_1$.
    \end{lemma}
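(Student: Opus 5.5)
The plan is a direct feasibility check. We verify that the level-$t$ values of the rounded solution $\br{\hat{\mathbf{x}},\hat{\mathbf{y}}}$ from Lemma~\ref{lem:lp-rounding}, restricted to $H$, satisfy every constraint of \textsc{$p_{\cF}$-P-LP} on the instance $(H,X_1)$. I read the level index of the guide as the level $t$ that defines $X_1$ (so the guide is $LP_t^H$, and in the algorithm $t=r/2$). If the index really is meant to be $r$, the same argument works verbatim with $t=r$, since $X_1$ must then be taken at level $r$. The reason is the direction of the refinement constraints: $\hat{y}^r_e\le \hat{y}^t_e$ for $t\le r$. A level-$r$ guide therefore cannot be expected to cover the terminals selected at a lower level $t$, so the levels of the guide and of $X_1$ must coincide.

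\textbf{Step 1 (nonnegativity and objective).} The variables of \textsc{$p_{\cF}$-P-LP} are only the edge-cut variables $y_e$ for $e\in E\br{H}$. We set $y_e:=\hat{y}^t_e$, which is nonnegative. The objective $\sum_{e\in E(H)} c(e)\,y_e$ then equals exactly $c\br{LP_t^H}$, by the definition of a guide.

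\textbf{Step 2 (cover constraints).} Fix $v\in X_1$, so that $\hat{x}^t_v=1$, and take a cover constraint $R'\in\cR_v$ of the instance on $H$. By the $p_{\cF}$-P constraints~\eqref{cover_constraint_ILP}, which remain valid after rounding by Lemma~\ref{lem:lp-rounding}, every $R\in\cR_v$ in $G$ satisfies $\sum_{e\in R}\hat{y}^t_e\ge \hat{x}^t_v=1$.

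\textbf{Step 3 (main obstacle: transferring constraints from $G$ to $H$).} The difficulty is that the constraint families $\cR_v$ are defined for the whole graph $G$, while the instance lives on $H$. I would argue as follows. $H$ is a cluster obtained by deleting all edges of $\delta(H)$, which have already been cut at an ancestor. Hence a structure witnessing the violation of $p_{\cF}(v)$ in $H$ (for example a cycle through $v$, or a path from $v$ to a far vertex $u$) is also such a structure in $G$, and it uses only edges of $E(H)$. Consequently each $R'\in\cR_v^H$ equals, or contains, some $R\in\cR_v^G$ with $R\subseteq E(H)$.

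To make this rigorous in the abstract setting, I would use that $\cR_v$ is defined as the exact family of necessary covers. For the subgraph $H$, the predicate $p_{\cF}(v)$ in $H\setminus E'$ coincides with $p_{\cF}(v)$ in $G\setminus\br{E'\cup\delta(H)\cup (E\setminus E(H))}$. Therefore the covers for $H$ are exactly the sets $R\cap E(H)$ for those $R\in\cR_v^G$ such that $R$ is not already hit by the edges removed outside $H$. For such an $R$, the part outside $E(H)$ consists of edges that are cut or absent in the instance. I would then check that in the relevant cases $R\subseteq E(H)$, so that $\sum_{e\in R'}y_e\ge\sum_{e\in R}\hat{y}^t_e\ge 1$. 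For trees this holds because a cycle through $v$ that leaves $H$ would use an edge of the cut $\delta(H)$. For the diameter class, paths from $v$ to far vertices that stay inside $H$ are paths of $G$. If the abstract framework does not guarantee this inclusion, I would add it explicitly as the natural hereditary assumption on $\cR_v$, noting that it holds for both $\cT$ and $\cD_d$.

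\textbf{Step 4 (conclusion).} Having verified all cover constraints for every $v\in X_1$, together with nonnegativity, the restricted point is a feasible solution of \textsc{$p_{\cF}$-P-LP} on $(H,X_1)$, that is, a valid guide.
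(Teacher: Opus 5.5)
Your route is the same as the paper's. The paper also just observes that for $v\in X_1$ (i.e.\ $\hat{x}^t_v=1$) every cover constraint of \textsc{LP-$p_{\cF}$-P} already appears among the level-$t$ constraints~\eqref{cover_constraint_ILP} of \textsc{LP-HC-$\cF$}, that these survive the rounding of Lemma~\ref{lem:lp-rounding}, and that the extra spreading constraints are harmless. Reading the index as $t$ rather than $r$ matches how the lemma is used later: Lemma~\ref{lem:child-cluster-valid-or-small} invokes it for $LP_t^H$. Your Step~3 is more careful than the paper, which simply asserts that the cover constraints for $H$ are among those for $G$ and never isolates the hereditary property you identify.

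There is, however, one step that fails as written. In Steps~1 and~4 you take $y_e:=\hat{y}^t_e$ and conclude feasibility from nonnegativity alone. But \textsc{$p_{\cF}$-P-LP} relaxes an ILP whose variables are edge-cut indicators, so its domain also imposes $y_e\le 1$ (as in \textsc{LP-FES} and \textsc{LP-MMC}). Meanwhile $\hat{y}^t_e=2y^t_e$ can exceed $1$, and \textsc{LP-HC-$\cF$} imposes only $y^t_e\ge 0$ to begin with. So the restricted point need not lie in the feasible region of the target LP, and your claim that its objective \emph{equals} $c\br{LP_t^H}$ is not the right statement.

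The paper handles this explicitly by truncating: set $y_e:=\min\{1,\hat{y}^t_e\}$. Every cover constraint $\sum_{e\in R}y_e\ge 1$ stays satisfied. If some $e\in R$ was truncated, it alone contributes $1$; otherwise the sum is unchanged. The cost becomes at most $c\br{LP_t^H}$, which is all that the $\alpha(n)$-bound in Lemma~\ref{lem:recursive-cost} needs. With this one-line fix your argument is complete.
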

    \begin{proof}
        In $LP_r^H$ there are additional metric/spreading constraints for vertices $v$ such that $p_{\cF}^t(v)=0$. This is acceptable since additional constraints yield a stricter relaxation. Now consider any $v\in V$. If $p_{\cF}^t(v)=0$, then the $p_{\cF}$-P constraint is always satisfied, which corresponds to the fact that $v$ does not need to belong to a $\cF$-cluster. If $p_{\cF}^t(v)=1$, then the $p_{\cF}$-P constraint is satisfied since it is explicitly included in the \textsc{LP-HC-$\cF$} formulation. Moreover, since $v\in X_1$, we have that all $p_{\cF}$-P constraints in the \textsc{LP-$p_{\cF}$-P} formulation are included in the \textsc{LP-HC-$\cF$} formulation. 
        We note, that whenever there exists a variable $y_e^t>1$, we can always set it $1$ while plugging it into the \textsc{LP-$p_{\cF}$-P} formulation (without increase in the cost), since the constraints will still be satisfied\footnote{Here, the technical issue is that during our partial rounding step some values of $y_e^t$ might exceed $1$.}.
        Therefore, the claim follows.
    \end{proof}
    \begin{lemma}\label{lem:lp-ksep}
        Let $H$ be a non $\cF$-cluster of weight $r=w\br{V\br{H}}$ and let $t<r$. Then $LP_t^H$ is a valid guide for the weighted spreading-metrics \textsc{LP-$\rho$-SEP} with $\rho=t/r$ for $X=X_0$.
    \end{lemma}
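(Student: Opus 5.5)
We need to check that the restricted level-$t$ values $\{\hat y_e^t\}_{e\in E(H)}$ satisfy every constraint of \textsc{LP-$\rho$-SEP} on the graph $H$ with terminals $X_0$ and $\rho=t/r$. Nonnegativity is immediate. Since $\rho\cdot w\br{H}=(t/r)\cdot r=t$, the spreading constraint of \textsc{LP-$\rho$-SEP} reads
\[
\sum_{u\in S} w(u)\cdot \dist_{\hat y,V(H)}^{t}(v,u)\geq w\br{S}-t \quad \forall S\subseteq V(H),\ v\in S\cap X_0 .
\]
This is exactly the shape of constraint~\eqref{spreading_constraint_ILP} once the predicate has been rounded to $\hat x_v^t=0$. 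So the plan is to take any $v\in X_0$ and $S\subseteq V(H)$ with $v\in S$, and invoke Lemma~\ref{lem:lp-rounding}. That lemma says $(\hat{\mathbf x},\hat{\mathbf y})$ is feasible for \textsc{LP-HC-$\cF$}, so with $\hat x_v^t=0$ we get $\sum_{u\in S} w(u)\cdot\dist_{\hat y,V}^{t}(v,u)\ge w(S)-t$. Such $S$ is also a subset of $V$ containing $v$, so the constraint of the full LP is available.

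The one genuine issue is that the guide lives on $H$ only. Distances in \textsc{LP-$\rho$-SEP} must be measured in $H$ with lengths $\hat y^t$ on $E(H)$, whereas the constraint of \textsc{LP-HC-$\cF$} uses distances in the whole graph $G$. The key step is the inequality
\[
\dist_{\hat y,V(H)}^{t}(v,u)\ \geq\ \dist_{\hat y,V}^{t}(v,u)\quad\text{for all } u,v\in V(H).
\]
This holds because every $v$--$u$ path in $H$ is also a path in $G$ with the same length, so restricting to a subgraph can only increase shortest-path distances. Chaining this with the inequality from Lemma~\ref{lem:lp-rounding} gives the required constraint for every $v\in X_0$ and $S\subseteq V(H)$.

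Two remarks finish the argument. First, vertices in $X_1$ are not terminals for this instance, so no constraint needs checking for them. This matches the fact that their right-hand side in \textsc{LP-HC-$\cF$} vanishes. Second, the hypothesis $t<r$ only ensures $\rho=t/r\in(0,1)$, as the definition of \textsc{$\rho$-SEP} requires. The fact that $H$ is not an $\cF$-cluster is not needed for feasibility itself; it only guarantees the instance is meaningful in the recursive algorithm. Values of $\hat y_e^t$ exceeding $1$ need no care here, since \textsc{LP-$\rho$-SEP} imposes no upper bound on $y_e$.

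The main obstacle, though a mild one, is the distance-monotonicity step under restriction to $H$. One must also be careful that the weights used are the original vertex weights and the level is the absolute level $t$, so that the right-hand side matches $\rho\, w(H)=t$ exactly.
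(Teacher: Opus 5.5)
Your proposal is correct and matches the paper's proof: use the level-$t$ spreading constraint of \textsc{LP-HC-$\cF$} for $v\in X_0$ (where the rounded predicate is $0$), note that $\rho\cdot w(V(H))=t$, and observe that restricting shortest-path distances to $H$ can only increase them. Your extra remarks (that \textsc{LP-$\rho$-SEP} has no upper bound on $y_e$, and that the non-$\cF$ hypothesis plays no role in feasibility) are accurate but not needed for the argument.
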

    \begin{proof}
        The statement is analogous to Lemma~\ref{lem:lp-fes}, but with the complement terminal set.
        First, the $p_{\cF}$-P constraints are irrelevant for \textsc{$\rho$-SEP} feasibility and can only make the relaxation stronger.

        Second, consider the vertices in $X_0$ (i.e., those with $p_{\cF}^t(v)=0$). For every such $v$ and every $S\subseteq V(H)$ with $v\in S$, the spreading-metrics constraint of \textsc{LP-HC-$\cF$} at level $t$ gives
        \[
            \sum_{u\in S} w(u)\cdot \dist_{y,V}^{t}(v,u) \geq w\br{S}-t,
        \]
        where $\dist_{y,V}^{t}$ denotes shortest-path distances taken in the whole graph $G$.

        On the other hand, in \textsc{LP-$\rho$-SEP} for the cluster $H$, the relevant distances are the ones measured inside $H$, namely $\dist_{y,V(H)}^{t}$. Since $H$ is a subgraph of $G$, every $u$-$v$ path contained in $H$ is also a path in $G$, whereas paths leaving $H$ are no longer available; restricting the graph to the cluster $H$ can therefore only remove paths and never shorten the distances. Consequently,
        \[
            \dist_{y,V(H)}^{t}(v,u) \geq \dist_{y,V}^{t}(v,u) \qquad \text{for all } u,v\in V(H).
        \]
        Combining the two inequalities, for every $S\subseteq V(H)$ and every $v\in S\cap X_0$ we obtain
        \[
            \sum_{u\in S} w(u)\cdot \dist_{y,V(H)}^{t}(v,u) \geq \sum_{u\in S} w(u)\cdot \dist_{y,V}^{t}(v,u) \geq w\br{S}-t.
        \]
        With $t$ and $V\br{H}$ as the metric, this is exactly the weighted spreading requirement of \textsc{LP-$\rho$-SEP} corresponding to $\rho=t/r$. Hence, all constraints of \textsc{LP-$\rho$-SEP} are satisfied, and $LP_t^H$ is a feasible \textsc{LP-$\rho$-SEP} guide for $X_0$.
    \end{proof}

    \section{The Algorithm}

    In this section, we use the \textsc{$\rho$-SEP} parameters $\rho=1/2$ and $\rho_0=2/3$, and we assume that there exists an algorithm that for any solution for the $\textsc{LP-$\rho$-SEP}$, returns a set of edges of cost at most $\beta\br{n}$ times the \textsc{LP} cost (See Theorem \ref{thm:ksep-rounding} for a procedure with a guarantee $\beta\br{n}=\bigo\br{\log n}$). The main algorithm which includes the preprocessing stage is given in Algorithm \ref{alg:main-procedure} and the recursive sub-procedure, which performs the rounding step is given in Algorithm \ref{alg:recursive-procedure}. The main procedure computes the partially rounded \textsc{LP} solution $(\hat{\mathbf{x}},\hat{\mathbf{y}})$ and then invokes the recursive routine on the entire graph $G$. The recursive routine only builds the $\cF$-clustering tree, using the guides prepared by the main procedure. In particular, it uses the partially rounded \textsc{LP} solution $(\hat{\mathbf{x}},\hat{\mathbf{y}})$ to find a set of edges $E_{\cF}$ to cut for satisfying the $\cF$-predicate, and it uses the same level-restricted \textsc{LP} solution as a \textsc{$\rho$-SEP} guide to find a set of edges $E_{\rho}$ reducing component weights. The union of these two sets of edges is then used to partition the cluster $H$ into smaller clusters, which are recursively processed.

    \begin{algorithm}[H]
\caption{Main \textsc{HC-$\cF$} Procedure}
\label{alg:main-procedure}
\label{alg:main-hac}
\DontPrintSemicolon
\KwIn{A weighted graph $G=(V,E,c,w)$}
\KwOut{A clustering tree for $G$}
\tcp{Preprocessing}
Solve \textsc{LP-HC-$\cF$} on $G$ and obtain a fractional solution $(\mathbf{x},\mathbf{y})$\;
\ForEach{$t\in [w\br{G}]$ and $i\in V$}{
    Set $\hat{x}_i^t \gets 1$ if $x_i^t \geq 1/2$, and set $\hat{x}_i^t \gets 0$ otherwise\;
}
\ForEach{$t\in [w\br{G}]$ and $uv\in E$}{
    $\hat{y}_{uv}^t\gets 2\cdot y_{uv}^t$\;
}
\tcp{Clustering phase}
\Return{\textsc{Recursive-HC-$\cF$}($G,\hat{\mathbf{x}},\hat{\mathbf{y}})$}\;
\end{algorithm}

    \begin{algorithm}[H]
\caption{Recursive \textsc{HC-$\cF$} Procedure}
\label{alg:recursive-procedure}
\label{alg:recursive-hac}
\DontPrintSemicolon
\KwIn{A connected subgraph $H$, together with the guides $(\hat{\mathbf{x}},\hat{\mathbf{y}})$ restricted to $V(H)$}
\KwOut{A clustering tree for $H$}
\If{$H\in \cF$}{
    \Return{a clustering tree $T$ consisting of a single node labeled by $H$}\;
}
Set $r\gets w\br{V\br{H}}$ and $t\gets r/2$\;
Set $X_1\gets \brc{i\in V\br{H}: \hat{x}_i^t = 1}$\;
Run the \textsc{$p_{\cF}$-P} rounding algorithm on \textsc{LP-$p_{\cF}$-P} guide on $H$ at level $t$ from $(\hat{\mathbf{x}},\hat{\mathbf{y}})$ with $X=X_1$ and let $E_{\cF}\subseteq E(H)$ be the returned cut\;

Set $X_0\gets \brc{i\in V\br{H}: \hat{x}_i^t = 0}$\;
Run the weighted \textsc{$\rho$-SEP} rounding algorithm with $\rho=1/2$ and $\rho_0=2/3$ on \textsc{LP-$\rho$-SEP} guide on $H$ at level $t$ from $\hat{\mathbf{y}}$ with $X=X_0$ and let $E_{\rho}\subseteq E(H)$ be the returned cut\;

$E_H\gets E_{\cF}\cup E_{\rho}$\;

Initialize a root node $u_H$ of the output tree $T$ and label $u_H$ by $E_H$\;
\ForEach{$A\in H\setminus E_H$}{
    Recursively compute $T_A\gets $\textsc{Recursive-HC-$\cF$}($A,\hat{\mathbf{x}},\hat{\mathbf{y}})$\;
        Attach $T_A$ as a child of $u_H$ in $T$
}
\Return{$T$}
\end{algorithm}

    \begin{theorem}\label{thm:main-approximation}
        If $\cF$ is a $\alpha\br{n}$-well-behaved class of graphs, then there exists an $\bigo\br{\alpha\br{n}+\beta\br{n}}$-approximation algorithm for the \textsc{HC-$\cF$} problem.
    \end{theorem}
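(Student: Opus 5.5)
The argument has three parts: feasibility and polynomial running time of Algorithm~\ref{alg:main-procedure}, a charging of each node's cost to the rounded LP, and a global summation via Lemma~\ref{lem:relaxation-decomposition} together with $\OPT_{LP}\le\OPT_{ILP}=\OPT_{\cF}$ (Lemma~\ref{lem:ilp-exact}).

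\textbf{Feasibility and running time.} The LP is solved in polynomial time by Lemma~\ref{lem:lp-polytime}, and Lemma~\ref{lem:lp-rounding} gives a feasible partially integral point $(\hat{\mathbf{x}},\hat{\mathbf{y}})$ of cost at most $2\OPT_{LP}$. At a node $H\notin\cF$ with $r=w(V(H))$ and $t=r/2$ (rounded down), two cuts are taken.
\begin{itemize}
\item $E_{\cF}$: by Lemma~\ref{lem:lp-fes}, $\hat{y}^t$ restricted to $H$, capped at $1$, is a feasible \textsc{LP-$p_{\cF}$-P} point for $X_1$. The rounding algorithm of well-behavedness then yields $E_{\cF}$ with $c(E_{\cF})\le\alpha(n)\,c(\widehat{LP}^H_t)$, after which every $v\in X_1$ satisfies $p_{\cF}$.
\item $E_\rho$: by Lemma~\ref{lem:lp-ksep}, $\hat{y}^t$ restricted to $H$ is a valid guide for \textsc{LP-$\rho$-SEP} with $\rho=t/r=1/2$ and $X=X_0$. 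Theorem~\ref{thm:ksep-rounding} then gives $E_\rho$ with $c(E_\rho)\le\beta(n)\,c(\widehat{LP}^H_t)$, and every component containing a vertex of $X_0$ has weight at most $2r/3$.
\end{itemize}
Progress follows. A child component either contains an $X_0$ vertex, and then has weight at most $2r/3$, or consists only of $X_1$ vertices. In the latter case all its vertices satisfy $p_{\cF}$ (the predicate is preserved under further cutting), so the child lies in $\cF$. Hence every child either lies in $\cF$ (a leaf) or has weight at most $2r/3$. This gives recursion depth $O(\log w(G))$, a valid $\cF$-clustering tree, and polynomial time.

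\textbf{Per-node charge.} The cost of node $H$ is
\[
r\cdot c(E_H)\le(\alpha+\beta)\cdot r\cdot c(\widehat{LP}^H_{\lfloor r/2\rfloor})=2(\alpha+\beta)\cdot r\cdot c(LP^H_{\lfloor r/2\rfloor}).
\]
We must spread this over levels that are "owned" by $H$ in the sense of Lemma~\ref{lem:relaxation-decomposition}, namely levels $t'\in(r(H),r]$, where $r(H)\le 2r/3$ is the heaviest child's weight. Set $g=2$. Apply Lemma~\ref{lem:lp-level-bound} with $b=2$ and $d=3$: the levels $t'\in((2/3)r,r]$ are contained in $(r(H),r]$, and by monotonicity $c(LP^H_{\lfloor t'/2\rfloor})\ge c(LP^H_{\lfloor r/2\rfloor})$ for $t'\le r$. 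Therefore
\[
r\cdot c(LP^H_{\lfloor r/2\rfloor})\le 3\sum_{t'=2r/3+1}^{r}c(LP^H_{\lfloor t'/2\rfloor})\le 3\sum_{t'=r(H)+1}^{r}c(LP^H_{\lfloor t'/2\rfloor}).
\]
Two checks are needed here. First, monotonicity must hold for the restricted guides; it follows from the refinement constraints $y^t\le y^{t-1}$. Second, the tree $T$ to which Lemma~\ref{lem:relaxation-decomposition} is applied must be the algorithm's own output tree. This is fine: that lemma holds for any $\cF$-clustering tree, since its proof only uses laminarity. Leaves cost nothing.

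\textbf{Summation.} Summing over all $H\in\cC^T$ and applying Lemma~\ref{lem:relaxation-decomposition} with $g=2$ gives
\[
\COST_G(T)\le 6(\alpha+\beta)\cdot 2\,\OPT_{LP}=O(\alpha(n)+\beta(n))\,\OPT_{\cF}.
\]

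\textbf{Main obstacle.} The delicate step is the per-node charge. The level used by the algorithm, $r/2$, lies below the interval $(r(H),r]$ that $H$ owns, so a factor-$2$ rescaling of level indices (the $\lfloor t/g\rfloor$ mechanism) is needed. It must also be guaranteed that the heaviest child weighs at most $2r/3$ even for children made entirely of $X_1$ vertices. If some such child is not in $\cF$ and is heavy, the bound breaks. The fix is the observation above: every vertex of an all-$X_1$ child satisfies $p_{\cF}$, so by the definition of an $\cF$-predicate the child belongs to $\cF$. It is then a leaf of cost $0$, and its weight can be excluded from $r(H)$ in the charging, which requires a slight modification of Lemma~\ref{lem:relaxation-decomposition}'s disjointness argument, since that argument only needs intervals of non-leaf descendants.
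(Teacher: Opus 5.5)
Your proposal is correct and follows essentially the same route as the paper: the lemma that every child is in $\cF$ or has weight at most $2r/3$, then the per-node charge via Lemma~\ref{lem:lp-level-bound} with $d=3$, $b=2$, then the summation via Lemma~\ref{lem:relaxation-decomposition} with $g=2$, ending at $12(\alpha(n)+\beta(n))\,\OPT_{LP}$. You make two points more precise than the paper does: you carry the rounding factor $2$ explicitly, and you restrict $r(H)$ to non-leaf children. The second is genuinely needed, because an all-$X_1$ child lies in $\cF$ but may weigh more than $2r/3$, or even exactly $r$; the paper only asserts that ``the largest relevant child weight satisfies $r(H)\le 2w(V(H))/3$''.
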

    \begin{proof}
        It is straightforward to see that the algorithm runs in polynomial time and returns a valid $\cF$-clustering tree. It remains to bound the cost of the solution. We start with the following lemma, which shows that each cluster produced by the recursive procedure is either in $\cF$ or has weight at most $2/3$ of its parent cluster. This allows us to charge the cost of each cluster to levels of the \textsc{LP} solution that are not used by any of its children, and hence bound the total cost of the solution by a constant factor times the cost of the \textsc{LP} solution.
    \begin{lemma}\label{lem:child-cluster-valid-or-small}
        Fix a recursive call on a cluster $H\notin\cF$, let $t=w\br{V\br{H}}/2$, let $E_{\cF},E_{\rho}$ be the cuts returned by the two rounding steps in Algorithm~\ref{alg:recursive-procedure}, and define $E_H:=E_{\cF}\cup E_{\rho}$. Then every connected component $A$ of $H\setminus E_H$ satisfies at least one of the following:
        \begin{enumerate}
            \item $A\in \cF$,
            \item $w\br{V\br{A}}\leq 2w\br{V\br{H}}/3$.
        \end{enumerate}
    \end{lemma}
    \begin{proof}
        Recall that $X_1=\brc{i\in V\br{H}\colon \hat x_i^t=1}$ and $X_0=\brc{i\in V\br{H}\colon \hat x_i^t=0}$.

        By Lemma~\ref{lem:lp-fes}, the guide $LP_t^H$ is feasible for \textsc{$p_{\cF}$-P} on terminal set $X_1$. Hence, by the $\alpha\br{n}$-well-behaved rounding guarantee, the removal of $E_{\cF}$ gives that every vertex of $X_1$ satisfies the $\cF$-predicate.

        Now consider the additional deletion of $E_{\rho}$. Since in a well-behaved formulation the predicate is encoded only via $p_{\cF}$-P constraints, deleting extra edges cannot destroy already-satisfied constraints. Therefore every vertex in $X_1$ still satisfies the $\cF$-predicate in $H\setminus E_H$.

        Fix a connected component $A$ of $H\setminus E_H$. If $A\cap X_0=\emptyset$, then $A\subseteq X_1$, so every vertex of $A$ satisfies the $\cF$-predicate in $A$. By definition of a $\cF$-predicate, this implies $A\in\cF$.

        Otherwise, $A\cap X_0\neq\emptyset$. Let $C$ be the connected component of $H\setminus E_{\rho}$ containing $A$. Then $C\cap X_0\neq\emptyset$. By Lemma~\ref{lem:lp-ksep}, $LP_t^H$ is a feasible \textsc{LP-$\rho$-SEP} guide for $X_0$ with $\rho=1/2$, and by the bicriteria guarantee with $\rho_0=2/3$, every connected component of $H\setminus E_{\rho}$ containing a vertex of $X_0$ has weight at most $2w\br{V\br{H}}/3$. Hence $w\br{V\br{A}}\leq w\br{V\br{C}}\leq 2w\br{V\br{H}}/3.$
        This proves the claim.
    \end{proof}

    \begin{lemma}\label{lem:recursive-cost}
        Fix some cluster $H$ for which the recursive procedure was called. Then, 
        \[
        w\br{V\br{H}}\cdot c\br{E_H}\leq 3\br{\alpha\br{n}+\beta\br{n}}\cdot \sum_{t=2w\br{V\br{H}}/3+1}^{w\br{V\br{H}}} c\br{LP_{\fl{t/2}}^H}.
        \]
    \end{lemma}
    \begin{proof}
        By Lemma~\ref{lem:child-cluster-valid-or-small}, every child component $A$ created from $H$ is either already in $\cF$ or has weight at most $2w\br{V\br{H}}/3$. Since recursive calls continue only on non-$\cF$ children, the largest relevant child weight satisfies $r\br{H}\leq 2w\br{V\br{H}}/3$, so levels from $2w\br{V\br{H}}/3+1$ up to $w\br{V\br{H}}$ are available in the charging scheme.

        Firstly, consider the contribution of the edges in $E_{\cF}$ to the cost. Since this cut is done in a cluster of weight $w\br{V\br{H}}$, we have that it is bounded by $w\br{V\br{H}}\cdot c\br{E_{\cF}}$. By applying Lemma \ref{lem:lp-level-bound} with $d=3$ and $b=2$ we get that:
        \begin{align*}
              w\br{V\br{H}}\cdot c\br{E_{\cF}}
              &\leq w\br{V\br{H}}\cdot \alpha\br{n}\cdot c\br{LP_{\fl{w\br{V\br{H}}/2}}^H}\\
              &\leq 3\alpha\br{n}\cdot
              \sum_{t=2w\br{V\br{H}}/3+1}^{w\br{V\br{H}}} c\br{LP_{\fl{t/2}}^H}.
        \end{align*}
        Secondly, the contribution of all $E_{\rho}$ is bounded analogously (using the $\beta\br{n}$-guarantee and Lemma \ref{lem:lp-level-bound} with $d=3$ and $b=2$):
        \[
               w\br{V\br{H}}\cdot c\br{E_{\rho}}
               \leq 3\beta\br{n}\cdot
               \sum_{t=2w\br{V\br{H}}/3+1}^{w\br{V\br{H}}} c\br{LP_{\fl{t/2}}^H}.
        \]
    Since we know that $c\br{E_H}= c\br{E_{\cF}\cup E_{\rho}}\leq c\br{E_{\cF}}+c\br{E_{\rho}}$, the claim follows.
    \end{proof}
    Armed with the above lemma we are ready to sum the contribution of all clusters to the cost of the solution. Let $T$ be the $\cF$-clustering tree returned by the algorithm and let $\cC_T$ be the set of all non-$\cF$-clusters for which the recursive procedure was called. For each such cluster $H$, let $E_H:=E_{\cF}\cup E_{\rho}$ be the set of edges cut in the recursive call on $H$. We have:
    \begin{align*}
        \COST_G\br{T} &= \sum_{H \in \cC_T}\COST_G\br{T, u_H}
        \\&= \sum_{H \in \cC_T} w\br{V\br{H}}\cdot c\br{E_H}\\
        &\leq 3\br{\alpha\br{n}+\beta\br{n}}\cdot\sum_{H \in \cH} \sum_{t=2w\br{V\br{H}}/3+1}^{w\br{V\br{H}}}  c\br{LP_{\fl{t/2}}^H}\\
        &\leq 3\br{\alpha\br{n}+\beta\br{n}}\cdot\sum_{H \in \cH} \sum_{t=r\br{H}+1}^{w\br{V\br{H}}} \br{c\br{LP_{\fl{t/2}}^H}}\\
        &\leq 6\cdot\br{\alpha\br{n}+ \beta\br{n}}\cdot \OPT_{LP}\\
        &\leq 12\cdot\br{\alpha\br{n}+\beta\br{n}}\cdot \OPT_{ILP}.
    \end{align*}
    where the first inequality is due to Lemma \ref{lem:recursive-cost}, the second inequality follows by plugging $r\br{H}\leq 2w\br{V\br{H}}/3$, the third inequality is due to Lemma \ref{lem:relaxation-decomposition} applied with $g=2$, and the last inequality follows from Lemma \ref{lem:lp-rounding}. 
    \end{proof}

\section{Applying the framework}\label{sec:applying-framework}

Now, we showcase that the graph classes of interest are indeed $\alpha\br{n}$-well-behaved for suitable $\alpha\br{n}$. To do so, we invoke some well-known LP-based approximation algorithms for the \textsc{FES} and \textsc{MMC} problems and we show that they certify the well-behavedness of the relevant classes. 
We start by invoking a well-known \textsc{LP} relaxation \textsc{LP-FES} of the \textsc{FES} problem:
    \begin{align}
        \min \quad &\sum_{ij\in E} c\br{i,j} \cdot y_{ij}\\
        \text{subject to} \quad &\sum_{ij\in C} y_{ij} \geq 1, \quad \forall \text{ cycles } C \text{ going through vertex } i\in X,\\
        &0\leq y_{ij} \leq 1, \quad \forall ij\in E.
    \end{align}
    The authors of \cite{ApproximatingMinimumFeedbackSetsAndMulticutsInDirectedGraphs} show that there exists a rounding procedure for this \textsc{LP} with an $\bigo\br{\log n\cdot\log\log n}$ loss of quality:
    \begin{theorem}[\cite{ApproximatingMinimumFeedbackSetsAndMulticutsInDirectedGraphs}]\label{thm:fes-rounding}
        There exists a polynomial time algorithm which given an instance of the \textsc{FES} problem and a solution to the above \textsc{LP-FES}, finds $E'\subseteq E$ such that $G\setminus E'$ contains no cycles going through any vertex in $X$ and $\sum_{ij\in E'} c\br{i,j} \leq \bigo\br{\log n \cdot \log\log n}\cdot \sum_{ij\in E} c\br{i,j} \cdot y_{ij}$.
    \end{theorem}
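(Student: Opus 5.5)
The theorem is quoted from~\cite{ApproximatingMinimumFeedbackSetsAndMulticutsInDirectedGraphs}, so I would not re-derive a rounding. The plan is to show that the stated undirected instance, together with its \textsc{LP-FES} point, can be fed to the directed subset feedback set algorithm of that paper with only a constant loss. That algorithm takes a fractional solution of the directed subset feedback LP (every directed cycle through a terminal has length at least $1$ in the arc lengths $y$). It returns an integral arc set of cost $\bigo\br{\log \tau^*\log\log\tau^*}$ times the fractional cost $\tau^*$, hence $\bigo\br{\log n\log\log n}$ once costs are polynomially bounded. Its proof is the spreading-metric divide-and-conquer with Seymour's lemma.

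\textbf{Step 1 (the gadget).} The naive bidirection of each edge $\{i,j\}$ into arcs $(i,j),(j,i)$ is useless. It creates directed $2$-cycles $i\to j\to i$ that correspond to no undirected cycle, and when $i\in X$ these would force $y_{ij}\ge 1/2$, a constraint the undirected LP does not satisfy. So the first step is an orientation gadget in which every directed cycle through a terminal projects onto a genuine undirected cycle through that terminal that uses no edge twice.

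\begin{itemize}
\item First choice: split each terminal. Replace each $x\in X$ by $x^{\mathrm{in}}$ and $x^{\mathrm{out}}$, and route all traffic between them through a structure that forbids immediate reversal along the same edge.
\item Fallback: pass to the edge-incidence (line-graph-like) digraph. Its vertices are the pairs (edge, direction), and arcs join consecutive non-reversing edge traversals. Each undirected edge carries its cost $c(i,j)$ on a single ``core'' arc. Terminals are the traversals incident to $X$.
\end{itemize}

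In both versions, traversing an undirected edge $e$ of $G$ in either direction must use the core arc of $e$ exactly once.

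\textbf{Step 2 (LP transfer).} Set the length of each core arc to $y_{ij}$ and every auxiliary arc to $0$ with infinite cost. Then check that every directed terminal cycle has length at least the length of some undirected cycle through $X$, hence at least $1$. The transferred point is therefore feasible for the directed relaxation at the same cost. Applying the cited algorithm gives an arc set of cost $\bigo\br{\log n\log\log n}\cdot\sum c(i,j)y_{ij}$, which never uses infinite-cost arcs. Mapping the chosen core arcs back to edges gives $E'$.

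\textbf{Step 3 (correctness).} Show that if $G\setminus E'$ still had a cycle through some $x\in X$, then traversing it would give a directed terminal cycle avoiding all removed core arcs, a contradiction. The gadget has polynomial size, so $\log$ of its size is $\bigo\br{\log n}$.

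\textbf{Main obstacle.} The delicate step is Step 1: designing the non-reversal gadget so that directed cycles correspond exactly to simple undirected cycles through $X$. In particular, closed walks that revisit vertices must still contain a cycle through a terminal. If the gadget cannot be made exact, I would instead rerun the region-growing argument of~\cite{ApproximatingMinimumFeedbackSetsAndMulticutsInDirectedGraphs} directly on the symmetric metric $\dist_y$. This means checking their single property used by Seymour's lemma: a cut of cost $\bigo\br{\log(\cdot)\log\log(\cdot)}$ times the enclosed fractional volume separates the terminal cycles.
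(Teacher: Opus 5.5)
Your objection to naive bidirection is correct: every edge at a terminal becomes a directed $2$-cycle, forcing $y_{ij}\ge 1/2$. It also bears on the paper's own justification, which consists only of the citation and the remark that the directed theorem is more general. However, your repair leaves a genuine gap. Step~1 is never actually achieved, and the gadgets you sketch make the feasibility claim of Step~2 false.

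\textbf{The line-digraph gadget fails.} In the non-backtracking edge-incidence digraph, take two disjoint triangles $a_1a_2a_3$ and $b_1b_2b_3$ joined by the path $a_1\,x\,b_1$, with $X=\{x\}$. No cycle of $G$ passes through $x$, so $y\equiv 0$ is \textsc{LP-FES}-feasible and the optimum is $0$. Yet the closed walk $x\,a_1a_2a_3a_1\,x\,b_1b_2b_3b_1\,x$ never reverses and uses every directed edge (and every port at $x$) exactly once. It is therefore a simple directed cycle through terminal traversals, of length $0$. With unit costs the directed instance then has positive optimum while the undirected one has optimum $0$, so no constant-factor bookkeeping can rescue the reduction.

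\textbf{The other variants fail too.} Forbidding reversal only at terminals fails already on the path $u_1\,x\,u_2$, via the walk $x\,u_1\,x\,u_2\,x$. A single core arc shared by both traversal directions cannot forbid reversal at all, since it cannot remember which side it was entered from. The underlying obstruction is that ``this closed walk contains a cycle through $x$'' is not a local property of undirected walks. In a digraph, by contrast, every closed walk through $v$ contains a directed cycle through $v$.

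\textbf{The fallback needs that same directed fact.} The region growing you would rerun depends on it: it yields $\dist(v,u)+\dist(u,v)\ge 1$ for every $u$ on a closed walk through a terminal $v$. Hence an out- or in-ball of radius $1/2$ around $v$ is proper, and there is a radius interval of constant length to grow over. In the symmetric metric this fails. Take the triangle $x,a,b$ with $y_{xa}=y_{xb}=\varepsilon$ and $y_{ab}=1-2\varepsilon$; this point is feasible, and every vertex is within $2\varepsilon$ of every other. Set $c(xa)=c(xb)=1$ and let $c(ab)\to 0$. Then every proper ball, around any vertex, has $xa$ or $xb$ on its boundary, so its cut costs at least $1$. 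Meanwhile the LP value tends to $2\varepsilon$ and the optimum (cut $ab$) tends to $0$.

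So the fallback is not a matter of checking a single property. A new ingredient is required, for instance handling long edges separately or a different rounding for the undirected cycle LP. As written, the proposal does not establish the statement.
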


    Note that originally, this theorem is stated for the more general problem of finding an feedback arc set of a directed graph. For undirected graphs, constant factor approximation algorithm, also based on mathematical programming is known \cite{ConstantFactorApproximationForSubsetFeedbackSetProblemsViaANewLPRelaxation}. However, the ILP formulation used is not $\alpha\br{n}$-well-behaved, and thus it cannot be incorporated in our framework\footnote{Specifically, the issue is that this ILP formulation relies on the fact that the size of $\spr{X}$ is known a-priori, which is not the case in our setting since the number of interesting vertices at each level is not known in advance.}.

    Similarly, the following \textsc{LP} relaxation \textsc{LP-MMC} for the \textsc{MMC} is well-studied:
    \begin{align}
        \min \quad &\sum_{ij\in E} c\br{i,j} \cdot y_{ij}\\
        \text{subject to} \quad &\sum_{ij\in P} y_{ij} \geq 1, \quad \forall \text{ paths } P \text{ from } s_i \text{ to } t_i, i\in [k],\\
        &0\leq y_{ij} \leq 1, \quad \forall ij\in E.
    \end{align}
    In \cite{ApproximateMaxFlowMinMultiCut1993} an algorithm with $\bigo\br{\log n}$-approximation loss using this LP is given:
    \begin{theorem}[\cite{ApproximateMaxFlowMinMultiCut1993}]\label{thm:mmc-rounding}
        There exists a polynomial time algorithm which given an instance of the \textsc{MMC} problem and a solution to the above \textsc{LP-MMC}, finds $E'\subseteq E$ such that $G\setminus E'$ contains no path from $s_i$ to $t_i$ for any $i\in [k]$ and $\sum_{ij\in E'} c\br{i,j} \leq \bigo\br{\log k}\cdot \sum_{ij\in E} c\br{i,j} \cdot y_{ij}$.
    \end{theorem}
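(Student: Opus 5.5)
This theorem is cited from \cite{ApproximateMaxFlowMinMultiCut1993}, so the plan is to recall the Garg--Vazirani--Yannakakis region-growing argument. Take a feasible solution $y$ of \textsc{LP-MMC} and read the values $y_{ij}$ as edge lengths. The path constraints say exactly that $\dist_y(s_i,t_i)\ge 1$ for every $i\in[k]$. Set $F=\sum_{ij\in E} c(i,j)\, y_{ij}$ for the LP volume. Rounding proceeds by repeatedly carving balls around source terminals. Pick an $s_i$ still connected to $t_i$ in the current graph. Grow a ball $B(s_i,r)$ in the $y$-metric with radius $r<1/2$. Cut every edge leaving the ball, remove the ball from the graph, and repeat until no pair remains connected.

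Feasibility is straightforward. The radius is below $1/2$ and $\dist_y(s_i,t_i)\ge 1$, so no ball contains both endpoints of a pair. Each removed ball is separated from the rest by the cut edges. Any $s_j$--$t_j$ path therefore either leaves the ball holding one of its endpoints, crossing a cut edge, or lies in a later graph where the same argument applies.

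The heart of the proof, and the step I expect to be the main obstacle, is choosing the radius so that the cut cost is paid for by the volume. Define the ball volume as
\[
V(r)=\frac{F}{k}+\sum_{e\subseteq B(s_i,r)} c(e)\,y_e+\sum_{e\text{ partially inside}} c(e)\,(\text{fraction inside}).
\]
Let $C(r)$ be the cost of the boundary edges. Then $dV/dr = C(r)$ almost everywhere. If $C(r)>2\ln(k+1)\,V(r)$ held on all of $[0,1/2)$, integrating would give $V(1/2)>V(0)\,e^{\ln(k+1)}=(k+1)F/k$. This is impossible, since $V(1/2)\le F+F/k$. Hence some $r<1/2$ satisfies $C(r)\le 2\ln(k+1)V(r)$. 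Such an $r$ can be found in polynomial time, because it suffices to check the radii equal to the distances from $s_i$ to vertices.

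To finish, I would charge the cut of each ball to its volume. The balls are disjoint in edge volume, since each region is deleted before the next is grown. There are at most $k$ balls, so the seed terms $F/k$ sum to at most $F$. The total cost is therefore at most $2\ln(k+1)\cdot 2F = \bigo(\log k)\cdot F$, which is the claimed bound.
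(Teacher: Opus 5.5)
Your proposal is correct and is the Garg--Vazirani--Yannakakis region-growing argument that the paper relies on by citation, including the $4\ln(k+1)$ constant. The paper adapts the same technique, with terminal-centred balls, for \textsc{$\rho$-SEP} in Appendix~\ref{app:terminal-extension}. Two small fixes are needed. First, for $dV/dr=C(r)$ to hold, the partial term for a boundary edge $uv$ with $u$ inside should be $c(uv)\bigl(r-\dist_y(s_i,u)\bigr)$, i.e.\ cost times the \emph{length} of $uv$ inside the ball, not the fraction. Second, $V$ jumps upward when a vertex enters the ball, so phrase the integration step as $\ln V(1/2)-\ln V(0)\ge\int_0^{1/2}C(r)/V(r)\,dr$, which holds because $V$ is monotone.
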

    As a corollary of this theorem we obtain that:
    \begin{corollary}
        There exists a polynomial time algorithm which given an instance of the \textsc{P-$\cD_d$} problem and a solution to the above \textsc{LP-P-$\cD_d$}, finds $E'\subseteq E$ such that $G\setminus E'$ is a valid solution for \textsc{P-$\cD_d$} and $\sum_{ij\in E'} c\br{i,j} \leq \bigo\br{\log n}\cdot \sum_{ij\in E} c\br{i,j} \cdot y_{ij}$.
    \end{corollary}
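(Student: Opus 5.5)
The plan is to reduce \textsc{P-$\cD_d$} to \textsc{MMC} exactly as described in the problem definitions, and then apply Theorem~\ref{thm:mmc-rounding}. Given an instance $(G,X,\dist,d)$ of \textsc{P-$\cD_d$}, let the terminal pairs be
\[
\mathcal{P}=\brc{(v,u)\colon v\in X,\ u\in V,\ \dist\br{v,u}>d}.
\]
The natural \textsc{LP-P-$\cD_d$} is the cover-type LP with one constraint $\sum_{e\in P}y_e\ge 1$ for every path $P$ joining a pair in $\mathcal{P}$, together with $0\le y_e\le 1$. I first check that this family of path constraints is exactly $\bigcup_{v\in X}\cR_v$. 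Indeed, $p_{\cF}(v)$ holds in $G\setminus E'$ iff no $u$ with $\dist(v,u)>d$ lies in the component of $v$. This in turn holds iff every $v$--$u$ path for such $u$ contains a cut edge. Hence \textsc{LP-P-$\cD_d$} coincides with \textsc{LP-MMC} on the instance $(G,\mathcal{P})$.

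Next, given a feasible solution $y$ of \textsc{LP-P-$\cD_d$}, I will clip any values above $1$ to $1$, as in Lemma~\ref{lem:lp-fes}. This keeps all constraints and does not increase the cost. The resulting point is feasible for \textsc{LP-MMC}. Applying the algorithm of Theorem~\ref{thm:mmc-rounding} yields $E'$ separating every pair in $\mathcal{P}$ with
\[
c(E')\le \bigo\br{\log k}\cdot\sum_e c(e)y_e .
\]
Since $E'$ separates every pair in $\mathcal{P}$, every terminal $v\in X$ has all vertices of its component within distance $d$, so $E'$ is a valid \textsc{P-$\cD_d$} solution.

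Finally, I bound $k=\spr{\mathcal{P}}\le \spr{X}\cdot n\le n^2$, so $\log k\le 2\log n$, which gives the claimed $\bigo\br{\log n}$ factor. Polynomial running time follows because $\mathcal{P}$ is computable in polynomial time (the metric is given on $V$), and because the LP has a separation oracle: for each pair in $\mathcal{P}$, run a shortest-path computation under lengths $y$ and check whether the distance is below $1$.

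The only step needing care is the identification of \textsc{LP-P-$\cD_d$} with \textsc{LP-MMC}. In particular, the predicate must be the terminal-centred one defined earlier (all vertices of the component within distance $d$ of $v$), not a true diameter condition; with that definition the equivalence is immediate. Everything else is bookkeeping.
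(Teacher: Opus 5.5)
Your proposal is correct and is the argument the paper intends. The paper gives no separate proof; it just invokes Theorem~\ref{thm:mmc-rounding} together with the reduction stated in the problem definitions. Those are exactly your steps: take the terminal pairs $(v,u)$ with $v\in X$ and $\dist(v,u)>d$, note that \textsc{LP-P-$\cD_d$} then coincides with \textsc{LP-MMC}, round with the Garg--Vazirani--Yannakakis procedure, and use $k\le \spr{X}\cdot n\le n^2$ to turn $\bigo(\log k)$ into $\bigo(\log n)$. Your remark that the predicate is the terminal-centred one, rather than a true diameter condition, makes explicit a point the paper leaves implicit.
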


    We get the following well-behavedness corollaries for the three classes of interest:
    \begin{corollary}\label{cor:tree-well-behaved}
        The class $\cT$ of trees is $\bigo\br{\log n\cdot\log\log n}$-well-behaved.
    \end{corollary}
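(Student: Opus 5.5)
To prove Corollary~\ref{cor:tree-well-behaved}, we check the five conditions of $\alpha\br{n}$-well-behavedness one by one, with $p_{\cT}\br{v}$ chosen as the predicate ``no cycle of $G\setminus E'$ passes through $v$''. The ILP will be \textsc{LP-FES} with integrality restored, and the rounding procedure will be the one of Theorem~\ref{thm:fes-rounding}.

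First we verify that this is an $\cT$-predicate. If every vertex of $G\setminus E'$ lies on no cycle, then $G\setminus E'$ is acyclic. Hence every connected component is a tree, i.e.\ belongs to $\cT$. The integer version of \textsc{LP-FES} uses only edge-cut variables $y_e$, which gives condition~1. Its objective is $\sum_e c(e)y_e$, which gives condition~3. For condition~2 we take, for each $v\in X$, the family $\cR_v$ to be the edge sets of all cycles through $v$. Then $p_{\cT}\br{v}$ holds in $G\setminus\{e:y_e=1\}$ iff every such cycle contains a cut edge, i.e.\ $\sum_{e\in C}y_e\ge 1$ for all $C\in\cR_v$. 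The box constraints $0\le y_e\le 1$ are domain constraints, which the definition allows.

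The step needing real argument is condition~4, the separation oracle. Given $y\ge 0$ and $v\in X$, we must decide whether some cycle $C$ through $v$ has $y(C)<1$. For each edge $vu$ incident to $v$, we compute the shortest $u$--$v$ path in $G-vu$ under lengths $y$, and add $y_{vu}$. Every cycle through $v$ uses some edge $vu$ together with a $u$--$v$ path avoiding that edge. Conversely, a shortest such path plus $vu$ contains a cycle through $v$: the path is simple, ends at $v$, and does not use $vu$, so closing it yields a simple cycle. With nonnegative lengths this cycle has length at most the computed value. Thus the minimum over incident edges equals the minimum cycle length through $v$, and we return the violating cycle whenever that minimum is below $1$. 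This takes polynomially many Dijkstra calls. In a multigraph, parallel edges are handled naturally by the same argument.

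Condition~5 is exactly Theorem~\ref{thm:fes-rounding}. Given a feasible LP point, it returns $E'$ such that no cycle of $G\setminus E'$ passes through $X$, with cost at most $\bigo\br{\log n\cdot\log\log n}$ times the LP value. We view the undirected instance as a directed one by replacing each edge with two opposite arcs of the same $y$-value; this transfer is where care is needed. Directed cycles of length two, $u\to v\to u$, do not correspond to undirected cycles, and would force extra cuts. So we must either rely on the undirected form in which the paper states Theorem~\ref{thm:fes-rounding}, or check that the cited rounding works directly on undirected cycle covers. Its region-growing argument uses only the cover constraints over cycles through $X$, so it applies verbatim. Taking $\alpha\br{n}=\bigo\br{\log n\cdot\log\log n}$ then completes the proof.
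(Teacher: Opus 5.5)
Your argument is correct and is the paper's argument made explicit. The paper only says the corollary follows from Theorem~\ref{thm:fes-rounding} and the definition, whereas you check each condition, including the shortest-cycle-through-$v$ separation oracle that the paper leaves implicit. Your worry about bidirected $2$-cycles is legitimate, but your aside that the directed rounding then ``applies verbatim'' is unsupported: directed ball-growing uses the fact that every closed walk through a terminal contains a directed cycle through it, and this fails for an undirected walk that retraces an edge. The proof should therefore rest, as the paper's does, on Theorem~\ref{thm:fes-rounding} in the undirected form in which it is stated.
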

    \begin{proof}
        This follows directly from Theorem~\ref{thm:fes-rounding} and the definition of well-behavedness.
    \end{proof}

    \begin{corollary}\label{cor:diameter-well-behaved}
        For every fixed $d\in\mathbb{R}_{\geq 0}$, the class $\cD_d$ of graphs of diameter at most $d$ is $\bigo\br{\log n}$-well-behaved.
    \end{corollary}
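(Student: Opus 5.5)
We verify the five conditions of $\alpha\br{n}$-well-behavedness for $\cF=\cD_d$, using the $\cF$-predicate from the \textsc{MMC} bullet in Section~\ref{subsec:problems-definitions}. For a terminal $v$, $p_{\cF}\br{v}$ holds iff every $u$ in the component of $v$ satisfies $\dist\br{v,u}\le d$. We first check that this is a genuine $\cF$-predicate: if it holds at every vertex of a component $H$, then every pair $u,v\in V(H)$ has $\dist\br{u,v}\le d$, so $H\in\cD_d$. We then reduce \textsc{$p_{\cF}$-P} with terminal set $X$ to \textsc{MMC} with pair set
\[
\cS_X=\brc{(v,u)\colon v\in X,\ u\in V,\ \dist\br{v,u}>d}.
\]
A set $E'$ satisfies $p_{\cF}$ at all of $X$ exactly when it separates every pair of $\cS_X$.

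For conditions 1--3 we take the ILP underlying \textsc{LP-MMC} for the pairs $\cS_X$. Its variables are the edge-cut variables $y_e$ and its objective is $\sum_e c(e)y_e$. For $v\in X$, let $\cR_v$ consist of the edge sets of all $v$--$u$ paths with $\dist\br{v,u}>d$. Then the constraints are exactly the cover constraints $\sum_{e\in R}y_e\ge1$, $R\in\cR_v$. Moreover, $p_{\cF}\br{v}$ holds iff every such path contains a cut edge, which matches the required equivalence.

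For condition 4, the separation oracle proceeds as follows. For each $v\in X$ it computes shortest-path distances from $v$ under lengths $y$. For each $u$ with $\dist\br{v,u}>d$, it checks whether the computed distance is at least $1$; if not, the shortest $v$--$u$ path is a violated constraint. The bound constraints $0\le y_e\le1$ are checked directly. This takes polynomial time because there are at most $n^2$ pairs and the metric $\dist$ is given.

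For condition 5 we invoke Theorem~\ref{thm:mmc-rounding} on the LP point. This yields $E'$ separating all pairs with cost at most $\bigo\br{\log k}$ times the LP value. Since $k=\spr{\cS_X}\le n^2$, we have $\log k=\bigo\br{\log n}$. This is precisely the preceding corollary, giving $\alpha\br{n}=\bigo\br{\log n}$.

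The main obstacle is the subtlety in the predicate. The predicate is defined relative to the terminal $v$, and in the framework's use, vertices of $X_0$ may lie in the same component as terminals of $X_1$. We must make sure that components containing only $X_1$-vertices are in $\cD_d$, which is the only place the predicate property is used, in Lemma~\ref{lem:child-cluster-valid-or-small}. If some $u\in A$ had $\dist\br{v,u}>d$, the pair $(v,u)$ would have been separated, so the component claim holds. We also need $\dist$ measured in the fixed given metric rather than in the current subgraph, so that cutting extra edges, as in $E_{\rho}$, preserves satisfied predicates. We take the metric to be given on $V$ and independent of the cut, as in the problem definition, so this monotonicity holds.
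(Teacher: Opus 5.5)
Your proposal is correct and takes essentially the paper's route: reduce \textsc{$p_{\cD_d}$-P} to \textsc{Min-Multicut} on the pairs $(v,u)$ with $v\in X$ and $\dist\br{v,u}>d$, then apply the Garg--Vazirani--Yannakakis rounding (Theorem~\ref{thm:mmc-rounding}), which gives $\log k=\bigo\br{\log n}$ since $k\le n^2$. The paper's proof is a one-line appeal to exactly this; your explicit checks of the cover-type constraints, the shortest-path separation oracle, and the monotonicity under edge deletion (which relies on $\dist$ being a fixed metric on $V$) fill in details the paper leaves implicit.
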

    \begin{proof}
        This follows from Theorem~\ref{thm:mmc-rounding} and the above corollary for \textsc{P-$\cD_d$}.
    \end{proof}


    As immediate consequences of Theorem~\ref{thm:main-approximation}, we obtain approximation guarantees for the corresponding \textsc{HC-$\cF$} problems.

    \begin{corollary}
        There exists an $\bigo\br{\log n \cdot \log\log n}$-approximation algorithm for the \textsc{HC-$\cT$} problem.
    \end{corollary}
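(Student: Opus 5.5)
The corollary follows by plugging Corollary~\ref{cor:tree-well-behaved} into Theorem~\ref{thm:main-approximation}. First I will fix the $\cT$-predicate. For a vertex $v$, $p_{\cT}(v)$ holds iff no cycle of $G\setminus E'$ passes through $v$. If this holds for every vertex, then no cycle exists at all, so every connected component is a tree. Hence this is a valid $\cF$-predicate for $\cF=\cT$, and $\cT$ is closed under subgraphs and recognizable in polynomial time, as the framework requires. The corresponding \textsc{$p_{\cT}$-P} problem is exactly \textsc{FES} with terminal set $X$.

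Next I will check the five well-behavedness conditions for \textsc{LP-FES}, which is what Corollary~\ref{cor:tree-well-behaved} asserts.
\begin{enumerate}
\item The variables are edge-cut indicators.
\item The constraints are cover-type. Take $\cR_v$ to be the edge sets of cycles through $v$; then $p_{\cT}(v)$ holds iff every such cycle contains a cut edge.
\item The objective is $\sum_e c(e)y_e$.
\item A separation oracle exists: for each $v\in X$ and each edge $vu$, compute a shortest $u$--$v$ path in $G-vu$ under lengths $y$. Adding $y_{vu}$ gives the cheapest cycle through $v$ using $vu$, and a violated constraint exists iff some such value is below $1$.
\item The rounding of Theorem~\ref{thm:fes-rounding} gives $\alpha(n)=\bigo(\log n\log\log n)$. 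It is stated for directed graphs, so I apply it to the bidirected version, replacing each edge by two opposite arcs of the same length.
\end{enumerate}

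The main obstacle is the bidirected reduction in condition 5. Each undirected edge $uv$ creates a spurious directed 2-cycle $u\to v\to u$, which has no undirected counterpart. The fix is to set the length and the cost of both arcs of $uv$ equal to $y_{uv}$, and to cut the undirected edge whenever either arc is cut. A genuine undirected cycle through $v$ then corresponds to a directed cycle of the same length, so the undirected LP solution is feasible for the directed constraints on those cycles, and cutting edges this way costs at most twice the directed cost, which is absorbed in the $\bigo$. The 2-cycles remain a problem, since the undirected LP does not force $y_{uv}\ge 1/2$ and so the constraints on them are generally not satisfied. I would therefore work with a variant of the Seymour-type / Even et al.\ region-growing rounding that only needs to break cycles through terminals of length at least three. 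If that adaptation fails, I would instead rely on the paper's citation of~\cite{ApproximatingMinimumFeedbackSetsAndMulticutsInDirectedGraphs} for the undirected subset version.

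Finally, $\beta(n)=\bigo(\log n)$ by Theorem~\ref{thm:ksep-rounding}. Theorem~\ref{thm:main-approximation} then gives ratio $\bigo(\alpha(n)+\beta(n))=\bigo(\log n\log\log n+\log n)=\bigo(\log n\cdot\log\log n)$.
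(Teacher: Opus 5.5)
Your argument is correct and takes the paper's route: the paper obtains the corollary by combining Corollary~\ref{cor:tree-well-behaved}, which it derives directly from Theorem~\ref{thm:fes-rounding}, with Theorem~\ref{thm:main-approximation}, and $\bigo(\log n\log\log n+\log n)=\bigo(\log n\cdot\log\log n)$. Your extra checks go beyond the paper: the shortest-cycle separation oracle makes condition~4 explicit, which the paper leaves implicit, and your 2-cycle concern about bidirecting is legitimate, but your region-growing variant is only sketched, so your proof, like the paper's, ultimately rests on Theorem~\ref{thm:fes-rounding} as stated for the undirected \textsc{LP-FES}.
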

    \begin{proof}
        Combine Corollary~\ref{cor:tree-well-behaved} with Theorem~\ref{thm:main-approximation}.
    \end{proof}

    \begin{corollary}
        There exists an $\bigo\br{\log n}$-approximation algorithm for the \textsc{HC-$\cD_d$} problem.
    \end{corollary}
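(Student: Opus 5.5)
The corollary should follow by combining Corollary~\ref{cor:diameter-well-behaved} with Theorem~\ref{thm:main-approximation}, in the same way as the corollary for \textsc{HC-$\cT$} just above. Corollary~\ref{cor:diameter-well-behaved} shows that $\cD_d$ is $\alpha(n)$-well-behaved with $\alpha(n)=\bigo(\log n)$. Theorem~\ref{thm:main-approximation} then yields an $\bigo(\alpha(n)+\beta(n))$-approximation for \textsc{HC-$\cD_d$}. By Theorem~\ref{thm:ksep-rounding} we have $\beta(n)=\bigo(\log n)$, so the total ratio is $\bigo(\log n+\log n)=\bigo(\log n)$.

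Before invoking this chain, I will confirm that the well-behavedness hypotheses really hold for the multicut-based encoding of $\cD_d$. For each terminal $v\in X$, the family $\cR_v$ is taken to be all edge sets of paths from $v$ to some $u$ with $\dist(v,u)>d$. These are cover-type constraints on edge-cut variables only, and the objective is linear, so the first three conditions are immediate.

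The fourth condition, a separation oracle for \textsc{LP-MMC}, is obtained by running a shortest-path computation from each terminal under the lengths $y$. Any terminal pair $(v,u)$ whose $y$-distance is below $1$ gives a violated path constraint.

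For the fifth condition, the number of terminal pairs satisfies $k\le n^2$, so the $\bigo(\log k)$ loss in Theorem~\ref{thm:mmc-rounding} is $\bigo(\log n)$. Values $y_e>1$ produced by the doubling in Lemma~\ref{lem:lp-rounding} are capped at $1$, as noted in Lemma~\ref{lem:lp-fes}; this keeps the point feasible and does not increase its cost.

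The only point needing care is that the predicate for $\cD_d$ implies membership in $\cD_d$. In Algorithm~\ref{alg:recursive-procedure}, a component $A$ with $A\subseteq X_1$ has every vertex within distance $d$ of every other vertex of $A$, so its diameter with respect to $\dist$ is at most $d$. The predicate is therefore a valid $\cF$-predicate, and since it is monotone under deleting edges, Lemma~\ref{lem:child-cluster-valid-or-small} applies. With all hypotheses confirmed, the corollary follows.
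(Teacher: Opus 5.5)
Your proposal is correct and follows the paper's own proof, which simply combines Corollary~\ref{cor:diameter-well-behaved} with Theorem~\ref{thm:main-approximation}, using $\beta(n)=\bigo(\log n)$ from Theorem~\ref{thm:ksep-rounding}. Your additional checks make explicit details the paper leaves implicit: that $k\le n^2$ turns the $\bigo(\log k)$ multicut loss into $\bigo(\log n)$, the shortest-path separation oracle, capping $y_e$ at $1$, and that the predicate, read with respect to the fixed metric $\dist$, certifies membership in $\cD_d$.
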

    \begin{proof}
        Combine Corollary~\ref{cor:diameter-well-behaved} with Theorem~\ref{thm:main-approximation}.
    \end{proof}

\section{Inapproximability Results}
In this section, we show that approximating \textsc{HC-$\cT$} and \textsc{HC-$\cD_d$} is essentially as hard as approximating \textsc{HC}. Since the latter problem cannot be approximated within any constant factor under the \textsc{Small Set Expansion Hypothesis} (\textsc{SSEH})~\cite{GraphExpansionAndTheUniqueGamesConjecture}, we obtain the same inapproximability results for \textsc{HC-$\cT$} and \textsc{HC-$\cD_d$}. From now on, we will assume that each graph $G$ has uniform edge costs and vertex weights.

\begin{theorem}[Inapproximability of \textsc{HC} \cite{ApproximateHierarchicalClusteringViaSparsestCutAndSpreadingMetrics}]\label{thm:hc-inapproximability}
    For every $\epsilon > 0$, it is \textsc{SSEH}-hard to distinguish between the following two cases for a given (uniform cost) instance $G$ of \textsc{HC}:
    \begin{itemize}
        \item \textbf{YES case:} There exists a clustering tree $T$ such that $\COST_G\br{T}\leq \epsilon \cdot n \cdot m$.
        \item \textbf{NO case:} For every linear arrangement $\pi$ of $V\br{G}$, $\COST_G\br{\pi}\geq c\sqrt{\epsilon} \cdot n \cdot m$, for some constant $c > 0$.
    \end{itemize}
\end{theorem}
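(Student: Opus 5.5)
The plan is to reduce from the SSEH-hardness of \textsc{Minimum Linear Arrangement}, in the form given by Raghavendra, Steurer and Tulsiani \cite{GraphExpansionAndTheUniqueGamesConjecture}. The reduction uses two structural facts relating \textsc{HC} to \textsc{MLA}: one handles the NO case and one handles the YES case. The instance $G$ is the same in both directions; only the certificates are translated.

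\textbf{NO case.} The key inequality is that every clustering tree $T$ of $G$ satisfies $\COST_G\br{T}\geq \COST_G\br{\pi_T}$ for some linear arrangement $\pi_T$. To get $\pi_T$, fix any left-to-right order of the children at every node of $T$ and read off the leaves (which are singletons). Then for every edge $uv$, the vertices strictly between $u$ and $v$ in $\pi_T$ all lie in the cluster $G_{\ell}$ of the lowest common ancestor $\ell$ of $u,v$. Hence $\spr{\pi_T(u)-\pi_T(v)}\leq \spr{V\br{G_\ell}}$. Summing this over edges and using the edge form of Dasgupta's objective from the footnote gives the inequality. So if every arrangement costs at least $c\sqrt{\epsilon}\, nm$, every clustering tree does too; this is exactly the NO statement.

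\textbf{YES case.} This is the step I expect to be the main obstacle. A small MLA value by itself does not immediately give a cheap clustering tree, so I would not start from an arrangement. Instead I would open the SSE-based construction and use its YES structure directly. For any prescribed $\delta>0$, one can arrange that $V$ partitions into blocks $B_1,\dots,B_k$, each of size at most $\delta n$, with at most $\epsilon' m$ edges crossing between blocks; here $\epsilon'$ is tied to $\epsilon$ by the hardness parameters. From these blocks I would build a two-phase tree. At the root, cut all crossing edges at once, at cost at most $n\cdot \epsilon' m$. Inside each block, use an arbitrary clustering tree: every edge is then cut inside a cluster of size at most $\delta n$, so the total is at most $\delta n\cdot m$. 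Choosing $\delta\leq \epsilon$ and adjusting constants so that $\epsilon'+\delta\leq \epsilon$ yields $\COST_G\br{T}\leq \epsilon\, nm$.

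The delicate point is the parameter bookkeeping. Small-set expansion gives blocks of size $\delta n$ with an $\epsilon$ fraction of edges crossing, while MLA soundness needs the $\sqrt{\epsilon}$ gap. I would therefore check that the RST instance simultaneously yields the block decomposition in the YES case and the $c\sqrt{\epsilon}\,nm$ lower bound on every arrangement in the NO case. Rescaling $\epsilon$ by a constant then absorbs all factors, and the two cases together give the stated SSEH-hardness.
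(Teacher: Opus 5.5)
Your proposal is correct and follows essentially the argument of the cited source. The paper gives no proof of its own and imports the statement from Charikar--Chatziafratis, whose proof takes the YES case from the Raghavendra--Steurer--Tulsiani partition into small low-expansion blocks (cut the crossing edges at the root, then split arbitrarily inside the blocks) and the NO case directly from the RST lower bound on arrangements. One small correction: the NO case as stated is already about arrangements, so it is RST soundness itself and does not need your leaf-ordering inequality $\COST_G(T)\geq \COST_G(\pi_T)$. That inequality is instead the remark the paper makes right after the theorem, and it is also what makes the YES and NO cases disjoint.
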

Note that since the objective of the
 \textsc{MLA} is a lower bound on the cost of any clustering tree \cite{ApproximateHierarchicalClusteringViaSparsestCutAndSpreadingMetrics}, the above result implies that in the \textbf{NO} case, every clustering tree $T$ has $\COST_G\br{T}\geq c\sqrt{\epsilon} \cdot n \cdot m$.

\subsection{Inapproximability of \textsc{HC-$\cD_d$}}

We start with the inapproximability of \textsc{HC-$\cD_d$}. For bounded diameter graphs, the proof is simpler than for trees. Here, we assume that the distance between any two vertices in a cluster is measured with respect to the number of edges in the original graph $G$\footnote{For arbitrary metrics it is easy to see that for any value of $d$, the problem is equivalent to \textsc{HC} when we set the distance between all vertices to be $d+\epsilon$, for any $\epsilon\geq 0$.}. Hence, in what follows we may assume that $d\in \mathbb{N}$.
In what follows, we show that by adding appropriate gadgets of diameter $d$ and large degree to every vertex of the input graph $G$, we can basically force any \textsc{HC-$\cD_d$} algorithm to treat such a gadget as a representative of the original vertex. 

\begin{theorem}[Inapproximability of \textsc{HC-$\cD_d$}]\label{thm:hac-d-inapproximability}
    For every $\delta > 0$, it is \textsc{SSEH}-hard to distinguish between the following two cases for a given (uniform cost) instance $H$ of \textsc{HC-$\cD_d$} for any given $d\geq 0$:
    \begin{itemize}
        \item \textbf{YES case:} There exists a $\cD_d$-clustering tree $T$ such that $\COST_H\br{T}\leq \delta \cdot f\br{ n, m, d}$.
        \item \textbf{NO case:} For every $\cD_d$-clustering tree $T$, $\COST_H\br{T}\geq c'\sqrt{\delta} \cdot f\br{ n, m, d}$, for some constant $c' > 0$.
    \end{itemize}
    For some function $f\br{n, m, d}$ (polynomial in $n$, $m$ and $d$).
\end{theorem}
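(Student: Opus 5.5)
We reduce from Theorem~\ref{thm:hc-inapproximability}. Given a uniform instance $G$ of \textsc{HC} with $n$ vertices and $m$ edges, we build $H$ by attaching to every vertex $v\in V(G)$ a gadget $Q_v$. The gadget is a large clique on $K$ vertices with $K$ a sufficiently large polynomial in $n$ (say $K=n^3$). For $d\ge 1$ it has diameter $1\le d$, and we identify one of its vertices with $v$. Each original edge $uv$ is replaced by a complete bipartite bundle of $K^2$ parallel-weight edges between $Q_u$ and $Q_v$ (or, equivalently, by $K^2$ edges in uniform-cost form). This keeps the instance uniform while making the ``image'' of $G$ in $H$ have $K^2 m$ edges and $Kn$ vertices. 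For $d=0$ the clique is replaced by a single vertex with weight multiplicity realised by copies, and only the scaling argument is needed. We take $f(n,m,d)=K^3\cdot n\cdot m$.

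\textbf{YES case.} Given a clustering tree $T$ for $G$ of cost at most $\epsilon nm$, we blow it up: each cluster $G_u$ becomes the union of the gadgets of its vertices, and each leaf becomes a single gadget $Q_v$, which lies in $\cD_d$. Every cut edge of $G$ becomes $K^2$ cut edges, and every cluster size is multiplied by $K$. The cost is therefore at most $K^3\epsilon nm=\delta f$ with $\delta=\epsilon$. Gadget edges are never cut, so they contribute nothing.

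\textbf{NO case.} This is the main obstacle. Take any $\cD_d$-clustering tree $T'$ of $H$. Because $d$ is fixed and the distance is measured in $H$, a leaf cluster has bounded diameter. I would rather argue via the linear-arrangement lower bound directly. From $T'$ we produce an ordering of $V(H)$ by concatenating leaves in DFS order. For an edge $xy$ whose lowest common ancestor cluster is $C$, the quantity $|\pi(x)-\pi(y)|\le w(C)$. Edges inside a leaf are not cut, but there we bound $|\pi(x)-\pi(y)|$ by the size of that leaf.

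Leaves are cliques-free in the sense needed, so I need that no leaf contains large pieces of two different gadgets connected through bundles. A leaf of diameter $\le d$ containing $\Omega(K)$ vertices of several gadgets still pays inside-leaf MLA cost. The key claim to prove is that, for $K\gg n$, any $T'$ can be converted without increasing cost by more than a constant factor into one where each gadget stays together until the end. The argument is the standard one: splitting a $K$-clique costs $\Omega(K)\cdot$(cluster size) per split vertex, which dominates the total $O(K^3 nm)$ unless almost all gadgets are intact, and a majority-vote assignment of gadget vertices to clusters loses little.

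Given intact gadgets, leaves of diameter $\le d$ in the blown-up graph correspond to vertex sets of $G$. Ordering gadgets by $\pi$ yields a linear arrangement $\sigma$ of $G$. Comparing, $\COST_H(T')\ge \Omega(K^3)\cdot \COST_G(\sigma)$ minus the intra-leaf contributions. Those contributions are charged by noting that the MLA-style bound also holds for edges inside leaves, since we count $|\pi(x)-\pi(y)|$ rather than tree cost. This is where I expect to need the extra care the paper alludes to, possibly choosing $K$ so that intra-leaf bundles are negligible. The result is $\COST_H(T')\ge c\sqrt{\epsilon}K^3nm/O(1)=c'\sqrt{\delta}f$.
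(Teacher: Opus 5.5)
There is a genuine gap, and it sits exactly where you flagged it. Your gadget is a clique of diameter $1$, and with complete bipartite bundles $H$ is just the blow-up $G[K_K]$. In it, $\dist_H(x,y)=\dist_G(u,v)$ for $x\in Q_u$, $y\in Q_v$, $u\neq v$. So the $\cD_d$ condition on leaves of $H$ becomes a $\cD_d$ condition on vertex sets of $G$, not the singleton condition of \textsc{HC}. For $d=1$ any clique of $G$ may already stay together in a leaf, and for $d\ge 2$ any closed neighbourhood may.

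Edges inside a leaf contribute nothing to $\COST_H(T')$, so nothing pays for their linear-arrangement stretch. Choosing $K$ large does not help, because intra-leaf bundles and cut bundles both scale by the same factor $K^3$. In the extreme case $\mathrm{diam}(G)\le d$, which the statement allows since $d$ is arbitrary, the single-node tree is a valid $\cD_d$-clustering tree of $H$ of cost $0$. Then the NO case fails outright.

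Your ``intact gadgets'' step is also unsupported. In $G[K_K]$ a gadget vertex has $K-1$ neighbours inside its gadget but $K\deg_G(v)$ in the bundles. So a gadget is not more tightly bound than its connections to the rest of $H$, and the counting you sketch does not force gadgets to stay together. Separately, ``parallel-weight edges'' clash with the uniform-cost assumption.

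The missing idea is to make the gadget's eccentricity from $v$ equal to $d$, rather than making the gadget small in diameter. The paper keeps the edges of $G$ as single edges and attaches to each $v$ a layered gadget $H^v$. It has $d$ layers of $n(G)$ vertices each, consecutive layers are joined completely, and $v$ is joined to the first layer. This gives two properties:
\begin{itemize}
\item Any cluster containing an intact $H^v$ together with a vertex outside it has diameter greater than $d$. Hence every leaf is a single gadget and every edge of $G$ must eventually be cut.
\item $H^v$ has minimum cut $n(G)>\deg_G(v)$. Hence an exchange argument shows an optimal tree never cuts gadget edges: such cuts can be replaced by cutting the at most $\deg_G(v)$ edges of $G$ at $v$.
\end{itemize}
Consequently any reasonable $\cD_d$-clustering tree of $H$ is a clustering tree of $G$ with every cluster weight multiplied by $n(G)d+1$. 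The NO bound of Theorem~\ref{thm:hc-inapproximability} then transfers directly with $f=(n(G)d+1)\,n(G)\,m(G)$, with no separate linear-arrangement argument.
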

\begin{proof}
    If $d=0$ then the problem is equivalent to \textsc{HC} and the result follows from Theorem~\ref{thm:hc-inapproximability}. Assume that $d\geq 1$. We create a \textsc{HC-$\cD_d$} instance $H$ by modifying $G$ as follows: For any vertex $v\in V\br{G}$, $i\in [n\br{G}]$ and $j\in [d]$ we create a vertex $u_{i,j}^v$. Then, we connect $u_{i,j}^v$ to $u_{i',j'}^v$ if $\spr{j-j'}=1$. We also connect each $u_{i, 1}^v$ to $v$ for every $i\in [n\br{G}]$. We denote the subgraph consisting of aforementioned representatives of $v$ as $H^v$ and note that it has diameter $d$, its minimum cut is of size $n\br{G}$ and $n\br{H^v}=n\br{G}\cdot d +1$.
    Figure~\ref{fig:hardness-diameter-reduction} illustrates the reduction on a small example and shows both a single gadget $H^v$ and the full graph obtained after attaching the gadgets to all vertices of $G$.

    \begin{figure}[H]
    \centering
    \begin{minipage}[c][9.5cm][c]{0.38\textwidth}
        \centering
        \begin{tikzpicture}[scale=1.02,every node/.style={circle,draw,fill=white,inner sep=1.4pt,font=\small,preaction={fill=black,opacity=0.18,transform canvas={xshift=0.8pt,yshift=-0.8pt}}}]
            \node (a) at (0.0,1.4) {$a$};
            \node (b) at (-1.7,0.4) {$b$};
            \node (c) at (-1.0,-1.3) {$c$};
            \node (d) at (1.1,-1.2) {$d$};
            \node (e) at (1.8,0.6) {$e$};

            \draw (a) -- (b);
            \draw (b) -- (c);
            \draw (c) -- (d);
            \draw (d) -- (e);
            \draw (b) -- (d);
        \end{tikzpicture}
        \\\smallskip
        {\small Input graph $G$.}
        \\\medskip
        \begin{tikzpicture}[
            x=0.98cm,
            y=0.76cm,
            vertex/.style={circle,draw,fill=white,inner sep=0pt,minimum size=4.2pt,preaction={fill=black,opacity=0.18,transform canvas={xshift=0.8pt,yshift=-0.8pt}}},
            note/.style={draw=none,font=\scriptsize}
        ]
            \node[vertex] (v) at (0,0) {};
            \node[note,left] at (v) {$v$};

            \node[note] at (1,1.8) {$L_1$};
            \node[note] at (2,1.8) {$L_2$};
            \node[note] at (3,1.8) {$L_3$};

            \foreach \i/\y in {1/1.6,2/0.8,3/0,4/-0.8,5/-1.6} {
                \node[vertex] (a\i) at (1,\y) {};
                \node[vertex] (b\i) at (2,\y) {};
                \node[vertex] (c\i) at (3,\y) {};
            }

            \foreach \i in {1,...,5} {
                \draw (v) -- (a\i);
            }
            \foreach \i in {1,...,5} {
                \foreach \j in {1,...,5} {
                    \draw[gray] (a\i) -- (b\j);
                    \draw[gray] (b\i) -- (c\j);
                }
            }
        \end{tikzpicture}
        \\\smallskip
        {\small A single gadget $H^v$ for $n(G)=5$ and $d=3$.}
    \end{minipage}
    \hfill
    \begin{minipage}[c][9.5cm][c]{0.58\textwidth}
        \centering
        \begin{tikzpicture}[
            x=1.05cm,
            y=1.05cm,
            scale=1.12,
            transform shape,
            vertex/.style={circle,draw,fill=white,inner sep=1.2pt,font=\scriptsize,preaction={fill=black,opacity=0.18,transform canvas={xshift=0.8pt,yshift=-0.8pt}}},
            gadget/.style={circle,draw,inner sep=0pt,minimum size=2.1pt,draw=cutC,fill=white,preaction={fill=black,opacity=0.18,transform canvas={xshift=0.8pt,yshift=-0.8pt}}},
            gadgetedge/.style={draw=cutC!75,line width=0.22pt}
        ]
            \coordinate (a) at (0.0,1.1);
            \coordinate (b) at (-1.3,0.4);
            \coordinate (c) at (-0.9,-0.95);
            \coordinate (d) at (0.7,-0.9);
            \coordinate (e) at (1.4,0.5);

            \draw (a) -- (b);
            \draw (b) -- (c);
            \draw (c) -- (d);
            \draw (d) -- (e);
            \draw (b) -- (d);

            \foreach \name/\ang in {a/90,b/165,c/240,d/312,e/18} {
                \begin{scope}[shift={(\name)}, rotate=\ang]
                    \foreach \i/\yy in {1/0.52,2/0.26,3/0,4/-0.26,5/-0.52} {
                        \node[gadget] (\name1\i) at (0.78,\yy) {};
                        \node[gadget] (\name2\i) at (1.44,\yy) {};
                        \node[gadget] (\name3\i) at (2.1,\yy) {};
                    }
                    \foreach \i in {1,...,5} {
                        \draw[gadgetedge] (0,0) -- (\name1\i);
                    }
                    \foreach \i in {1,...,5} {
                        \foreach \j in {1,...,5} {
                            \draw[gadgetedge] (\name1\i) -- (\name2\j);
                            \draw[gadgetedge] (\name2\i) -- (\name3\j);
                        }
                    }
                \end{scope}
            }

            \node[vertex] at (a) {$a$};
            \node[vertex] at (b) {$b$};
            \node[vertex] at (c) {$c$};
            \node[vertex] at (d) {$d$};
            \node[vertex] at (e) {$e$};
        \end{tikzpicture}
        \\\smallskip
        {\small The graph $H$.}
    \end{minipage}
    \caption{Reduction from Theorem~\ref{thm:hac-d-inapproximability} for $n(G)=5$ and $d=3$. The left side shows the input graph and a single gadget $H^v$, while the right side shows the full graph $H$.}
    \label{fig:hardness-diameter-reduction}
\end{figure}

    Firstly, assume that $G$ is a \textbf{YES} instance and $T$ is a clustering tree such that $\COST_G\br{T}\leq \epsilon \cdot n\br{G} \cdot m\br{G}$. We construct a $\cD_d$-clustering tree $T'$ for $H$ as follows. We start with $T'=T$ and we modify all clusters in the following way: for every cluster $A\in \cC^{T'}$ and a vertex $v\in A$ we add all the vertices of $H^v$ to $A$ as well. This transformation increases the cost of $T'$ by a multiplicative factor of $n\br{G}\cdot d+1$, ensures that $T'$ is a $\cD_d$-clustering tree. In particular all of the leaf clusters in $T'$ have diameter at most $d$. We have that \[
    \COST_H\br{T'} \leq \br{n\br{G}\cdot d+1}\cdot\COST_G\br{T}\leq \epsilon \cdot \br{n\br{G}\cdot d+1}\cdot n\br{G}\cdot m\br{G}.
    \]

    Secondly, assume that $G$ is a \textbf{NO} instance. Observe, that we can assume that an optimal $\cD_d$-clustering tree never cuts edges of $H^v$. If otherwise, let $\cC$ be the set of edges cut in $H^v$. There are two cases:
    \begin{enumerate}
        \item $\spr{H^v\setminus \cC}= 1$. Then, we can remove cutting of the edges in $\cC$ from $T$ without affecting the structure of clusters. Since the cluster containing $V\br{H^v}$ had diameter at most $d$ before, then it must have only consisted of vertices of $H^v$ and thus, after transformation it still has diameter at most $d$.
        \item $\spr{H^v\setminus \cC}\geq 2$. 
        Let $\cS\subseteq \cC$ be the set of edges such that $\cS$ consists of all of the edges cut on the path from root of $T$ towards the cluster containing $V\br{H^v}$ before it was first cut into at least two clusters. Let $C$ be the first cluster in this top-down exploration such that $\spr{C\setminus \cC}\geq 2$. Now, iteratively append edges from $\cC\setminus\cS$, to $\cS$ as long as $\spr{H^v\setminus \cS}=1$. 
        Observe that 
        \[
        \spr{\cS}> n\br{G} - 1 \geq \deg_G\br{v}
        \]
        since the minimum cut of $H^v$ is of size $n\br{G}$.
         By definition $\cS$ is not a cut of $H^v$, so we can remove cutting of these edges from clusters of $T$ without affecting the structure of the tree, similar as in the previous case. We replace them by cutting all of the (remaining) edges of $N_G\br{v}$ in $C$. Observe, that there are at most $\deg_G\br{v}<\spr{\cS}$ such edges, and thus, the cost of $T$ is strictly less then at the beginning. Moreover, upon doing so, all clusters containing vertices $V\br{H^v}$ become $\cD_d$-clusters and therefore, we can remove their child clusters from $T$. Finally, it is easy to see that each other child cluster of $C$ is a subgraph of some child of $C$ in the original tree, so $T$ is still a valid $\cD_d$-clustering tree.
        \end{enumerate}

    Therefore, the only edges which can be cut in any sensible $\cD_d$-clustering tree are the edges of $G$. However, in such case the cost of any $\cD_d$-clustering tree $T'$ obtained by repeating the cuts of the clustering tree $T$ for $G$ is lower-bounded by
    \[
    \COST_H\br{T'} = \br{n\br{G}\cdot d+1}\cdot\COST_G\br{T} \geq c\sqrt{\epsilon} \cdot \br{n\br{G}\cdot d+1}\cdot n\br{G}\cdot m\br{G}.
    \]
        where we reiterate that the multiplicative factor of $n\br{G}\cdot d+1$ comes from the fact that every edge cut in $G$ contributes to the cost of every vertex in the corresponding gadget $H^v$.

    By choosing $f\br{n, m, d} = \br{n\br{G}\cdot d+1}\cdot n\br{G}\cdot m\br{G}$, $\delta = \epsilon$ and $c'=c$, we obtain the result.
\end{proof}

\subsection{Inapproximability of \textsc{HC-$\cT$}}

Here, in order to prove the inapproximability of \textsc{HC-$\cT$}, we will use a similar approach as in the previous section. However, now the gadgets we use serve a different purpose. In order for the reduction to work, we need to use the fact that the cost of the optimal \textsc{MLA} is a lower bound on the cost of any clustering tree. However, this might not be a case for instances of \textsc{HC-$\cT$}. The purpose of gadgets we use is to ensure that a similar lower bound can also be established for the graphs obtained using the reduction.
\begin{theorem}[Inapproximability of \textsc{HC-$\cT$}]\label{thm:hac-inapproximability}
    For every $\delta > 0$, it is \textsc{SSEH}-hard to distinguish between the following two cases for a given (uniform cost) instance $H$ of \textsc{HC-$\cT$}:
    \begin{itemize}
        \item \textbf{YES case:} There exists a $\cT$-clustering tree $T$ such that $\COST_H\br{T}\leq \delta \cdot n \cdot m$.
        \item \textbf{NO case:} Every $\cT$-clustering tree $T$, $\COST_H\br{T}\geq c'\sqrt{\delta} \cdot n \cdot m$, for some constant $c' > 0$.
    \end{itemize}
\end{theorem}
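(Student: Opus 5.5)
We reduce from Theorem~\ref{thm:hc-inapproximability}, mirroring the proof of Theorem~\ref{thm:hac-d-inapproximability} but with gadgets designed to recover an \textsc{MLA}-type lower bound instead of forcing gadgets to stay intact. Given a uniform \textsc{HC} instance $G$ with $n=n\br{G}$, $m=m\br{G}$, we build $H$ by attaching to every vertex $v$ a \emph{dense} gadget: replace $v$ by a clique $K^v$ on $k$ vertices (e.g.\ $k=n$), and attach every original edge $uv$ to a fixed representative of $K^u$ and of $K^v$. Every cycle-free cluster can contain at most two vertices of any triangle, so in any $\cT$-clustering tree every gadget $K^v$ must eventually be shattered into pieces that are forests. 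This is the key difference from $\cD_d$: the tree must pay a cost at least about $k^2\cdot(\text{size of the cluster in which }K^v\text{ gets broken})$, and this cost is what plays the role of the \textsc{MLA} lower bound.

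\textbf{YES case.} Take the clustering tree $T$ for $G$ with $\COST_G\br{T}\le\epsilon nm$. Replace each vertex $v$ in every cluster by its whole gadget, and at each leaf (a singleton $\{v\}$) continue with any cheap hierarchical decomposition of $K^v$ into trees, of cost $\bigo\br{k^3}$. The total is at most $k\cdot\COST_G\br{T}+n\cdot\bigo\br{k^3}$. We choose $k$ and the normalisation $f$ (the $n\cdot m$ in the statement is in terms of $H$) so that the gadget term is $o\br{\epsilon}$ relative to $k\,nm$. For instance, if $m\ge n^{1+\Omega(1)}$ in the hard instances, or if we pad $G$, then $k=\bigo\br{1}$ copies suffice. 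I would check which regime the instances of Theorem~\ref{thm:hc-inapproximability} live in, and, if needed, use weights (blow-up) instead of cliques, with each gadget a small fixed cyclic graph of weight $k$.

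\textbf{NO case (main obstacle).} Given any $\cT$-clustering tree $T'$ of $H$, we extract a linear arrangement $\pi$ of $V\br{G}$: order the vertices $v$ by the time (cluster size) at which their gadget $K^v$ is first cut, breaking ties along the tree. We then charge each edge $uv$ of $G$ with $|\pi(u)-\pi(v)|\le$ (number of gadgets still intact in the cluster where $uv$ is cut or separated) and argue the following. Either $uv$ is cut while both gadgets are intact, in a cluster containing at least $|\pi(u)-\pi(v)|$ whole gadgets, which contributes at least $k|\pi(u)-\pi(v)|$. Or one gadget, say $K^u$, is broken first; then $\COST_{H}$ pays at least $\Omega(k)$ edges of $K^u$ times the weight of that cluster, and this cluster contains the later-broken gadgets. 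To prevent double charging when a single gadget break pays for all $\deg(u)$ edges, the gadget must have min-cut $\ge\deg_G(u)$, so the same degree trick as in the $\cD_d$ proof applies; take $k\ge n$. Summing gives $\COST_H\br{T'}\ge\Omega\br{k}\cdot\COST_G\br{\pi}\ge\Omega\br{k\, c\sqrt{\epsilon}\,nm}$. Setting $\delta=\Theta(\epsilon)$ and matching the YES bound then yields the gap, with $c'$ depending on $c$. The delicate step is precisely this charging, i.e.\ showing that the ordering induced by gadget-breaking times yields a linear-arrangement lower bound despite leaves not being singletons.
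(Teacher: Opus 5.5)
\textbf{There is a genuine gap.} Your NO-case charging does not work as stated, and the repair you propose (gadgets with min-cut $\ge\deg_G(u)$, so $k\approx n$) is incompatible with your YES case.

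\emph{The ordering.} Ordering $V(G)$ by the size of the cluster in which $K^v$ is first cut is not consistent with the hierarchy, so clusters of $T'$ are not intervals of $\pi$. Gadgets in disjoint branches of $T'$ that break at intermediate sizes land between $u$ and $v$. Hence neither of your two claims holds: that $|\pi(u)-\pi(v)|$ is at most the number of intact gadgets in the cluster where $uv$ is cut, or that the cluster where $K^u$ breaks contains the later-broken gadgets.

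\emph{The min-cut step.} Clusters need not be induced: one may cut an edge without disconnecting anything, and a leaf may contain all three vertices of a triangle. So your claim that a cycle-free cluster contains at most two vertices of a triangle is false in this model. The first cut inside $K^u$ can therefore be a single edge, and the min-cut of $K^u$ gives you nothing at that moment.

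\emph{The parameter conflict.} Every $\cT$-clustering of $K_k$ costs $\Theta(k^3)$. With $k\approx n$, every solution, YES or NO, pays $\Theta(n^4)$ on gadgets alone. This swamps both $\epsilon k\,nm\le\epsilon n^4$ and $c\sqrt{\epsilon}\,k\,nm$, which erases the gap. Moreover $n(H)m(H)=\Theta(n^5)$, while a solution of cost $O(n^4)$ always exists, so the NO bound in the statement's normalization cannot hold. You flag this tension but do not resolve it.

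\textbf{The missing idea.} Take $\pi$ to be the leaf (DFS) order of $T'$, in which every cluster is an interval, and use the fact that leaf clusters are trees. Then:
\begin{itemize}
\item A $G$-edge cut in a cluster $C$, separating or not, has stretch below $|C|$ and pays $|C|$.
\item For a leaf $L$ with $L'=L\cap V(G)$, the uncut $G$-edges inside $L'$ are a subgraph of the tree $L$, hence a forest. They contribute at most $(|L'|-1)|L'|\le|L'|^2$ to $\COST_G(\pi)$.
\item For each $v\in L'$, some edge of $v$'s gadget must be cut at a proper ancestor of $L$; otherwise the intact gadget would lie inside $L$ and create a cycle. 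Each such edge pays at least $|L|\ge|L'|$, and these gadget edges are distinct for distinct $v$, so together they pay at least $|L'|^2$.
\end{itemize}
This gives $\COST_H(T')\ge\COST_G(\pi)$ with no degree condition at all. So a constant-size gadget suffices: your construction with $k=3$, i.e.\ a triangle on $\{v,v_1,v_2\}$, which is exactly the paper's reduction. The YES cost is then $3\COST_G(T)+3n(G)$, with $n(H)=3n(G)$ and $m(H)=O(m(G))$, and the $\delta$ versus $c'\sqrt{\delta}$ gap follows directly from Theorem~\ref{thm:hc-inapproximability}.
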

\begin{proof}
    Let $G=\br{V,E}$ be an instance of \textsc{HC}. We set $\delta = 2\epsilon$ and $c'= \frac{c}{9\sqrt{2}}$. We create a \textsc{HC-$\cT$} instance $H$ as follows: For any vertex $v\in V\br{G}$ we add two new vertices $v_1$ and $v_2$ and we build a triangle on $\brc{v,v_1,v_2}$. 
    Figure~\ref{fig:hardness-tree-reduction} shows the construction on a small example.

    \begin{figure}[H]
    \centering
    \begin{minipage}[c][8.3cm][c]{0.28\textwidth}
        \centering
        \begin{tikzpicture}[scale=1.12,every node/.style={circle,draw,fill=white,inner sep=1.4pt,font=\small,preaction={fill=black,opacity=0.18,transform canvas={xshift=0.8pt,yshift=-0.8pt}}}]
            \node (a) at (0.0,1.4) {$a$};
            \node (b) at (-1.7,0.4) {$b$};
            \node (c) at (-1.0,-1.3) {$c$};
            \node (d) at (1.1,-1.2) {$d$};
            \node (e) at (1.8,0.6) {$e$};

            \draw (a) -- (b);
            \draw (a) -- (c);
            \draw (b) -- (c);
            \draw (c) -- (d);
            \draw (b) -- (d);
            \draw (d) -- (e);
        \end{tikzpicture}
        \\\smallskip
        {\small Input graph $G$.}
    \end{minipage}
    \hfill
    \begin{minipage}[c][8.3cm][c]{0.68\textwidth}
        \centering
        \begin{tikzpicture}[
            x=1.15cm,
            y=1.15cm,
            every node/.style={circle,draw,fill=white,inner sep=1.2pt,font=\scriptsize,preaction={fill=black,opacity=0.18,transform canvas={xshift=0.8pt,yshift=-0.8pt}}},
            base/.style={circle,draw,fill=white,inner sep=1.2pt,font=\scriptsize,preaction={fill=black,opacity=0.18,transform canvas={xshift=0.8pt,yshift=-0.8pt}}},
            tnode/.style={circle,draw=cutD,fill=white,inner sep=1.1pt,font=\scriptsize,transform shape=false,preaction={fill=black,opacity=0.18,transform canvas={xshift=0.8pt,yshift=-0.8pt}}},
            tedge/.style={draw=cutD,very thick}
        ]
            \coordinate (a) at (0.0,1.2);
            \coordinate (b) at (-1.6,0.45);
            \coordinate (c) at (-1.0,-1.15);
            \coordinate (d) at (0.95,-1.05);
            \coordinate (e) at (1.7,0.5);

            \draw (a) -- (b);
            \draw (a) -- (c);
            \draw (b) -- (c);
            \draw (c) -- (d);
            \draw (b) -- (d);
            \draw (d) -- (e);

            \begin{scope}[shift={(a)}, rotate=90]
                \node[tnode] (a1) at (0.866,0.5) {$a_1$};
                \node[tnode] (a2) at (0.866,-0.5) {$a_2$};
                \draw[tedge] (0,0) -- (a1);
                \draw[tedge] (0,0) -- (a2);
                \draw[tedge] (a1) -- (a2);
            \end{scope}

            \begin{scope}[shift={(b)}, rotate=165]
                \node[tnode] (b1) at (0.866,0.5) {$b_1$};
                \node[tnode] (b2) at (0.866,-0.5) {$b_2$};
                \draw[tedge] (0,0) -- (b1);
                \draw[tedge] (0,0) -- (b2);
                \draw[tedge] (b1) -- (b2);
            \end{scope}

            \begin{scope}[shift={(c)}, rotate=240]
                \node[tnode] (c1) at (0.866,0.5) {$c_1$};
                \node[tnode] (c2) at (0.866,-0.5) {$c_2$};
                \draw[tedge] (0,0) -- (c1);
                \draw[tedge] (0,0) -- (c2);
                \draw[tedge] (c1) -- (c2);
            \end{scope}

            \begin{scope}[shift={(d)}, rotate=312]
                \node[tnode] (d1) at (0.866,0.5) {$d_1$};
                \node[tnode] (d2) at (0.866,-0.5) {$d_2$};
                \draw[tedge] (0,0) -- (d1);
                \draw[tedge] (0,0) -- (d2);
                \draw[tedge] (d1) -- (d2);
            \end{scope}

            \begin{scope}[shift={(e)}, rotate=18]
                \node[tnode] (e1) at (0.866,0.5) {$e_1$};
                \node[tnode] (e2) at (0.866,-0.5) {$e_2$};
                \draw[tedge] (0,0) -- (e1);
                \draw[tedge] (0,0) -- (e2);
                \draw[tedge] (e1) -- (e2);
            \end{scope}

            \node[base] at (a) {$a$};
            \node[base] at (b) {$b$};
            \node[base] at (c) {$c$};
            \node[base] at (d) {$d$};
            \node[base] at (e) {$e$};
        \end{tikzpicture}
        \\\smallskip
        {\small The graph $H$ with triangles attached to each vertex of $G$.}
    \end{minipage}
    \caption{Illustration of the reduction used in Theorem~\ref{thm:hac-inapproximability}. For each original vertex $v\in V(G)$ we add two new vertices $v_1$ and $v_2$ and create the triangle on $\brc{v,v_1,v_2}$.}
    \label{fig:hardness-tree-reduction}
\end{figure}
    
    Firstly, assume that $G$ is a \textbf{YES} instance and $T$ is a clustering tree such that $\COST_G\br{T}\leq \epsilon \cdot n\br{G} \cdot m\br{G}$. Then, we show how to construct a good $\cT$-clustering tree $T'$ for $H$: We start with $T'=T$ and we modify all clusters in the following way: for every cluster $A\in \cC^{T'}$ and vertex $v\in A$ we add $v_1$ and $v_2$ to $A$ as well. This transformation increases the cost of $T'$ be a multiplicative factor of $3$, ensures that all of the clusters in $T'$ satisfy hierarchical clustering properties except that every leaf cluster in $T'$ is a triangle. To fix this, we add a new level of the tree, in which we cut each such triangle using one, arbitrarily chosen edge. It is easy to see that
    \[
    \COST_H\br{T'} = 3\COST_G\br{T}+ 3n\br{G} \leq 3\epsilon \cdot n\br{G} \cdot m\br{G} + 3n\br{G} \leq \epsilon \cdot n\br{H} \cdot m\br{H}+n\br{H}\leq \delta \cdot n\br{H} \cdot m\br{H}
    \]
    where we used the fact that $m\br{G}\leq m\br{H}$ and $\epsilon \geq 1/m\br{H}$ for large enough $m\br{H}$.
    
    Secondly, assume that $G$ is a \textbf{NO} instance. We want to show that the cost of any $\cT$-clustering tree $T$ for $H$ is lower bounded by the cost of an optimal linear arrangement of $G$. To do so, fix some clustering tree $T$ for $G$, pick all original vertices of $G$ and project them onto a line according to the order in which they occur in the leaf clusters of $T$, thus obtaining a linear arrangement of $G$, $\pi$ (assume that both the tree and vertices of each cluster have some arbitrary order). Consider an edge $uv \in E$. There are two cases:
    \begin{enumerate}
    \item $u$ and $v$ belong to different leaf clusters in $T$. Then, we know that the edge $uv$ had to be cut in $T$ and the stretch of $u$ and $v$ in $\pi$, i.~e., $\spr{\pi\br{u}-\pi\br{v}}$, is at most the size of the cluster $C$ in which $uv$ was cut. On the other hand, the contribution of $uv$ to $c_G\br{T}$ is $\spr{\cC}\geq \spr{\pi\br{u}-\pi\br{v}}$.
    \item $u$ and $v$ belong to the same leaf cluster in $T$. Consider the cluster $C$ in which $u$ and $v$ belong. Let $C'$ be this cluster restricted to the vertices of $G$. Since $C'$ is a tree, we know that the contribution of edges of $C'$ to the cost of $\pi$ is at most $\br{\spr{C'}-1}\cdot \spr{C'}\leq \spr{C'}^2$. On the other hand, we know that for each vertex $v\in C'$, at least one of the edges $vv_1, vv_2$ has to be cut in $T$ and every such edge is cut in a cluster of size at least $\spr{C'}$ so the contribution of these edges to $c_H\br{T}$ is at lower bounded by $\spr{C'}^2$. Therefore, the contribution of edges of $C$ to the cost of $T$ is at least the contribution of edges of $C'$ to the cost of $\pi$. Figure~\ref{fig:hardness-tree-no-case2} illustrates the above argument on a small example.
    \end{enumerate}   

    \begin{figure}[H]
    \centering
    \begin{tikzpicture}[
        scale=1.12,
        transform shape,
        every node/.style={circle,draw,fill=white,inner sep=1.2pt,font=\scriptsize,preaction={fill=black,opacity=0.18,transform canvas={xshift=0.8pt,yshift=-0.8pt}}},
        cvert/.style={circle,draw=cutC!70!black,fill=white,inner sep=1.2pt,font=\scriptsize,preaction={fill=black,opacity=0.18,transform canvas={xshift=0.8pt,yshift=-0.8pt}}},
        cpvert/.style={circle,draw=cutB!80!black,fill=white,inner sep=1.2pt,font=\scriptsize,preaction={fill=black,opacity=0.18,transform canvas={xshift=0.8pt,yshift=-0.8pt}}},
        outvert/.style={circle,draw=black,fill=white,inner sep=1.2pt,font=\scriptsize,preaction={fill=black,opacity=0.18,transform canvas={xshift=0.8pt,yshift=-0.8pt}}},
        aux/.style={circle,draw=cutC!75!black,fill=white,inner sep=1.1pt,font=\scriptsize,transform shape=false,preaction={fill=black,opacity=0.18,transform canvas={xshift=0.8pt,yshift=-0.8pt}}},
        auxblack/.style={circle,draw=black,fill=white,inner sep=1.1pt,font=\scriptsize,transform shape=false,preaction={fill=black,opacity=0.18,transform canvas={xshift=0.8pt,yshift=-0.8pt}}},
        cedge/.style={draw=cutC!75!black,thick},
        cpedge/.style={draw=cutB!85!black,very thick},
        outedge/.style={draw=black,thick},
        cutedge/.style={draw=cutA,very thick,dashed}
    ]
        \coordinate (a) at (-4.0,0.0);
        \coordinate (b) at (-2.2,1.6);
        \coordinate (c) at (-0.4,1.2);
        \coordinate (d) at (1.5,1.7);
        \coordinate (e) at (3.2,0.9);
        \coordinate (f) at (0.3,-0.7);
        \coordinate (g) at (-2.1,-1.0);
        \coordinate (h) at (4.9,1.9);
        \coordinate (i) at (5.3,-0.1);

        \draw[cpedge] (a) -- (b);
        \draw[cpedge] (b) -- (c);
        \draw[cpedge] (b) -- (d);
        \draw[cpedge] (d) -- (e);

        \draw[outedge] (c) -- (f);
        \draw[outedge] (f) -- (g);
        \draw[outedge] (g) -- (a);
        \draw[outedge] (c) -- (g);

        \draw[outedge] (e) -- (h);
        \draw[outedge] (h) -- (i);
        \draw[outedge] (e) -- (i);

        \begin{scope}[shift={(a)}, rotate=180]
            \node[aux] (a1) at (1.039,0.6) {$a_1$};
            \node[aux] (a2) at (1.039,-0.6) {$a_2$};
            \draw[cutedge] (0,0) -- (a1);    
            \draw[cedge] (0,0) -- (a2);
            \draw[cedge] (a1) -- (a2);
        \end{scope}

        \begin{scope}[shift={(b)}, rotate=130]
            \node[aux] (b1) at (1.039,0.6) {$b_1$};
            \node[aux] (b2) at (1.039,-0.6) {$b_2$};
            \draw[cedge] (0,0) -- (b1);
            \draw[cutedge] (0,0) -- (b2);    
            \draw[cedge] (b1) -- (b2);
        \end{scope}

        \begin{scope}[shift={(c)}, rotate=340]
            \node[aux] (c1) at (1.039,0.6) {$c_1$};
            \node[aux] (c2) at (1.039,-0.6) {$c_2$};
            \draw[cedge] (0,0) -- (c1);
            \draw[cedge] (0,0) -- (c2);
            \draw[cutedge] (c1) -- (c2);     
        \end{scope}

        \begin{scope}[shift={(d)}, rotate=70]
            \node[aux] (d1) at (1.039,0.6) {$d_1$};
            \node[aux] (d2) at (1.039,-0.6) {$d_2$};
            \draw[cutedge] (0,0) -- (d1);    
            \draw[cedge] (0,0) -- (d2);
            \draw[cutedge] (d1) -- (d2);
        \end{scope}

        \begin{scope}[shift={(e)}, rotate=230]
            \node[aux] (e1) at (1.039,0.6) {$e_1$};
            \node[aux] (e2) at (1.039,-0.6) {$e_2$};
            \draw[cedge] (0,0) -- (e1);
            \draw[cutedge] (0,0) -- (e2);
            \draw[cedge] (e1) -- (e2);
        \end{scope}

        \begin{scope}[shift={(f)}, rotate=-90]
            \node[auxblack] (f1) at (1.039,0.6) {$f_1$};
            \node[auxblack] (f2) at (1.039,-0.6) {$f_2$};
            \draw[outedge] (0,0) -- (f1);
            \draw[outedge] (0,0) -- (f2);
            \draw[outedge] (f1) -- (f2);
        \end{scope}

        \begin{scope}[shift={(g)}, rotate=-150]
            \node[auxblack] (g1) at (1.039,0.6) {$g_1$};
            \node[auxblack] (g2) at (1.039,-0.6) {$g_2$};
            \draw[outedge] (0,0) -- (g1);
            \draw[outedge] (0,0) -- (g2);
            \draw[outedge] (g1) -- (g2);
        \end{scope}

        \begin{scope}[shift={(h)}, rotate=35]
            \node[auxblack] (h1) at (1.039,0.6) {$h_1$};
            \node[auxblack] (h2) at (1.039,-0.6) {$h_2$};
            \draw[outedge] (0,0) -- (h1);
            \draw[outedge] (0,0) -- (h2);
            \draw[outedge] (h1) -- (h2);
        \end{scope}

        \begin{scope}[shift={(i)}, rotate=-20]
            \node[auxblack] (i1) at (1.039,0.6) {$i_1$};
            \node[auxblack] (i2) at (1.039,-0.6) {$i_2$};
            \draw[outedge] (0,0) -- (i1);
            \draw[outedge] (0,0) -- (i2);
            \draw[outedge] (i1) -- (i2);
        \end{scope}

        \node[cpvert] at (a) {$a$};
        \node[cpvert] at (b) {$b$};
        \node[cpvert] at (c) {$c$};
        \node[cpvert] at (d) {$d$};
        \node[cpvert] at (e) {$e$};
        \node[outvert] at (f) {$f$};
        \node[outvert] at (g) {$g$};
        \node[outvert] at (h) {$h$};
        \node[outvert] at (i) {$i$};

    \end{tikzpicture}
    \caption{Illustration of Case 2 in the \textbf{NO} analysis for \textsc{HC-$\cT$}. Blue denotes $C'$ (original vertices), green denotes vertices and edges of $C\setminus C'$, and red dashed edges are the ones cut in triangles.}
    \label{fig:hardness-tree-no-case2}
\end{figure}
    
    By summing up the contributions of all edges, according to the above two cases, we obtain that 
    \[
    \COST_H\br{T}\geq \COST_G\br{\pi}\geq c\sqrt{\epsilon} \cdot n\br{G} \cdot m\br{G} \geq \frac{c\sqrt{\epsilon}}{9}\cdot n\br{H} \cdot m\br{H} = c'\sqrt{\delta} \cdot n\br{H} \cdot m\br{H}
    \]
    where we used the fact that $n\br{H}=3n\br{G}$ and 
    \[
    m\br{H}= m\br{G}+2\cdot n\br{G}\leq 3\cdot m\br{G}
    \]
    since we can assume that $G$ is not a tree (for which \textsc{HC} is constant-factor approximable), so $n\br{G}\leq m\br{G}$.
\end{proof}

\section{Conclusions}

We have presented a general framework for approximating \textsc{HC-$\cF$} problems for $\cF$ which are $\alpha\br{n}$-well-behaved, i.e., there exists a natural, cover-like ILP formulation of the \textsc{$p_{\cF}$-P} problem as well as an algorithm which given a solution to the LP relaxation of this ILP, finds a solution to \textsc{$p_{\cF}$-P} with cost at most $\alpha\br{n}$ times the cost of the LP solution. This results in an $\bigo\br{\alpha\br{n}+\beta\br{n}}$-approximation algorithm for \textsc{HC-$\cF$}, where $\beta\br{n}$ is the approximation ratio of the bicriteria LP-based algorithm for \textsc{$\rho$-SEP} with $\rho=1/2$ and $\rho_0=2/3$, currently at $\beta\br{n}=\bigo\br{\log n}$. As an immediate corollary we obtain an $\bigo\br{\log n\cdot\log\log n}$-approximation algorithm for clustering into trees and an $\bigo\br{\log n}$-approximation algorithm for clustering into clusters of diameter at most $d$. To the best of our knowledge, these are the first results for the \textsc{HC-$\cF$} problem.

As of right now, improving the approximation of \textsc{HC-$\cF$} beyond the $\bigo\br{\log n}$-factor seems unlikely, without improving the approximation for a series of intensively studied cut and separator problems. Additional evidence for hardness of approximation are given by our inapproximability results, which show that under \textsc{SSEH}, \textsc{HC-$\cD_d$} (for any $d\geq 0$) and \textsc{HC-$\cT$} cannot be approximated within a constant factor.

It is worth mentioning that among corollaries of our work, we can also include an $\bigo\br{\log n}$-approximation algorithm for the following class of graphs $\cF$: given a set of vertex pairs $\{(s_i, t_i)\in V\times V\colon i\in [k]\}$, the graph $G$ is in $\cF$ if and only if for every $i\in [k]$, there is no pair of vertices $s_i$ and $t_i$ in the same connected component of $G$. We call this problem the \textsc{Multiway Cut Hierarchical Clustering} problem. Indeed, while designing an approximation algorithm for $\cD_d$-clustering, we have used the fact that $p_{\cD_d}$-P is a special case of \textsc{Multiway Cut} problem. It is immediate, that by the same argument, the above class is $\bigo\br{\log n}$-well-behaved as well.

\bibliographystyle{plain}
\bibliography{references}

\appendix

\section{Spreading metric for $\rho$-separators with terminals}\label{app:terminal-extension}

This appendix serves as an extension of the procedure and analysis in~\cite{FastApproximateGraphPartitioningAlgorithms} which gives an $\bigo\br{\log n}$-approximation algorithm for the \textsc{$\rho$-SEP} problem. In their algorithm it is assumed that $X=V$, we show to extend the rounding procedure to work for designated terminal set $X\subseteq V$. The algorithm is almost exactly the same, and is based on a region-growing procedure. Whereas the original procedure grows a region around an arbitrary vertex, our modified algorithm grows a region around an arbitrary terminal. The rest of the analysis is largely the same.

\subsection{LP formulation}
Recall, the following is the weighted spreading-metrics \textsc{LP} relaxation \textsc{LP-$\rho$-SEP} for the \textsc{$\rho$-SEP} problem with a terminal set $X$:
\begin{align}
    \min \quad &\sum_{e\in E} c(e)\cdot y_e \\
    \text{subject to} \quad &\sum_{u\in S} \dist_{y,V}(v,u)\cdot w(u) \geq w\br{S}-\rho\cdot w\br{G}, \quad \forall S\subseteq V, v\in S\cap X, \label{eq:px1-main}\\
    &0\leq y_e \quad \forall e\in E.
\end{align}

Again, the LP can be solved in polynomial time using the ellipsoid method, since there is a polynomial-time separation oracle for the constraints \eqref{eq:px1-main}. Let $\tau_X$ denote the optimal value of \textsc{LP-$\rho$-SEP}.

\subsection{Lower bound}
Firstly, we claim that $\tau_X$ is a lower bound on the minimum capacity of a $(\rho,X)$-separator.
\begin{lemma}
$\tau_X$ is a lower bound on the minimum capacity of a $(\rho,X)$-separator.
\end{lemma}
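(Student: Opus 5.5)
The plan is the standard integral-solution argument: take a minimum-capacity $(\rho,X)$-separator $E^*$, turn it into a $0/1$ vector $y$, show this vector is feasible for \textsc{LP-$\rho$-SEP}, and observe that its objective value is exactly $c\br{E^*}$. Since $\tau_X$ is the minimum over all feasible points, $\tau_X\le c\br{E^*}$ follows at once.

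Concretely, I would set $y_e=1$ if $e\in E^*$ and $y_e=0$ otherwise. Nonnegativity is immediate, and the objective $\sum_{e\in E} c(e)\cdot y_e$ equals $c\br{E^*}$. The only real work is checking the spreading constraints \eqref{eq:px1-main}.

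To verify them, fix $S\subseteq V$ and a terminal $v\in S\cap X$, and let $H_v$ be the connected component of $G\setminus E^*$ containing $v$. Because $v\in X$ and $E^*$ is a $(\rho,X)$-separator, we have $w\br{V\br{H_v}}\le\rho\cdot w\br{G}$. Now take any $u\notin V\br{H_v}$. Every $v$–$u$ path in $G$ must use some edge of $E^*$, so $\dist_{y,V}(v,u)\ge 1$. Dropping the nonnegative terms with $u\in V\br{H_v}$ then gives
\[
\sum_{u\in S}\dist_{y,V}(v,u)\, w(u)\ \ge\ w\br{S\setminus V\br{H_v}}\ \ge\ w\br{S}-w\br{V\br{H_v}}\ \ge\ w\br{S}-\rho\cdot w\br{G},
\]
which is exactly \eqref{eq:px1-main}. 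This mirrors the spreading-constraint check in the $(\Rightarrow)$ direction of Lemma~\ref{lem:ilp-exact}.

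No step here is a real obstacle. The one point to get right is that the constraints are imposed only for centers $v\in X$, which is precisely where the separator guarantees a light component. This is why the terminal restriction introduces no difficulty compared with the $X=V$ case of~\cite{FastApproximateGraphPartitioningAlgorithms}.
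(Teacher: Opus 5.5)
Your proposal is correct and follows the paper's proof essentially step for step. Both take the $0/1$ indicator vector of a separator, note that vertices outside the terminal's component are at distance at least $1$, and use the chain $w(S\setminus C)\ge w(S)-w(C)\ge w(S)-\rho\cdot w(G)$; the only cosmetic difference is that the paper argues for an arbitrary separator $F$ rather than a minimum one.
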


\begin{proof}
Let $F$ be any $(\rho,X)$-separator. Define an integral length assignment by
\[
    y_e=\begin{cases}
        1,& e\in F,\\
        0,& e\notin F.
    \end{cases}
\]
Take any constraint \eqref{eq:px1-main}, i.e., any $S\subseteq V$ and any $v\in S\cap X$.
Let $C$ be the connected component of $v$ in $G'=(V,E\setminus F)$.
For every $u\in S\setminus C$, every $v$-$u$ path uses at least one edge of $F$, hence $\dist_{y,V}(v,u)\ge 1$.
Therefore,
\begin{align*}
\sum_{u\in S}\dist_{y,V}(v,u)\cdot w(u)
&\ge \sum_{u\in S\setminus C}\dist_{y,V}(v,u)\cdot w(u)\\
&\ge w\br{S\setminus C}\\
&= w\br{S}-w\br{C\cap S}\\
&\ge w\br{S}-w\br{C}.
\end{align*}
Since $v\in X$, the component $C$ contains a terminal, so by definition of $(\rho,X)$-separator,
$w(C)\le \rho\cdot w\br{G}$. Hence
\[
\sum_{u\in S}\dist_{y,V}(v,u)\cdot w(u)\ge w\br{S}-\rho\cdot w\br{G},
\]
so every constraint is satisfied. The LP cost equals $c(F)$, thus
$\tau_X\le c(F)$ for every feasible $(\rho,X)$-separator $F$. Consequently,
$\tau_X$ is a lower bound on the optimum separator capacity.
\end{proof}

\subsection{Radius bound}

For a vertex $v\in S\subseteq V$, define $
\operatorname{radius}(v,S)=\max_{u\in S}\brc{\dist_S(v,u)}.$
We have the following lemma:

\begin{lemma}
If $S\subseteq V$ satisfies $w\br{S}>\rho_0\cdot w\br{G}$, then for every terminal $v\in S\cap X$, $
\operatorname{radius}(v,S)>\frac{\rho_0-\rho}{\rho_0}.
$
\end{lemma}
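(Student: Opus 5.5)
The plan is a direct contradiction argument using the spreading constraint \eqref{eq:px1-main} applied to the set $S$ itself, with $v$ as the center. Fix $S\subseteq V$ with $w\br{S}>\rho_0\cdot w\br{G}$ and a terminal $v\in S\cap X$. Suppose, for contradiction, that $\operatorname{radius}(v,S)\le \frac{\rho_0-\rho}{\rho_0}$. Then every $u\in S$ satisfies $\dist_S(v,u)\le \frac{\rho_0-\rho}{\rho_0}$.

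The first step is to pass from distances inside $S$ to distances in the whole graph. Restricting to $S$ can only remove paths, so $\dist_{y,V}(v,u)\le \dist_S(v,u)$; this is the same monotonicity already used in Lemma~\ref{lem:lp-ksep}, now in the opposite direction. Hence the left-hand side of \eqref{eq:px1-main} satisfies
\[
\sum_{u\in S}\dist_{y,V}(v,u)\, w(u)\le \frac{\rho_0-\rho}{\rho_0}\, w\br{S}.
\]

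The second step compares this with the right-hand side. Feasibility of the LP point (constraint \eqref{eq:px1-main}, applicable since $v\in S\cap X$) gives
\[
\frac{\rho_0-\rho}{\rho_0}\, w\br{S}\ge w\br{S}-\rho\, w\br{G},
\]
that is, $\frac{\rho}{\rho_0}\, w\br{S}\le \rho\, w\br{G}$, or equivalently $w\br{S}\le \rho_0\, w\br{G}$. This contradicts the hypothesis $w\br{S}>\rho_0\cdot w\br{G}$, so the radius must be strictly larger than $\frac{\rho_0-\rho}{\rho_0}$.

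The only delicate point is the direction of the distance comparison together with the strictness of the inequality. The LP constraint is stated with $\dist_{y,V}$, while the radius is defined through $\dist_S$. I would take $\dist_S$ to mean the $y$-induced shortest-path metric of $G[S]$, which dominates $\dist_{y,V}$ pointwise and gives exactly the upper bound needed above. If $\dist_S$ were instead meant as the global metric restricted to $S$, the same computation goes through verbatim with equality. In either reading, strictness comes from the strict hypothesis $w\br{S}>\rho_0 w\br{G}$ with $\rho>0$. No other earlier results are needed.
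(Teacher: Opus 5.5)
Your proof is correct and is essentially the paper's argument. The paper applies the spreading constraint to $S$ with center $v$, uses the strict hypothesis $w(S)>\rho_0 w(G)$ to bound the weighted average of $\dist_{y,V}(v,\cdot)$ over $S$ strictly above $(\rho_0-\rho)/\rho_0$, picks a vertex $u$ exceeding the average, and concludes via $\dist_{y,S}\ge \dist_{y,V}$. You run the same inequality chain, with the same distance comparison, in contrapositive form.
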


\begin{proof}
Fix $S$ and terminal $v\in S\cap X$. By \eqref{eq:px1-main}, we have $
\sum_{u\in S}\dist_{y,V}(v,u)\cdot w(u)\ge w\br{S}-\rho\cdot w\br{G}.
$
Hence the weighted average distance from $v$ to vertices in $S$ is at least
\[
1-\frac{\rho\cdot w\br{G}}{w\br{S}} > 1-\frac{\rho}{\rho_0}=\frac{\rho_0-\rho}{\rho_0}.
\]
By averaging there exists a vertex $u\in S$ such that $
\dist_{y,V}(v,u)>\frac{\rho_0-\rho}{\rho_0}.
$
Since $\dist_{y,S}(v,u)\ge \dist_{y,V}(v,u)$, we obtain
\[
\operatorname{radius}(v,S)\ge \dist_S(v,u)>\frac{\rho_0-\rho}{\rho_0}.
\]
\end{proof}

\subsection{Assigning volumes to spheres}

Fix a connected subgraph $G'=(V',E')$ that we currently process.
For a vertex $v\in V'$ and radius $r\ge 0$, define an \emph{$r$-sphere} as $
N(v,r)=\{u\in V'\colon \dist_{V'}(u,v)<r\}$. Let $
E(v,r)=E'\cap \bigl(N(v,r)\times N(v,r)\bigr) $ be the set of edges with both endpoints in the $r$-sphere, and let $
\delta(v,r)=E'\cap \bigl(N(v,r),V'\setminus N(v,r)\bigr)$ be the set of edges with exactly one endpoint in the $r$-sphere.

For a parameter $\epsilon>0$, define the volume of a sphere by
\[
\operatorname{vol}(v,r)=\epsilon\cdot \tau_X
+\sum_{e\in E(v,r)} c(e)\cdot y_e
+\sum_{(x,y)\in \delta(v,r)} c(x,y)\cdot y_{x,y}\cdot \br{r-\dist_{N(v,r)}(v,x)}.
\]

This quantity consists of 3 terms: a small constant term $\epsilon\cdot\tau_X$ called \emph{seed value}, the contribution of edges inside the sphere, and the contribution of edges crossing the sphere boundary. We will choose $\epsilon=\frac{1}{n\cdot\ln n}$.
Now, we observe that $\operatorname{vol}(v,r)$ is monotone and piecewise linear. This is because the first two terms are monotone and piecewise constant, and the last term is monotone and piecewise linear. In particular, $\operatorname{vol}(v,r)$ is continuous, and on each open interval between consecutive vertex distances from $v$,
\[
\frac{d}{dr}\operatorname{vol}(v,r)=c\bigl(\delta(v,r)\bigr).
\]

\subsection{Cut procedure and its correctness}

Here we describe the modified region-growing rounding algorithm, which is almost identical to the original procedure in~\cite{FastApproximateGraphPartitioningAlgorithms}, except that it grows a region around an arbitrary terminal $v\in V'\cap X$ instead of an arbitrary vertex $v\in V'$. The procedure is given in Algorithm~\ref{alg:terminal-cut-proc}. It is used, whenever there exists at least one terminal in the current subgraph $G'=(V',E')$ and $w\br{V'}>\rho_0\cdot w\br{G}$. The procedure returns a sphere $U=N(v,r)$ for some radius $r\le \tilde r$ such that $w\br{U}\le \rho_0\cdot w\br{G}$ and the ratio between $c\br{\delta\br{v, r}}$ and $\operatorname{vol}(v,r)$ is $\bigo\br{\log n}$. Pick $
\tilde r=\frac{\rho_0-\rho}{\rho_0}.$

\begin{algorithm}[H]
\caption{\textsc{Cut-Proc} for Terminal-Constrained Region Growing}
\label{alg:terminal-cut-proc}
\DontPrintSemicolon
\KwIn{A connected subgraph $G'=(V',E')$, edge capacities $c$, LP lengths $d$, terminal set $X\subseteq V'$, with $V'\cap X\neq\emptyset$ and $w(V')>\rho_0\cdot w(V)$}
\KwOut{A nontrivial set $U\subsetneq V'$ satisfying \eqref{eq:terminal-good-radius}}
Set $\tilde r\gets (\rho_0-\rho)/\rho_0$\;
Choose an arbitrary center $v\in V'\cap X$\;
Set $U\gets\{v\}$\;
Set $v_0\gets$ closest vertex to $U$ in $V'\setminus U$ with respect to $\dist_{d,V'}$\;
\While{$c\br{U, V'\setminus U} > \frac{1}{\tilde r}\cdot\ln\left(\frac{\operatorname{vol}(v,\tilde r)}{\operatorname{vol}(v,0)}\right)\cdot \operatorname{vol}\br{v,\dist_{d,V'}\br{v,v_0}}$}{
	$U\gets U\cup\{v_0\}$\;
	Set $v_0\gets$ closest vertex to $U$ in $V'\setminus U$ with respect to $\dist_{d,V'}$\;
}
\Return{$U$}\;
\end{algorithm}

\begin{theorem}
For every processed subgraph $G'=(V',E')$ with $V'\cap X\neq\emptyset$ and
$w\br{V'}>\rho_0\cdot w\br{G}$, procedure \textsc{Cut-Proc} returns
nonempty $U\subsetneq V'$ such that:
\begin{enumerate}
    \item $U\cap X\neq\emptyset$,
    \item $w\br{U}\le \rho_0\cdot w\br{G}$,
    \item The following inequality holds for the returned radius:
    \[
c\bigl(\delta(v,r)\bigr)
\le
\frac{1}{\tilde r}\cdot\ln\left(\frac{\operatorname{vol}(v,\tilde r)}{\operatorname{vol}(v,0)}\right)
\cdot \operatorname{vol}(v,r).
\label{eq:terminal-good-radius}
\]
\end{enumerate}
\end{theorem}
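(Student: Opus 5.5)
The plan is the standard region-growing argument of~\cite{FastApproximateGraphPartitioningAlgorithms}, centred at a terminal. Property~1 is immediate, since $U$ starts as $\{v\}$ with $v\in V'\cap X$ and only grows. The set $U$ is always a sphere $N(v,r)$: vertices are added in order of $\dist_{d,V'}$ from $v$ (closest to $U$ equals closest to $v$ when $U$ is a ball), so at each step $U=N(v,r)$ for any $r$ in the interval $(\dist(v,\text{last added}),\dist(v,v_0)]$. We then show two things: the loop stops at some $r<\tilde r$, and when it stops, inequality~\eqref{eq:terminal-good-radius} holds at the returned radius.

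\textbf{Termination before radius $\tilde r$.} Suppose for contradiction that the while-condition holds for every sphere of radius $r\in(0,\tilde r)$. Set $\Lambda=\frac{1}{\tilde r}\ln\frac{\operatorname{vol}(v,\tilde r)}{\operatorname{vol}(v,0)}$. The sphere's boundary set $\delta(v,r)$ is constant on each interval between consecutive distances. Since $\frac{d}{dr}\operatorname{vol}(v,r)=c(\delta(v,r))$, we get $\frac{d}{dr}\ln \operatorname{vol}(v,r)>\Lambda$ almost everywhere on $(0,\tilde r)$. The volume is taken at the right endpoint $\dist(v,v_0)$, but monotonicity of $\operatorname{vol}$ lets us bound the derivative against the value at the current $r$. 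Integrating over $(0,\tilde r)$ gives $\ln\frac{\operatorname{vol}(v,\tilde r)}{\operatorname{vol}(v,0)}>\Lambda\tilde r$, which is a contradiction. Note that $\operatorname{vol}(v,0)=\epsilon\tau_X>0$, so the logarithm is well defined; this is why the seed term is there.

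\textbf{Nontriviality and weight.} Because the loop stops at some radius $r<\tilde r$, every $u\in U$ satisfies $\dist_{V'}(v,u)<\tilde r$, so $\operatorname{radius}(v,U)<\tilde r$ (distances inside $U$ may exceed those in $V'$, but a shortest path to a vertex of a ball stays in the ball). By the radius lemma applied to $S=U$, if $w(U)>\rho_0 w(G)$ we would have $\operatorname{radius}(v,U)>\tilde r$, a contradiction. Hence $w(U)\le\rho_0 w(G)<w(V')$, which also gives $U\subsetneq V'$, proving property~2 and nonemptiness. The same lemma applied to $V'$ shows that some vertex of $V'$ lies at distance greater than $\tilde r$, so the loop indeed has vertices to add until it stops.

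\textbf{Main obstacle.} The delicate step is the integration argument. The loop tests $c(U,V'\setminus U)$, the cut of the current ball, against $\operatorname{vol}$ evaluated at $\dist(v,v_0)$. One must check that along the whole interval of radii for which $U$ is this ball, $c(\delta(v,r))$ is exactly that cut and $\operatorname{vol}(v,r)\le\operatorname{vol}(v,\dist(v,v_0))$. Only then does a failure at a discrete step translate into the pointwise derivative bound. Once this bookkeeping is done, property~3 follows from the exit condition with $r=\dist(v,v_0)$ (or any $r$ in that interval), using the monotonicity of $\operatorname{vol}$.
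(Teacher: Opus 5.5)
Your proposal is correct and is essentially the paper's argument: both integrate $\frac{d}{dr}\ln\operatorname{vol}(v,r)$ over $(0,\tilde r)$, use monotonicity of $\operatorname{vol}$, and then use the radius lemma for the weight bound, from which you correctly derive $U\subsetneq V'$. The only difference is that you phrase the first step as a direct contradiction with the loop condition rather than first producing a good $r_0$ and snapping it to the next scanned radius $r_1$, which incidentally avoids the paper's unjustified claim $r_1\le\tilde r$. Two small slips do not affect the argument: the exit condition gives property~3 at $r=\dist_{d,V'}(v,v_0)$ but not at smaller $r$ in that interval (monotonicity of $\operatorname{vol}$ goes the wrong way there), and ``closest to $U$'' in the set-distance sense need not be the vertex closest to $v$, so you should simply read \textsc{Cut-Proc} as adding vertices in order of $\dist_{d,V'}(v,\cdot)$, which the paper's proof also tacitly assumes.
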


\begin{proof}
Fix the chosen terminal center $v\in V'\cap X$. Let
$v=z_0,z_1,\dots,z_{|V'|-1}$ be vertices sorted by nondecreasing
distance from $v$ in $G'$. For each $i$, denote
\[
I_i=\bigl(\dist_{V'}(v,z_i),\dist_{V'}(v,z_{i+1})\bigr).
\]
On each $I_i$, $\operatorname{vol}(v,r)$ is linear and
\[
\operatorname{vol}'(v,r)=c\bigl(\delta(v,r)\bigr)
\qquad\text{for }r\in I_i.
\]

\begin{lemma}\label{lem:cut-proc-log-derivative-point}
There exists $r_0\in (0,\tilde r]\cap\bigcup_i I_i$ such that
\[
\frac{\operatorname{vol}'(v,r_0)}{\operatorname{vol}(v,r_0)}
\le
\frac{1}{\tilde r}
\ln\left(\frac{\operatorname{vol}(v,\tilde r)}{\operatorname{vol}(v,0)}\right).
\]
\end{lemma}
\begin{proof}
We use an averaging argument for the logarithmic derivative over $[0,\tilde r]$.
Assume to the contrary that for all admissible $r\in(0,\tilde r]$,
\[
\frac{\operatorname{vol}'(v,r)}{\operatorname{vol}(v,r)}
>
\frac{1}{\tilde r}\cdot
\ln\left(\frac{\operatorname{vol}(v,\tilde r)}{\operatorname{vol}(v,0)}\right).
\]
Integrating over $[0,\tilde r]$ gives
\[
\int_0^{\tilde r}\cdot\frac{\operatorname{vol}'(v,r)}{\operatorname{vol}(v,r)}dr
>
\frac{1}{\tilde r}\cdot
\ln\left(\frac{\operatorname{vol}(v,\tilde r)}{\operatorname{vol}(v,0)}\right)
\int_0^{\tilde r}dr.
\]
However, it is easy to see that both sides equal:
\[
\ln\operatorname{vol}(v,\tilde r)-\ln\operatorname{vol}(v,0)
=
\ln\left(\frac{\operatorname{vol}(v,\tilde r)}{\operatorname{vol}(v,0)}\right),
\]
which is a contradiction.
\end{proof}

\begin{lemma}\label{lem:cut-proc-scanned-radius}
There exists a scanned radius $r\le \tilde r$ satisfying \eqref{eq:terminal-good-radius}.
\end{lemma}
\begin{proof}
We first find a good continuous radius and then move to the nearest scanned radius.
Let $r_0$ be given by Lemma~\ref{lem:cut-proc-log-derivative-point}, and define
\[
r_1=\min\{\dist_{V'}(v,u)\colon u\in V',\ \dist_{V'}(v,u)\ge r_0\}.
\]
Then $N(v,r_0)=N(v,r_1)$, hence both radii induce the same cut and
\[
c\bigl(\delta(v,r_1)\bigr)=\operatorname{vol}'(v,r_0).
\]
By monotonicity $\operatorname{vol}(v,r_0)\le \operatorname{vol}(v,r_1)$, so \eqref{eq:terminal-good-radius} holds for $r_1$. Since the procedure scans radii exhaustively, it returns some scanned radius $r\le r_1\le \tilde r$ satisfying \eqref{eq:terminal-good-radius}.
\end{proof}

\begin{lemma}\label{lem:cut-proc-weight-bound}
For the returned set $U=N(v,r)$, we have $w\br{U}\le \rho_0\cdot w\br{G}$.
\end{lemma}
\begin{proof}
We prove this by contradiction: if the returned ball were too heavy, the radius lower bound for terminal-centered sets would force a vertex farther than $\tilde r$ from $v$ inside $U$.
If $w\br{U}>\rho_0\cdot w\br{G}$, then by terminal radius guarantee applied to $S=U$ and center $v\in U\cap X$,
\[
\operatorname{radius}(v,U)>\tilde r.
\]
But $U=N(v,r)$ and $r\le \tilde r$, so every $u\in U$ satisfies $\dist_{V'}(v,u)<r\le \tilde r$, contradiction.
Hence $w\br{U}\le \rho_0\cdot w\br{G}$.
\end{proof}

By Lemma~\ref{lem:cut-proc-scanned-radius}, property (3) of the theorem holds. By Lemma~\ref{lem:cut-proc-weight-bound}, property (2) holds. Property (1) is immediate since $v\in U\cap X$.
\end{proof}

\subsection{Global algorithm and approximation factor}

Run \textsc{Cut-Proc} iteratively as in Algorithm~\ref{alg:terminal-global-procedure}.

\begin{algorithm}[H]
\caption{Global Terminal Separator via Iterated \textsc{Cut-Proc}}
\label{alg:terminal-global-procedure}
\DontPrintSemicolon
\KwIn{A weighted graph $G=(V,E,c,w)$, terminal set $X\subseteq V$, and parameters $0<\rho<\rho_0<1$}
\KwOut{A feasible $(\rho_0,X)$-separator $F\subseteq E$}
Initialize $F\gets\emptyset$\;
\While{there exists a connected component $C$ of $(V,E\setminus F)$ with $C\cap X\neq\emptyset$ and $w(C)>\rho_0\cdot w(V)$}{
    Run Algorithm~\ref{alg:terminal-cut-proc} on $G[C]$ with terminal set $X\cap C$ and obtain $U\subsetneq C$\;
    Add boundary edges of $U$ in $G[C]$ to $F$, i.e.,
    $F\gets F\cup \delta_{G[C]}(U).$
}
\Return{$F$}\;
\end{algorithm}

At termination, every terminal-containing component has weight at most $\rho_0\cdot w\br{G}$, so the produced cut is a feasible $(\rho_0,X)$-separator. 
Let the procedure be called $q$ times and let $U_1,\dots,U_q$ be returned spheres.
By construction, these sets are pairwise disjoint.

For each call, by \eqref{eq:terminal-good-radius},
\[
c\bigl(\delta(U_j)\bigr)
\le
\frac{1}{\tilde r}\cdot
\ln\left(\frac{\operatorname{vol}_j(v_j,\tilde r)}{\operatorname{vol}_j(v_j,0)}\right)\cdot
\operatorname{vol}_j(v_j,r_j),
\]
where $\operatorname{vol}_j$ is the volume function within the current processed component.
Since
\[
\operatorname{vol}_j(v_j,\tilde r)\le (1+\epsilon)\cdot\tau_X,
\qquad
\operatorname{vol}_j(v_j,0)=\epsilon\cdot\tau_X,
\]
we get
\[
c\bigl(\delta(U_j)\bigr)
\le
\frac{1}{\tilde r}\cdot\ln\left(\frac{1+\epsilon}{\epsilon}\right)\cdot
\operatorname{vol}_j(v_j,r_j).
\]

Summing over calls, edge-length contributions are at most $\tau_X$, and seed terms contribute at most $q\cdot\epsilon\cdot\tau_X\le n\epsilon\cdot\tau_X$. Hence
\[
\sum_{j=1}^q c\bigl(\delta(U_j)\bigr)
\le
\frac{1}{\tilde r}\cdot 
\ln\left(\frac{1+\epsilon}{\epsilon}\right)
\cdot \tau_X(1+\epsilon n).
\]

Choosing
\[
\epsilon=\frac{1}{n\cdot\ln n},
\]
we obtain the following bound.

\begin{theorem}
The above algorithm computes a $(\rho_0,X)$-separator of capacity at most
\[
\left(\frac{\rho_0}{\rho_0-\rho}+o(1)\right)\ln n\cdot \tau_X.
\]
\end{theorem}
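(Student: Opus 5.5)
The plan is to finish the computation from the inequalities already in place. First I show that the iterated procedure (Algorithm~\ref{alg:terminal-global-procedure}) terminates and outputs a feasible $(\rho_0,X)$-separator; then I bound the total capacity by summing the per-call guarantee \eqref{eq:terminal-good-radius}.

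\textbf{Termination and feasibility.} Each call returns a nonempty $U\subsetneq C$, so every call strictly increases the number of connected components of $(V,E\setminus F)$. Hence there are at most $n-1$ calls, so $q\le n$. The loop condition guarantees feasibility at termination.

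\textbf{The volume terms.} I will argue that the spheres $U_j=N(v_j,r_j)$ are pairwise disjoint: once $\delta(U_j)$ is added to $F$, later calls run inside components that either equal $U_j$ or avoid it. Moreover, $U_j$ has weight at most $\rho_0 w(G)$, so it is never processed again. Each LP edge term $c(e)y_e$ can then be charged to at most one volume:
\begin{itemize}
\item If $e$ lies inside $U_j$, it is charged only to $\operatorname{vol}_j$.
\item If $e$ crosses $\delta(U_j)$, it is charged only partially, with a factor $r_j-\dist(v_j,x)\le y_e$ (by the triangle inequality). After the call the edge is cut, so it does not appear in later components.
\end{itemize}
This gives $\sum_j \operatorname{vol}_j(v_j,r_j)\le \tau_X+q\epsilon\tau_X\le(1+n\epsilon)\tau_X$. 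The same charging applied to a single volume gives $\operatorname{vol}_j(v_j,\tilde r)\le(1+\epsilon)\tau_X$.

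\textbf{Combining the bounds.} Multiplying the two bounds gives
\[
c(F)\le\frac{\rho_0}{\rho_0-\rho}\,\ln\!\Big(1+\frac1\epsilon\Big)(1+n\epsilon)\,\tau_X .
\]
Substituting $\epsilon=1/(n\ln n)$:
\begin{itemize}
\item $1+n\epsilon=1+1/\ln n=1+o(1)$;
\item $\ln(1+n\ln n)=\ln n+\ln\ln n+O(1)=(1+o(1))\ln n$.
\end{itemize}
The product is therefore $\big(\frac{\rho_0}{\rho_0-\rho}+o(1)\big)\ln n\cdot\tau_X$, as claimed.

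\textbf{Main obstacle.} The delicate step is the disjoint charging of boundary edges. Volumes are measured with distances inside the current component $G[C]$, not in $G$, so I must check two things. First, for a crossing edge $(x,y)$ with $x\in U_j$, the coefficient $r_j-\dist_{N}(v_j,x)$ does not exceed $y_{xy}$; this holds because $y\notin N(v_j,r_j)$ forces $\dist(v_j,x)+y_{xy}\ge r_j$ when $y$ is reached through $x$. Second, the scanned-radius adjustment in Lemma~\ref{lem:cut-proc-scanned-radius} does not break this coefficient bound. Distances grow inside subcomponents, which only helps the radius argument, but it means each $\operatorname{vol}_j$ must be computed with the current component's metric. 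I would state the charging explicitly with that metric.
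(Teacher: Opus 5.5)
Your proposal is correct and follows the paper's proof: feasibility at termination, pairwise-disjoint returned spheres, the per-call bound with the logarithm capped by $\ln((1+\epsilon)/\epsilon)$, total volume at most $(1+n\epsilon)\tau_X$, and the choice $\epsilon=1/(n\ln n)$. Your explicit termination count ($q\le n-1$) and the triangle-inequality check that each boundary edge is charged at most $c(e)y_e$ fill in steps the paper asserts without justification.
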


\end{document}